\newtheorem{theorem}{Theorem}
\crefname{theorem}{theorem}{Theorems}
\Crefname{Theorem}{Theorem}{Theorems}
\newtheorem*{lemma_nonumber*}{Lemma}
\newaliascnt{lemma}{theorem}
\newtheorem{lemma}[lemma]{Lemma}
\crefname{lemma}{lemma}{lemmas}
\Crefname{Lemma}{Lemma}{Lemmas}
\newaliascnt{corollary}{theorem}
\crefname{corollary}{corollary}{corollaries}
\Crefname{Corollary}{Corollary}{Corollaries}
\newaliascnt{proposition}{theorem}
\newtheorem{proposition}[proposition]{Proposition}
\crefname{proposition}{proposition}{propositions}
\Crefname{Proposition}{Proposition}{Propositions}
\newaliascnt{remark}{theorem}
\crefname{remark}{remark}{remarks}
\Crefname{Remark}{Remark}{Remarks}
\newtheorem{assumptionH}{\textbf{H}\hspace{-3pt}}
\newcommand{\varphibf}{\boldsymbol{\varphi}}
\def\bareta{\bar{\eta}}
\def\tildem{\tilde{m}}
\def\Rrm{K}
\def\raymala{K}
\def\mcbb{\mathcal{B}}  
\def\msa{\mathsf{A}}
\def\msi{\mathsf{I}}
\def\rmD{\mathrm{D}}
\def\mrC{\mathrm{C}}
\def\rmC{\mrC}
\newcommandx{\invpihat}[1][1=n]{\hat{\invpi}_{#1}}
\newcommandx{\invpihattrain}[1][1=m]{\tilde{\invpi}_{#1}}
\def\RKer{R}
\def\invpi{\pi}
\def\step{\gamma}
\newcommandx{\varinf}[1][1=]{\ifthenelse{\equal{#1}{}}{\sigma^2_\infty}{\sigma^2_{\infty,#1}}}
\def\DD{\operatorname{D}}
\def\generatorA{\mathscr{E}}
\newcommandx{\sPoif}[1][1=f]{\hat{#1}}
\def\Rker{R}
\def\Qker{Q}
\def\Rkerg{R_\step}
\def\Rmalag{R_\step}
\def\Rmala{R}
\def\Qmala{R}
\def\Rula{Q}
\def\Rulag{Q_\gamma}
\def\Qula{Q}
\def\Qgam{\Qker_\step}
\def\alphamala{\tau}
\newcommandx{\genrula}[1][1=]{\ifthenelse{\equal{#1}{}}
{\generatorA_{\step}^{\scriptscriptstyle{\operatorname{ULA}}}}
{\generatorA_{#1}^{\scriptscriptstyle{\operatorname{ULA}}}}}
\newcommandx{\genrmala}[1][1=]{\ifthenelse{\equal{#1}{}}
{\generatorA_{\step}^{\scriptscriptstyle{\operatorname{MALA}}}}
{\generatorA_{#1}^{\scriptscriptstyle{\operatorname{MALA}}}}}
\newcommandx{\tgenrmala}[1][1=]{\ifthenelse{\equal{#1}{}}
{\tilde{\generatorA}_{\step}^{\scriptscriptstyle{\operatorname{MALA}}}}
{\tilde{\generatorA}_{#1}^{\scriptscriptstyle{\operatorname{MALA}}}}}
\newcommandx{\genrrwm}[1][1=]{\ifthenelse{\equal{#1}{}}
{\generatorA_{\step}^{\scriptscriptstyle{\operatorname{RWM}}}}
{\generatorA_{#1}^{\scriptscriptstyle{\operatorname{RWM}}}}}
\newcommandx{\genrbar}[1][1=]{\ifthenelse{\equal{#1}{}}
{\generatorA_{\step}^{\scriptscriptstyle{\operatorname{B}}}}
{\generatorA_{#1}^{\scriptscriptstyle{\operatorname{B}}}}}
\def\pU{U}
\def\lV{V}
\def\cFL{b}
\def\lambdaFL{\lambda}
\newcommand{\vertiii}[1]{{\left\vert\kern-0.25ex\left\vert\kern-0.25ex\left\vert #1
    \right\vert\kern-0.25ex\right\vert\kern-0.25ex\right\vert}}
\def\param{\theta}
\newcommandx{\sigS}[1][1=f]{ \hat{\sigma}^2_{N,n}(#1)}
\newcommand{\logl}[1]{%
    \IfEqCase{#1}{%
        {l}{\ell_{\operatorname{log}}}%
        {p}{\ell_{\operatorname{pro}}}%
    }[\PackageError{logl}{Undefined option to logl: #1}{}]%
}%
\newcommand{\Ub}[1]{%
    \IfEqCase{#1}{%
        {l}{\pU_{\operatorname{log}}}%
        {p}{\pU_{\operatorname{pro}}}%
    }[\PackageError{Ub}{Undefined option to Ub: #1}{}]%
}%
\newcommand{\pib}[1]{%
    \IfEqCase{#1}{%
        {l}{\invpi_{\operatorname{log}}}%
        {p}{\invpi_{\operatorname{pro}}}%
    }[\PackageError{Ub}{Undefined option to Ub: #1}{}]%
}%
\newcommandx{\rayrwm}[1][1=]{\ifthenelse{\equal{#1}{}}{K_{\step}}{K_{#1}}}
\newcommand{\ball}[2]{\mathrm{B}(#1,#2)}
\newcommand{\cball}[2]{\overline{\mathrm{B}}(#1,#2)}
\newcommandx{\flecheLimiteLoi}[1][1=\mu]{\overset{\PP_{#1}-\text{weakly}}{\underset{n\to+\infty}{\Longrightarrow}}}
\newcommandx{\flecheLimiteLoiNu}{\overset{\text{weakly}}{\underset{n\to+\infty}{\Longrightarrow}}}
\newcommandx{\flecheLimiteLoit}[1][1=x]{\overset{\PP_{#1}-\text{weakly}}{\underset{t\to+\infty}{\Longrightarrow}}}
\def\borelSet{\mathcal{B}}
\newcommandx{\functionspace}[2][1=+]{\mathsf{F}_{#1}(#2)}
\newcommandx{\VarDeux}[3][3=]{\operatorname{Var}^{#3}_{#1}\left[#2 \right]}
\newcommand{\1}{\mathbbm{1}}
\newcommand{\LeftEqNo}{\let\veqno\@@leqno}
\newcommand{\floor}[1]{\left\lfloor #1 \right\rfloor}
\newcommand{\ceil}[1]{\left\lceil #1 \right\rceil}
\newcommand{\N}{\ensuremath{\mathbb{N}}}
\newcommand{\PP}{\mathbb{P}}
\newcommand{\abs}[1]{\left\vert #1 \right\vert}
\newcommand{\tvnorm}[1]{\| #1 \|_{\mathrm{TV}}}
\newcommandx{\Vnorm}[2][1=V]{\| #2 \|_{#1}}
\newcommandx{\VnormEq}[2][1=V]{\left\| #2 \right\|_{#1}}
\newcommandx{\VnormEqs}[2][1=V]{\| #2 \|_{#1}}
\newcommandx{\estparam}[2][2=n,1=\lambda]{\widehat{\param}_{#1,#2}}
\newcommandx{\estparaml}[1][1=n]{\widetilde{\param}_{#1}}
\newcommandx{\norm}[2][1=]{\ifthenelse{\equal{#1}{}}{\Vert #2 \Vert}{\Vert #2 \Vert^{#1}}}
\newcommandx{\normEq}[2][1=]{\ifthenelse{\equal{#1}{}}{\left\Vert #2 \right\Vert}{\left\Vert #2 \right \Vert^{#1}}}
\newcommandx{\normop}[2][1=]{\Vert #2 \Vert^{#1}_{\operatorname{op}}}
\newcommandx{\normopEq}[2][1=]{\left\Vert #2 \right\Vert^{#1}_{\operatorname{op}}}
\newcommandx{\norfro}[2][1=]{\Vert #2 \Vert^{#1}_{\operatorname{F}}}
\newcommandx{\normLigne}[2][1=]{\ifthenelse{\equal{#1}{}}{\Vert #2 \Vert}{\Vert #2\Vert^{#1}}}
\newcommand{\parenthese}[1]{\left(#1 \right)}
\newcommand{\parentheseLigne}[1]{(#1 )}
\newcommand{\parentheseDeux}[1]{\left[ #1 \right]}
\newcommand{\defEns}[1]{\left\lbrace #1 \right\rbrace }
\newcommand{\defEnsLigne}[1]{\lbrace #1 \rbrace }
\newcommand{\ps}[2]{\langle#1,#2 \rangle}
\newcommand{\psEq}[2]{\left\langle#1,#2 \right\rangle}
\newcommandx\probaMarkovTilde[2][2=]
\newcommand{\bigO}{\ensuremath{\mathcal O}}
\newcommand{\plusinfty}{+\infty}
\newcounter{hypoconbis}
\newcounter{saveconbis}
\newcommand\debutH{\begin{list}
{\textbf{H\arabic{hypoconbis}}}{\usecounter{hypoconbis}}\setcounter{hypoconbis}{\value{saveconbis}}}
\newcommand\finH{\end{list}\setcounter{saveconbis}{\value{hypoconbis}}}
\def\ie{\textit{i.e.}}
\def\eqsp{\;}
\newcommand{\coint}[1]{\left[#1\right)}
\newcommand{\ocint}[1]{\left(#1\right]}
\newcommand{\ooint}[1]{\left(#1\right)}
\newcommand{\ccint}[1]{\left[#1\right]}
\newcommand{\ocintLigne}[1]{(#1]}
\newcommandx{\weight}[2][2=n]{\omega_{#1,#2}^N}
\def\rmd{\mathrm{d}}
\newcommandx\sequenceg[3][2=,3=]
\newcommandx\sequence[3][2=,3=]
\newcommandx\sequenceW[3][2=,3=]
\newcommandx\sequenceD[3][2=,3=]
\newcommandx{\sequencen}[2][2=n\in\nset]{\ensuremath{\{ #1_n \, :\, \eqsp #2 \}}}
\newcommandx{\sequencens}[2][2=n\in\nsets]{\ensuremath{\{ #1_n \, :\, \eqsp #2 \}}}
\newcommandx{\sequenceWn}[2][2=n\in\nset]{\ensuremath{\{ #1 \, :\, \eqsp #2 \}}}
\newcommandx{\sequenceWns}[2][2=n\in\nsets]{\ensuremath{\{ #1 \, :\, \eqsp #2 \}}}
\newcommandx{\sequencek}[2][2=k\in\nset]{\ensuremath{\{ #1_k \, :\, \eqsp #2 \}}}
\newcommandx{\sequenceWk}[2][2=k\in\nset]{\ensuremath{\{ #1 \, :\, \eqsp #2 \}}}
\newcommandx{\sequenceks}[2][2=k\in\nsets]{\ensuremath{\{ #1_k \, :\, \eqsp #2 \}}}
\newcommandx{\sequenceWks}[2][2=k\in\nsets]{\ensuremath{\{ #1 \, :\, \eqsp #2 \}}}
\newcommandx\sequenceDouble[4][3=,4=]
\newcommandx\sequenceDoubleW[4][3=,4=]
\newcommandx{\sequencenDouble}[3][3=n\in\N]{\ensuremath{\{ (#1_{n},#2_{n}) \, :\, \eqsp #3 \}}}
\newcommand{\wrt}{w.r.t.}
\def\iid{i.i.d.}
\def\rme{\mathrm{e}}
\def\eg{e.g.}
\def\rset{\mathbb{R}}
\def\nset{\mathbb{N}}
\def\nsets{\mathbb{N}^*}
\def\tildem{\varpi}
\newcommandx{\CPE}[3][1=]{{\mathbb E}^{#3}_{#1}\left[#2 \right]} 
\newcommand{\CPP}[3][]
{\ifthenelse{\equal{#1}{}}{{\mathbb P}\left(\left. #2 \, \right| #3 \right)}{{\mathbb P}_{#1}\left(\left. #2 \, \right | #3 \right)}}
\newcommandx{\osc}[2][1=]{\mathrm{osc}_{#1}(#2)}
\newcommand{\chunk}[4][]%
{\ifthenelse{\equal{#1}{}}{\ensuremath{{#2}_{#3:#4}}}{\ensuremath{#2^#1}_{#3:#4}}
}
\def\bfX{\mathbf{X}}
\def\param{\theta}
\def\Phibf{\mbox{\protect\boldmath$\Phi$}}
\newcounter{rmnum}
\def\bgamma{\bar{\gamma}}
\newcommand{\ensemble}[2]{\left\{#1\,:\eqsp #2\right\}}
\newcommandx{\hControlFuncOpt}[1][1=n]{g_{#1}^{\star}}
\newcommandx{\hControlFuncOptLambda}[1][1={n,\lambda}]{g_{#1}^{\star}}
\newcommandx{\ControlFuncSet}[1][1=]{\mathcal{G}_{#1}}
\newcommandx{\ControlFuncSetH}[1][1=]{\mathcal{H}_{#1}}
\newcommandx{\pen}[1][1=n]{\operatorname{pen}_{#1}}
\newcommandx{\EmpRisk}[1][1=n]{\operatorname{R}_{#1}}
\newcommandx{\PVar}[1][1=]{\ensuremath{\operatorname{Var}_{#1}}}
\newcommandx{\PCov}[1][1=]{\ensuremath{\operatorname{Cov}_{#1}}}
\newcommandx{\dlim}[1]{\ensuremath{\stackrel{#1}{\Longrightarrow}}}
\newcommandx{\MSEd}[3][1={x,\step},3=n]{\operatorname{MSE}^{#3}_{#1}(#2)}
\def\bfB{\mathbf{B}}
\def\YrULA{Y}
\def\YrMALA{X}
\def\gaStep{\gamma}
\def\ZrGaussian{Z}
\def\Psemigroup{\mathbf{P}}
\def\rmala{r}
\def\alphaMALA{\alpha_{\gamma}}
\def\VlyapUnMALA{W}
\def\Ltt{\mathtt{L}}
\def\Lttbeta{\mathtt{L}_{\beta}}
\def\mtt{\mathtt{m}}
\def\mttbeta{\mathtt{m}_{\beta}}
\def\Mtt{\mathtt{M}}
\def\tCtt{\tilde{\mathtt{C}}}
\def\tCttbeta{\tilde{\mathtt{C}}_{\beta}}
\def\Veta{V}
\def\ART{\msa^{\mathrm{RT}}}
\def\IRT{\msi^{\mathrm{RT}}}
\def\Ktt{\mathtt{K}}
\def\Kttbeta{\mathtt{K}_{\beta}}
\def\tKtt{\tilde{\mathtt{K}}}
\def\tKttbeta{\tilde{\mathtt{K}}_{\beta}}
\def\bKttbeta{\bar{\mathtt{K}}_{\beta}}
\def\Wbeta{W}
\def\baretabeta{\bar{\eta}_{\beta}}
\def\etabeta{\eta_{\beta}}
\newcommand{\txts}[1]{\textstyle #1}
\def\bdriftula{b_{\bareta,\bgamma}^{\mathrm{U}}}
\def\tbdriftula{\tilde{b}_{\bgamma}^{\mathrm{U}}}
\def\bdriftmala{b_{\bareta,\bgamma}^{\mathrm{M}}}
\def\barbdriftmala{\bar{b}_{\bareta,\bgamma}^{\mathrm{M}}}
\def\tbdriftulatGamma{\tilde{b}_{\tGamma_{1/2}}^{\mathrm{U}}}
\def\barbdriftmalabeta{\bar{b}_{\bareta,\bgamma}^{(\beta)}}
\def\ray{K}
\def\rayula{\ray^{\mathrm{U}}}
\def\raymaladrift{\ray^{\mathrm{M}}}
\def\raymaladriftD{\ray}
\def\bGamma{\bar{\Gamma}}
\def\bdriftmalabeta{b^{(\beta)}_{\bareta,\bgamma}}
\def\raymaladriftbeta{\tilde{\ray}_{\beta}}
\def\tildembeta{\tildem^{(\beta)}}
\def\rayulabeta{\ray_{\beta}}
\def\bdriftulabeta{b_{\beta}}
\def\raymaladriftbeta{\tilde{\ray}_{\beta}}
\def\raymaladriftbetaD{\tilde{\ray}_{\beta}}
\def\varepsilonula{\varepsilon}
\def\tGamma{\tilde{\Gamma}}
\def\hatGamma{\hat{\Gamma}}
\def\barA{\bar{A}}
\def\tUpsilon{\tilde{\Upsilon}}
\def\measSet{\mathsf{M}}
\def\Vly{V}
\title{On the geometric convergence for MALA under verifiable conditions}
\author[1]{Alain Durmus}
\author[2]{\'Eric Moulines}
\affil[1]{\small{Université Paris-Saclay, ENS Paris-Saclay, CNRS, Centre Borelli, F-91190 Gif-sur-Yvette, France.}}
\affil[2]{\small{CMAP - \'Ecole polytechnique, France}}
\begin{document}
\footnotetext[1]{Email: alain.durmus@ens-paris-saclay.fr}
\footnotetext[2]{Email: eric.moulines@polytechnique.edu}

\maketitle

\begin{abstract}
  While the Metropolis Adjusted Langevin Algorithm (MALA) is a popular and widely used Markov chain Monte Carlo method, very few papers derive conditions
  that ensure its convergence. In particular, to the authors' knowledge, assumptions that are both easy to verify and guarantee geometric convergence, are still missing. In this work, we establish $V$-uniformly geometric convergence for MALA under mild assumptions about the target distribution. Unlike previous work, we only consider tail and smoothness conditions for the potential associated with the target distribution. These conditions are quite common in the MCMC literature and are easy to verify in practice. Finally, we pay special attention to the dependence of the bounds we derive on the step size of the Euler-Maruyama discretization, which corresponds to the proposal  Markov kernel of  MALA.

\end{abstract}


\section{Introduction}
This paper deals with the convergence of the Metropolis Adjusted Langevin Algorithm (MALA) for sampling from a positive target probability density $\pi$ on $(\rset^d,\mcbb(\rset^d))$, where $\mcbb(\rset^d)$ is the Borel $\sigma$-field of $\rset^d$ endowed with the Euclidean topology.
For simplicity, we also denote by $\pi$, the distribution corresponding to the density $\pi$, and let $U = -\log \pi$ be the associated potential function.
MALA is a Markov Chain Monte Carlo (MCMC) method based on the Langevin diffusion associated with $\pi$:
\begin{equation}
  \label{eq:def_langevin}
  \rmd \bfX_t =  -\nabla U(\bfX_t) \rmd t + \sqrt{2}  \rmd \bfB_t \eqsp,
\end{equation}
where $\sequenceg{\bfB}[t][0]$ is a $d$-dimensional Brownian motion. It is known that under mild conditions this diffusion admits a strong solution $\sequenceg{\bfX^{(x)}}[t][0]$ for any starting point $x\in \rset^d$ and defines a Markov semigroup $\sequenceg{\Psemigroup}[t][0]$ for any $t \geq 0$, $x \in \rset^d$ and $\msa \in \mcbb(\rset^d)$ by $\Psemigroup_t(x,\msa) = \PP (\bfX_t^{(x)} \in \msa)$. Moreover, this Markov semigroup admits $\pi$ as its unique stationary measure, is ergodic and even $V$-uniformly geometrically ergodic with additional assumptions on $U$ (see \cite{roberts:tweedie-Langevin:1996,Mattingly2002185}). However, sampling a path solution of \eqref{eq:def_langevin} is a real challenge in most cases and discretizations are used instead to obtain a Markov chain with similar long-time behaviour. Here we consider the
Euler-Maruyama discretization, associated  with
\eqref{eq:def_langevin}, defined for all $k \geq 0$ by
\begin{equation}
  \label{eq:def_ula}
  \YrULA_{k+1} = \YrULA_k - \gaStep \nabla U(\YrULA_k) + \sqrt{2\gaStep} \ZrGaussian_{k+1} \eqsp,
\end{equation}
where $\gaStep$ is the step size of the discretization and
$\sequence{\ZrGaussian}[k][\nsets]$ is an \iid~sequence of $d$-dimensional
standard Gaussian random variables. This algorithm
has been suggested  \cite{ermak:1975,parisi:1981} and later studied by \cite{grenander:1983,grenander:miller:1994,neal:1992,roberts:tweedie-Langevin:1996}. Following
\cite{roberts:tweedie-Langevin:1996}, this algorithm is called the
Unadjusted Langevin Algorithm (ULA). A drawback of this method is that even if the Markov chain $\sequence{\YrULA}[k][\nset]$ has a unique stationary distribution $\pi_{\gaStep}$ and is ergodic (which is guaranteed under mild assumptions about $U$), $\pi_{\gaStep}$ is different from $\pi$ most of the time.
To circumvent this problem, it was proposed in 
\cite{rossky:doll:friedman:1978,roberts:tweedie-Langevin:1996}  to use the Markov kernel associated with the recursion defined by the Euler-Maruyama discretization \eqref{eq:def_ula} as a proposal kernel in a Metropolis-Hastings algorithm defining a new Markov chain $\sequencek{\YrMALA}$ by:
\begin{equation}
  \label{eq:def_MALA}
  \YrMALA_{k+1}=   \YrMALA_{k} + \1_{\rset_+}(U_{k+1}-\alpha_{\gamma}( \YrMALA_{k}, \tilde{\YrULA}_{k+1}))\{\tilde{\YrULA}_{k+1}-\YrMALA_k\} \eqsp,
\end{equation}
where $\tilde{\YrULA}_{k+1} = \YrMALA_k-\gamma \nabla U(\YrMALA_k) + \sqrt{2\gamma} \ZrGaussian_{k+1}$, $\sequenceks{U}$ is a sequence of \iid~uniform random variables on $\ccint{0,1}$ and $\alpha_{\gamma} : \rset^{2d} \to \ccint{0,1}$ is the usual Metropolis acceptance ratio defined in \eqref{eq:def-alpha-MALA_0}.
This algorithm is called Metropolis Adjusted Langevin Algorithm (MALA) and has since been used in many applications. It may be surprising that a complete theoretical understanding of its properties is still lacking. While some interesting results have been derived in \cite{roberts:tweedie-Langevin:1996,bourabee:hairer:2013}, they are only partially satisfactory. More precisely, \cite{roberts:tweedie-Langevin:1996} does not give practical conditions for either the step size or the target distribution that guarantee geometric convergence for MALA. On the other hand, \cite{bourabee:hairer:2013,pmlr-v75-dwivedi18a,chewi2021optimal} provides non-quantitative or quantitative bounds between the iterates of the Markov kernel associated with MALA and $\pi$, but they do not establish convergence. In this paper, we address this problem and give practical conditions to ensure that MALA $V$-uniform is geometrically ergodic. Part of our results were already included in the pre-publication \cite{brosse2019diffusionv2}. Since they are not the main focus of this work and have attracted independent interest, we have decided to extract them from the forthcoming revision of \cite{brosse2019diffusionv2} and extend them. 

The paper is organized as follows. First we state the assumptions on the potential $U$ and our main results. In \Cref{sec:comp-with-exist} we provide a detailed comparison of our paper with the existing literature. Finally, in \Cref{subsec:geom-ergodicity-ula} we summarize the proofs of our results. Finally, we also consider a generalization of our results in \Cref{sec:extens-cref}. However, the resulting proofs are more complicated and, in our opinion, would hinder the flow of the paper. Therefore, we have moved them to another section.


\subsection*{Notation and convention}
Denote by $\mathcal{B}(\rset^d)$ the Borel $\sigma$-field of $\rset^d$ and by $\functionspace[]{\rset^d}$ the set of all Borel measurable functions on $\rset^d$ and for $f \in \functionspace[]{\rset^d}$, $\Vnorm[\infty]{f}= \sup_{x \in \rset^d} \abs{f(x)}$. Denote by $\measSet(\rset^d)$ the space of finite signed measure on $(\rset^d, \mathcal{B}(\rset^d))$ and $\measSet_0(\rset^d) =  \{\mu \in \measSet(\rset^d)\ | \ \mu(\rset^d) = 0\}$. For $\mu \in \measSet(\rset^d)$  and $f \in \functionspace[]{\rset^d}$ a $\mu$-integrable function, denote by $\mu(f)$ the integral of $f$ \wrt~$\mu$. Let $\Vly: \rset^d \to \coint{1,\infty}$ be a measurable function. For $f \in \functionspace[]{\rset^d}$, the $\Vly$-norm of $f$ is given by $\Vnorm[\Vly]{f}= \sup_{x \in \rset^d} |f(x)|/\Vly(x)$. For $\mu \in \measSet(\rset^d)$, the $\Vly$-total variation distance of $\mu$ is defined as
\begin{equation}
\Vnorm[\Vly]{\mu} = \sup_{f \in \functionspace[]{\rset^d}, \Vnorm[\Vly]{f} \leq 1}  \abs{\int_{\rset^d } f(x) \rmd \mu (x)} \eqsp
\end{equation}
If $\Vly \equiv 1$, then $\Vnorm[\Vly]{\cdot}$ is the total variation  denoted by $\tvnorm{\cdot}$.


\section{Main results}
\label{sec:geom-ergodicity-mala}

Denote by $\rmala_{\gamma}$ the proposal transition density associated  to the Euler-Maruyama discretization \eqref{eq:def_ula} with stepsize $\gamma >0$, \ie, for any $x,y \in \rset^d$,
\begin{equation}
  \label{eq:def_r_gamma_ULA}
  \rmala_{\gamma}(x,y) = (4\uppi \gamma)^{-d/2} \exp \parenthese{-(4 \gamma)^{-1}\norm[2]{y-x+ \gamma \nabla U(x)}} \eqsp.
\end{equation}
Then,  the Markov kernel $\Rmala_{\gamma}$ of the MALA algorithm \eqref{eq:def_MALA} is given for $\gaStep>0$, $x\in\rset^d$, and $\msa \in\borelSet(\rset^d)$ by
\begin{align}
\label{eq:def-kernel-MALA}
  \Rmala_{\gamma}(x ,\msa)& = \int_{\rset^d}  \1_{\msa}(y) \alphaMALA(x,y)\rmala_{\gamma}(x,y) \rmd y + \updelta_{x}(\msa) \int_{\rset^d} \{1 - \alphaMALA(x,y)\}\rmala_{\gamma}(x,y) \rmd y  \eqsp,\\
  \label{eq:def-alpha-MALA_0}
 \alphaMALA(x,y) &= 1\wedge\parentheseDeux{\frac{\pi(y)r_{\gamma}(y,x)}{\pi(x)r_{\gamma}(x,y)}} \eqsp.
\end{align}

It is well-known, see \eg~\cite{roberts:tweedie-Langevin:1996}, that
for any $\gamma >0$, $\Rmala_{\gamma}$ is reversible with respect to
$\pi$ and $\pi$-irreducible.

We establish that MALA is $V$-uniformly
geometrically ergodic, under the following assumptions on the
potential $U$.
\begin{assumptionH}
  \label{ass:regularity-U}
The function $U : \rset^d \to \rset$ is twice continuously differentiable. In addition, $\nabla U(0) = 0$ and there exists $\Ltt \geq 0$ such that $\sup_{x \in \rset^d}\norm{\rmD^2 U(x)} \leq \Ltt$.
\end{assumptionH}

The condition $\nabla U(0) = 0$ is satisfied (up to a translation) as soon as $U$ has a local minimum, which is the case  when $\lim_{\norm{x} \to \plusinfty} U(x) = \plusinfty$, since $U$ is continuous. It could be relaxed but at the cost of more complicated computations that would hinder the derivation of our proofs.

The other condition in \Cref{ass:regularity-U} is standard in the analysis of ULA. In particular, it implies that $\nabla U$ is Lipschitz, which is a necessary condition to ensure that this scheme is stable; see \eg~\cite{roberts:tweedie-Langevin:1996}.
Finally, note that \Cref{ass:regularity-U} implies that for $x\in\rset^d$, $\norm{\nabla U(x)} \leq \Ltt \norm{x}$.

\begin{assumptionH}
  \label{ass:regularity-U_C3}
The function $U : \rset^d \to \rset$ is three times continuously differentiable. In addition, there exists $\Mtt \geq 0$ such that  $\sup_{x \in \rset^d}\norm{\rmD^3 U(x)} \leq \Mtt$.
\end{assumptionH}

We consider the two conditions \Cref{ass:regularity-U} and \Cref{ass:regularity-U_C3} separately. In fact, we derive a non-quantitative convergence result under \Cref{ass:regularity-U} only, while \Cref{ass:regularity-U_C3} allows us to obtain quantitative convergence bounds. While \Cref{ass:regularity-U} and \Cref{ass:regularity-U_C3} impose regularity constraints on the potential $U$, we now consider tail conditions.

\begin{assumptionH}
  \label{ass:curvature_U}
  There exist $\mtt >0$ and $\Ktt \geq 0$ such that for any $x,y \in \rset^d$, $\norm{x} \geq \Ktt$ and $\norm{y} = 1$,
  \begin{equation}
    \label{eq:7}
\rmD^2U(x) \{y\}^{\otimes 2}  \geq  \mtt  \eqsp.
  \end{equation}
\end{assumptionH}
Note that under \Cref{ass:regularity-U} and \Cref{ass:curvature_U}, for any $x,y \in \rset^d$, $\norm{y} = 1$, it holds that
  \begin{equation}
    \label{eq:7_2}
\rmD^2U(x) \{y\}^{\otimes 2}  \geq  \mtt -(\mtt + \Ltt) \1_{\ball{0}{\Ktt}}(x) \eqsp.
  \end{equation}
In the case $\Ktt = 0$, \Cref{ass:curvature_U} boils down requiring that $U$ is strongly convex with convexity constant equals to  $\mtt$. However, when $\Ktt >0$, \Cref{ass:curvature_U} is a slight strengthening of the strong convexity at infinity condition considered in \cite{chen1997estimation,eberle:2015}: there exist $\mtt' >0$ and $\Ktt' \geq 0$ such that for any $x,y \in\rset^d$, $\norm{x-y} \geq \Ktt'$
\begin{equation}
  \label{eq:strong_convex_infi}
\text{    $\ps{\nabla U(x) - \nabla U(y)}{x-y} \geq \mtt' \norm{x-y}^2$  } \eqsp.
  \end{equation}
  Indeed, if \eqref{eq:strong_convex_infi} holds for any $x,y\in\rset^d$ satisfying $\norm{x} \vee \norm{y} \geq \Ktt'$ in place of $\norm{x-y} \geq \Ktt'$, then an easy computation implies that \Cref{ass:curvature_U} holds with $\mtt \leftarrow \mtt'$ and $\Ktt \leftarrow \Ktt'+1$. Besides, \Cref{lem:equiv_str_conv_inf_hessian} shows that the converse is true. 
Finally, while the condition \eqref{eq:strong_convex_infi} for $x,y \in\rset^d$, $\norm{x-y} \geq \Ktt'$, is weaker than  \Cref{ass:curvature_U},   it can be more convenient in many situations to verify that the latter holds.
Note that under \Cref{ass:regularity-U} and \Cref{ass:curvature_U}, $\mtt \leq \Ltt$. In addition, we show in \Cref{lem:quadratic_behaviour} that under these two conditions,
there exists $\tKtt \geq 0$ such that for any $x \not \in\ball{0}{\tKtt}$, $\ps{\nabla U(x)}{x} \geq (\mtt/2) \norm[2]{x}$. Therefore,
 for any $x \not \in \ball{0}{\tKtt}$, we have by the Cauchy-Schwarz inequality that $\norm{\nabla U(x)} \geq \mtt \norm{x}$.

We can also consider the following generalization for $\beta \in \coint{0,1}$,
\begin{assumptionH}[$\beta$]
  \label{ass:curvature_U_alpha}
  There exist $\mttbeta >0$ and $\Lttbeta,\Kttbeta \geq 0$ such that for any $x,y \in \rset^d$, $\norm{x} \geq \Kttbeta$ and $\norm{y} = 1$,
    \begin{equation}
    \label{eq:7_alpha}
\Lttbeta/(1+\norm{x}^{3\beta/4}) \geq     \rmD^2U(x) \{y^{\otimes 2}\} \geq \mttbeta/(1+\norm{x}^{\beta})  \eqsp.
\end{equation}
\end{assumptionH}
Note that under \Cref{ass:regularity-U} and \Cref{ass:curvature_U_alpha}$(\beta)$ it holds that $\mttbeta \leq \Lttbeta$ and
for any $x,y \in \rset^d$,  $\norm{y} = 1$,
  \begin{equation}
    \label{eq:7_alpha_2}
    \begin{aligned}
    \rmD^2U(x) \{y^{\otimes 2}\}& \geq \mttbeta/(1+\norm{x}^{\beta})  - (\mttbeta/(1+\norm{x}^{\beta}) + \Ltt)  \1_{\ball{0}{\Kttbeta}}(x) \eqsp,\\
\rmD^2U(x) \{y^{\otimes 2}\}& \leq \Lttbeta/(1+\norm{x}^{3\beta/4})  + (-\Lttbeta/(1+\norm{x}^{3\beta/4}) + \Ltt)  \1_{\ball{0}{\Kttbeta}}(x)
\eqsp.
\end{aligned}
\end{equation}

We show in \Cref{lem:quadratic_behaviour_alpha} that under  \Cref{ass:regularity-U} and \Cref{ass:curvature_U_alpha}$(\beta)$,
there exists $\tKttbeta \geq 0$ such that for any $x \not \in\ball{0}{\tKttbeta}$, $\ps{\nabla U(x)}{x} \geq (\mttbeta/2) \norm[2]{x}/(1+\norm[\beta]{x})$. Therefore,
 for any $x \not \in \ball{0}{\tKttbeta}$, we have by the Cauchy-Schwarz inequality that $\norm{\nabla U(x)} \geq \mttbeta \norm{x}/(1+\norm[\beta]{x})$. Finally, \Cref{lem:upper_bound_behaviour_alpha} shows that under the same conditions, there exists $\bKttbeta \geq 0$ such that $\norm{\nabla U(x)} \leq 2\Lttbeta\norm{x}/(1+\norm[3\beta/4]{x})$ for $x \in\rset^d$, $\norm{x} \geq \bKttbeta$.

We first present non-quantitative $V$-uniformly geometric ergodicity results under \Cref{ass:regularity-U}, \Cref{ass:curvature_U}; then we present a quantitative convergence statement under the additional assumption \Cref{ass:regularity-U_C3}. We postpone the statement of the results under \Cref{ass:curvature_U_alpha} (instead of \Cref{ass:curvature_U}) to \Cref{sec:extens-cref}.

Define for any $\eta >0$, $V_{\eta}: \rset^d \to \coint{1,\plusinfty}$ for any $x \in \rset^d$ by
\begin{equation}
  \label{eq:def_V_eta}
  \Veta_{\eta}(x) = \exp(\eta \norm[2]{x}) \eqsp.
\end{equation}

\begin{theorem}
  \label{theo:V-geom_ergo_MALA_just_C2}
  Assume \Cref{ass:regularity-U} and \Cref{ass:curvature_U}. Then, there exists $\Gamma >0$ (defined in \eqref{eq:def_Gamma}) such that for any $\gamma \in \ocint{0,\Gamma}$, there exist $C_{\gamma} \geq 0$ and $\rho_{\gamma}\in\coint{0,1}$ such that for any $x \in \rset^d$,
  \begin{equation}
    \label{eq:5}
    \Vnorm[\Veta_{\bareta}]{\updelta_x \Rmala_{\gamma} - \pi} \leq C_{\gamma} \rho_{\gamma}^{k} \Veta_{\bareta}(x) \eqsp,
  \end{equation}
  where $\bareta = \mtt/16$.
\end{theorem}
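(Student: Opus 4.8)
The standard route to $V$-uniform geometric ergodicity is to establish (i) a Foster--Lyapunov drift condition $\Rmala_{\gamma} \Veta_{\bareta} \leq \lambda_{\gamma} \Veta_{\bareta} + b_{\gamma} \1_{\compact}$ for some small set $\compact$, $\lambda_{\gamma} \in (0,1)$, $b_{\gamma} < \infty$, and (ii) a minorization condition on $\compact$, i.e. $\Rmala_{\gamma}(x,\cdot) \geq \varepsilon_{\gamma} \nu_{\gamma}(\cdot)$ for all $x \in \compact$; the conclusion \eqref{eq:5} then follows from the quantitative Harris theorem (e.g. the Meyn--Tweedie / Hairer--Mattingly machinery). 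The paper has already stacked the deck in favour of (i): under \Cref{ass:regularity-U} and \Cref{ass:curvature_U} we know from \Cref{lem:quadratic_behaviour} that $\ps{\nabla U(x)}{x} \geq (\mtt/2)\norm[2]{x}$ and $\norm{\nabla U(x)} \geq \mtt\norm{x}$ outside a ball, and from \Cref{ass:regularity-U} that $\norm{\nabla U(x)} \leq \Ltt \norm{x}$ everywhere.

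The main work is the drift condition, and the subtlety particular to MALA (as opposed to ULA) is the accept/reject step. I would split $\Rmala_{\gamma} \Veta_{\bareta}(x)$ according to the proposal being accepted or rejected: $\Rmala_{\gamma}\Veta_{\bareta}(x) = \int \alphaMALA(x,y)\Veta_{\bareta}(y)\rmala_{\gamma}(x,y)\rmd y + \Veta_{\bareta}(x)\int(1-\alphaMALA(x,y))\rmala_{\gamma}(x,y)\rmd y$. For the ULA proposal alone, a Gaussian computation gives $\int \Veta_{\bareta}(y)\rmala_{\gamma}(x,y)\rmd y = (1-4\bareta\gamma)^{-d/2}\exp(\bareta\norm[2]{x-\gamma\nabla U(x)}/(1-4\bareta\gamma))$ when $4\bareta\gamma<1$, and expanding $\norm[2]{x-\gamma\nabla U(x)} = \norm[2]{x} - 2\gamma\ps{\nabla U(x)}{x} + \gamma^2\norm[2]{\nabla U(x)}$ together with the curvature lower bound $\ps{\nabla U(x)}{x}\geq(\mtt/2)\norm[2]{x}$ and the upper bound $\norm[2]{\nabla U(x)}\leq \Ltt^2\norm[2]{x}$ shows the exponent is at most $\bareta\norm[2]{x}(1 - \mtt\gamma + O(\gamma^2))/(1-4\bareta\gamma)$; with $\bareta = \mtt/16$ one checks that for $\gamma$ small enough the whole thing is $\leq \lambda_{\gamma}\Veta_{\bareta}(x)$ with $\lambda_{\gamma} = \exp(-c\mtt\gamma)$-ish outside a ball, absorbing the constant $(1-4\bareta\gamma)^{-d/2}$ and the ball contribution into $b_{\gamma}\1_{\compact}$. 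The point where acceptance must be controlled is that near infinity the proposal already has good drift, and since $0 \le \alphaMALA \le 1$ the accepted part is bounded by the pure-ULA integral while the rejected part keeps $\Veta_{\bareta}(x)$ with weight $\int(1-\alphaMALA)\rmala_{\gamma}$; one needs this rejection probability to be strictly less than $1$ (uniformly, or at least with a controlled rate) in the region where we rely on the proposal's drift. This is exactly where \Cref{ass:regularity-U} (global Lipschitzness of $\nabla U$, hence a two-sided control of $U(y)-U(x)$ and of the Gaussian ratio $r_{\gamma}(y,x)/r_{\gamma}(x,y)$ along the dominant direction $y \approx x - \gamma\nabla U(x)$) is used to show $\int \alphaMALA(x,y)\rmala_{\gamma}(x,y)\rmd y$ is bounded below by a positive constant for $\gamma \le \Gamma$; the convexity-at-infinity assumption makes $U(y)-U(x)$ typically negative along proposed moves toward the origin, which is what keeps acceptance from degenerating.

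For the minorization on a level set $\compact = \{\Veta_{\bareta}\leq R\} = \cball{0}{r}$: $\rmala_{\gamma}(x,y)$ is a Gaussian density with mean $x-\gamma\nabla U(x)$ (which stays in a bounded set for $x$ bounded, by $\norm{\nabla U(x)}\leq\Ltt\norm{x}$) and fixed covariance $2\gamma\Id$, so it is bounded below by a positive constant times a fixed Gaussian on any bounded set of $y$; combined with a uniform lower bound on $\alphaMALA(x,y)$ for $x,y$ in a bounded set (again from \Cref{ass:regularity-U} controlling the ratio appearing in $\alphaMALA$), this gives $\Rmala_{\gamma}(x,\cdot)\geq\varepsilon_{\gamma}\nu_{\gamma}$ with $\nu_{\gamma}$ the normalized restriction of a Gaussian to a ball. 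Then I would invoke the standard theorem (drift $+$ minorization $\Rightarrow$ $V$-uniform geometric ergodicity with explicit, though here unquantified, $C_{\gamma},\rho_{\gamma}$) to conclude \eqref{eq:5}; the definition of $\Gamma$ in \eqref{eq:def_Gamma} is whatever upper bound on $\gamma$ makes all the above (in particular $4\bareta\gamma<1$, the $O(\gamma^2)$ absorption, and the acceptance lower bound) valid simultaneously.

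The main obstacle I anticipate is the lower bound on the acceptance probability $\int\alphaMALA(x,y)\rmala_{\gamma}(x,y)\rmd y$ for \emph{unbounded} $x$ with only a $C^2$ (not $C^3$) potential: without the third-derivative bound one cannot Taylor-expand the log-acceptance ratio to show it is $O(\gamma^{3/2})$ as in the quantitative analysis, so one must argue more crudely, e.g. restrict the $y$-integral to an event $\{\norm{y - (x-\gamma\nabla U(x))} \leq c\sqrt{\gamma}\}$ of probability bounded below, and on that event use the Lipschitz bound on $\nabla U$ to show $U(y)-U(x) + \{$Gaussian correction$\}$ is bounded above by a constant independent of $x$ (using that the move is roughly toward the origin when $\norm x$ is large, where curvature helps), hence $\alphaMALA(x,y)$ bounded below. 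Making this uniform in $x$ while keeping $\gamma \leq \Gamma$ with $\Gamma$ depending only on $\Ltt, \mtt, \Ktt, d$ is the delicate part; everything else is a Gaussian moment computation plus the invocation of a black-box ergodicity theorem.
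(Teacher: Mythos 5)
Your architecture coincides with the paper's: a Foster--Lyapunov drift for $\Veta_{\bareta}$ obtained by splitting $\Rmala_{\gamma}\Veta_{\bareta}$ into the accepted part (dominated by the pure proposal integral) and the rejected part $\Veta_{\bareta}(x)\cdot p_{\mathrm{rej}}(x)$, the exact Gaussian identity for $\int \Veta_{\bareta}(y)\rmala_{\gamma}(x,y)\rmd y$ combined with $\ps{\nabla U(x)}{x}\geq(\mtt/2)\norm[2]{x}$ (this is \Cref{propo:super_lyap_ula} almost verbatim), a one-step minorization on compact sets, and a black-box drift-plus-small-set theorem. You are also right that for this non-quantitative statement it suffices to show $p_{\mathrm{rej}}(x)$ is bounded away from $1$ at infinity rather than $O(\gamma)$.

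The genuine gap is in the step you yourself flag as delicate, and the fix you sketch would not close. Restricting to $\norm{z}=O(1)$ and using only the Lipschitz bound $\norm{\nabla U(x)-\nabla U(y)}\leq\Ltt\norm{x-y}$ does \emph{not} make the log-rejection ratio bounded uniformly in $x$: after the exact cancellations between $U(y)-U(x)$ and the Gaussian correction (the decomposition \eqref{lem:durmus_moulines_saksman}), the surviving terms include $\gamma^{3/2}A_{3,\gamma}$ and $\gamma^{3}A_{6,\gamma}$, which under \Cref{ass:regularity-U} alone are only controlled by $\gamma^{3/2}\Ltt^2\norm{x}\norm{z}$ and $\gamma^{3}\Ltt^4\norm[2]{x}$ --- unbounded in $x$ for fixed $\gamma$. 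What saves the argument is not Lipschitzness but the curvature assumption \Cref{ass:curvature_U}: the component $-\gamma^2\int_0^1 \DD^2U(x_t)[\nabla U(x)^{\otimes 2}]\,t\,\rmd t$ of $\gamma^2 A_{4,\gamma}$ is bounded above by $-\gamma^2(\mtt/2)^3\norm[2]{x}$ (using \Cref{lem:quadratic_behaviour} and \Cref{lem:bounde_pertubhessian} to keep the segment $x_t$ outside $\ball{0}{\Ktt}$), and a Young inequality then absorbs the $\gamma^{3/2}\norm{x}\norm{z}$ and $\gamma^{5/2}\norm{x}\norm{z}$ cross terms into this negative quadratic, leaving $\alphamala_{\gamma}(x,z)\leq C\gamma\norm[2]{z}$ for $\norm{x}$ large and $\norm{z}\leq\norm{x}/(4\sqrt{2\gamma})$. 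This is exactly \Cref{lem:bound_alpha_mala_2}, and it is the one piece of analysis specific to MALA (as opposed to ULA) in the whole proof; you name the right mechanism in passing (``curvature helps''), but without carrying out this cancellation-plus-absorption the drift condition at infinity is not established.
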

\begin{proof}
  The proof  is postponed to \Cref{sec:proof:theo:V-geom_ergo_MALA_just_C2}.
\end{proof}

The constants $C_{\gamma},\rho_{\gamma}$ appearing in \Cref{theo:V-geom_ergo_MALA_just_C2} are non quantitative. Indeed, the proof of \Cref{theo:V-geom_ergo_MALA_just_C2} only relies on a Foster-Lyapunov drift condition and the fact that all compact sets are $1$-small.

To obtain geometric convergence with quantitative constants we need to consider the additional regularity assumption \Cref{ass:regularity-U_C3}.

\begin{theorem}
  \label{theo:V-geom_ergo_MALA}
  Assume \Cref{ass:regularity-U}, \Cref{ass:regularity-U_C3} and \Cref{ass:curvature_U}. Then, there exist $\bGamma,\barA_{\bareta} > 0$ (defined in \eqref{eq:def_bGamma_final} and \eqref{eq:def_bA_final}), such that for any $\bgamma \in\ocint{0,\bGamma}$, there exist $C_{\bgamma} \geq 0$ and $\rho_{\bgamma}\in\coint{0,1}$ (given in \eqref{eq:cst_bornes_mala}) satisfying for any $x \in \rset^d$ and $\gamma \in\ocint{0,\bgamma}$,
  \begin{equation}
    \label{eq:V-geom_ergo_MALA}
    \Vnorm[\Veta_{\bareta}]{\updelta_x \Rmala_{\gamma} - \pi} \leq C_{\bgamma} \rho_{\bgamma}^{\gamma k} \{ \Veta_{\bareta}(x) + \pi(\Veta_{\bareta})\} \eqsp,
  \end{equation}
  where $    \bareta = \mtt/ 16$ and $\pi(\Veta_{\bareta}) \leq \barA_{\bareta}$.
\end{theorem}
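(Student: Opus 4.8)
The plan is to combine a Foster--Lyapunov drift condition for $\Rmala_{\gamma}$ with a minorisation of a fixed ball, and then to turn the pair into \eqref{eq:V-geom_ergo_MALA} while keeping every constant explicit in its dependence on $\gamma$. The guiding principle is that one MALA step with step size $\gamma$ moves the chain only by a quantity of order $\sqrt{\gamma}$, so a minorisation that is uniform in $\gamma$ can hold only after a number of steps of order $1/\gamma$; regrouping the iterations into blocks of that length is exactly what replaces the per-step contraction rate by the ``diffusion-time'' rate $\rho_{\bgamma}^{\gamma k}$ appearing in \eqref{eq:V-geom_ergo_MALA}.

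\textbf{Step 1: a quantitative drift.} I would first establish that there are $\bGamma>0$, $\hat c>0$, $\hat b\ge0$ and $R\ge0$ — explicit in $\mtt,\Ltt,\Mtt,\Ktt$ — such that for every $\gamma\in\ocint{0,\bGamma}$ and $x\in\rset^d$,
\begin{equation*}
  \Rmala_{\gamma}\Veta_{\bareta}(x)\le(1-\hat c\gamma)\Veta_{\bareta}(x)+\hat b\gamma\1_{\ball{0}{R}}(x)\eqsp,\qquad\bareta=\mtt/16\eqsp.
\end{equation*}
From $\Rmala_{\gamma}\Veta_{\bareta}(x)=\Veta_{\bareta}(x)+\int\alphaMALA(x,y)\{\Veta_{\bareta}(y)-\Veta_{\bareta}(x)\}\rmala_{\gamma}(x,y)\rmd y$, adding and subtracting the proposal average gives
\begin{equation*}
  \Rmala_{\gamma}\Veta_{\bareta}(x)\le\int\Veta_{\bareta}(y)\rmala_{\gamma}(x,y)\rmd y+\int\{1-\alphaMALA(x,y)\}\{\Veta_{\bareta}(x)-\Veta_{\bareta}(y)\}_{+}\rmala_{\gamma}(x,y)\rmd y\eqsp.
\end{equation*}
The first term is the Euler--Maruyama (ULA) drift: completing the square in $\norm[2]{y-x+\gamma\nabla U(x)}$ and using $\norm{\nabla U(x)}\le\Ltt\norm{x}$ from \Cref{ass:regularity-U} together with $\ps{\nabla U(x)}{x}\ge(\mtt/2)\norm[2]{x}$ for $\norm{x}\ge\tKtt$ from \Cref{lem:quadratic_behaviour}, a Gaussian moment computation yields $\int\Veta_{\bareta}(y)\rmala_{\gamma}(x,y)\rmd y\le(1-2\hat c\gamma)\Veta_{\bareta}(x)$ outside $\ball{0}{R}$ and $\le\Veta_{\bareta}(x)+\tilde b\gamma$ inside, the choice $\bareta=\mtt/16$ being what keeps the quadratic exponent negative after the square is completed, and a first-order-in-$\gamma$ expansion being what produces the factor $\gamma$ in front of $\tilde b$ (the chain barely moves). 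The second, rejection term is where \Cref{ass:regularity-U_C3} enters: writing $\log\alphaMALA(x,y)=U(x)-U(y)+(4\gamma)^{-1}(\norm[2]{y-x+\gamma\nabla U(x)}-\norm[2]{x-y+\gamma\nabla U(y)})$ and Taylor-expanding to third order around $y=x$, one bounds $1-\alphaMALA(x,y)$ by $\gamma^{3/2}$ times a polynomial in $\norm{x}$ and $\norm{y-x}/\sqrt{\gamma}$; a Cauchy--Schwarz estimate against the sub-Gaussian tails of $\rmala_{\gamma}(x,\varble)$ — again requiring $\bareta$ small — makes this term of order $\gamma^{3/2}\Veta_{\bareta}(x)$, negligible against the order-$\gamma$ ULA gap. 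Adjusting $\hat b,\hat c$ gives the drift; a standard truncation argument applied to it (with $\pi\Rmala_{\gamma}=\pi$) then yields $\pi(\Veta_{\bareta})\le\hat b/\hat c\le\barA_{\bareta}$, independent of $\gamma$.

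\textbf{Step 2: block minorisation and conclusion.} Fix $\theta$ large enough that $\theta^{1/2}$ exceeds $R$ and set $m_{\gamma}=\lceil\theta/\gamma\rceil$. Iterating Step 1, $\Rmala_{\gamma}^{m_{\gamma}}\Veta_{\bareta}\le\lambda_{0}\Veta_{\bareta}+b_{0}\1_{\ball{0}{R}}$ with $\lambda_{0}=(1-\hat c\gamma)^{m_{\gamma}}\le\rme^{-\hat c\theta}<1$ and $b_{0}\le(\hat b/\hat c)\sup_{\ball{0}{R}}\Veta_{\bareta}$, both free of $\gamma$ — here it is essential that the additive term in the one-step drift is of order $\gamma$, so that the geometric sum $\hat b\gamma\sum_{j<m_{\gamma}}(1-\hat c\gamma)^{j}$ stays bounded. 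Next I would show $\ball{0}{R}$ is a $1$-small set for $\Rmala_{\gamma}^{m_{\gamma}}$ with a $\gamma$-free constant: by the $C^{3}$ estimate of Step 1 the per-step rejection probability on any fixed compact is of order $\gamma^{3/2}$, so the expected number of rejections over a block is of order $\theta\gamma^{1/2}$ and, by Markov's inequality, the event that all $m_{\gamma}$ proposals are accepted has probability at least $1/2$ once $\bGamma$ is small; on that event $\YrMALA_{m_{\gamma}}$ (started in $\ball{0}{R}$) is a pure Euler--Maruyama chain whose law, after also restricting to the chain staying in a large fixed ball (so the drift steps contribute only a bounded shift, using $\norm{\nabla U(z)}\le\Ltt\norm{z}$ there), stochastically dominates a Gaussian of variance of order $\theta$ with boundedly shifted mean, hence has a density bounded below on $\ball{0}{R}$ by a constant not depending on $\gamma$. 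This gives $\Rmala_{\gamma}^{m_{\gamma}}(x,\varble)\ge\varepsilon_{0}\nu_{0}(\varble)$ for $x\in\ball{0}{R}$, with $\varepsilon_{0}>0$ and the probability measure $\nu_{0}$ free of $\gamma$. Feeding this drift-and-minorisation pair into a classical quantitative ergodicity theorem produces explicit $C_{0}\ge0$, $\tilde\rho_{0}\in\coint{0,1}$ depending only on $\lambda_{0},b_{0},\varepsilon_{0},\sup_{\ball{0}{R}}\Veta_{\bareta}$ (hence free of $\gamma$) with $\Vnorm[\Veta_{\bareta}]{\updelta_{x}\Rmala_{\gamma}^{m_{\gamma}\ell}-\pi}\le C_{0}\tilde\rho_{0}^{\ell}\{\Veta_{\bareta}(x)+\pi(\Veta_{\bareta})\}$ for all $\ell\in\nset$. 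Finally, writing $k=m_{\gamma}\ell+j$ with $0\le j<m_{\gamma}$, controlling the last $j$ iterations through the one-step drift ($\Rmala_{\gamma}^{j}\Veta_{\bareta}\le\rme^{j\hat b\gamma}\Veta_{\bareta}\le\rme^{2\hat b\theta}\Veta_{\bareta}$ since $m_{\gamma}\gamma\le2\theta$), and using $\ell\ge k\gamma/(2\theta)-1$, gives \eqref{eq:V-geom_ergo_MALA} with $\rho_{\bgamma}=\tilde\rho_{0}^{1/(2\theta)}$ and $C_{\bgamma}=\rme^{2\hat b\theta}C_{0}\tilde\rho_{0}^{-1}$; tracing these choices back through Steps 1--2 yields the explicit formulas.

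\textbf{Main obstacle.} The two delicate points are the control of the rejection term in Step 1 and the uniform-in-$\gamma$ minorisation in Step 2, and both rely crucially on \Cref{ass:regularity-U_C3}. For the drift one must show that although MALA can reject and thereby fail to decrease $\Veta_{\bareta}$, the net loss is of order smaller than $\gamma\Veta_{\bareta}(x)$ \emph{uniformly in} $x$; this is precisely where the $C^{3}$ bound is needed, to expand $\log\alphaMALA$ to the order giving a $\gamma^{3/2}$ rate and to pair it against the proposal tails with the exact exponent $\bareta=\mtt/16$. For the minorisation, the block length $m_{\gamma}$ of order $1/\gamma$ must be large enough for the convolved proposals to cover the fixed ball $\ball{0}{R}$, yet the compounded acceptance probability over those of order $1/\gamma$ steps must remain bounded below independently of $\gamma$ — which works only because the per-step rejection probability on compacts is of order $\gamma^{3/2}$, so the expected rejections per block vanish as $\gamma\to0$ rather than accumulating.
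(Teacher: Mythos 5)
Your overall architecture (one-step drift of the form $(1-\hat c\gamma)V+\hat b\gamma\1_{\ball{0}{R}}$, a minorisation over blocks of $\bigO(1/\gamma)$ steps, then a quantitative ergodicity theorem applied to the skeleton chain) matches the paper's, and your Step~2 assembly and the bound $\pi(\Veta_{\bareta})\le\hat b/\hat c$ are fine. But Step~1 contains a genuine gap in the treatment of the rejection term for large $\norm{x}$. The third-order Taylor expansion of $\log\alphaMALA$ under \Cref{ass:regularity-U_C3} gives a bound of the form $\abs{\alphamala_{\gamma}(x,z)}\le C\gamma^{3/2}(\norm[2]{z}+\norm[4]{z}+\norm[2]{x})$; integrating against the proposal, the rejection contribution to $\Rmala_\gamma\Veta_{\bareta}(x)$ is then at best $C\gamma^{3/2}(d+3d^2+\norm[2]{x})\,\Veta_{\bareta}(x)$. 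This is \emph{not} ``of order $\gamma^{3/2}\Veta_{\bareta}(x)$, negligible against the order-$\gamma$ ULA gap'' uniformly in $x$: the ULA gap outside $\ball{0}{R}$ is $(1-\rme^{-\bareta\mtt\gamma\norm[2]{x}/4})\Veta_{\bareta}(x)\le\Veta_{\bareta}(x)$, while your rejection bound grows like $\gamma^{3/2}\norm[2]{x}\Veta_{\bareta}(x)$, so the claimed inequality $\Rmala_\gamma\Veta_{\bareta}\le(1-\hat c\gamma)\Veta_{\bareta}$ fails once $\norm[2]{x}\gtrsim\gamma^{-3/2}$. The paper flags exactly this obstruction and resolves it with a second, \emph{one-sided and $x$-free} estimate (its \Cref{lem:bound_alpha_mala_2}): using the convexity at infinity \Cref{ass:curvature_U} to show that the dominant quadratic-in-$\nabla U(x)$ term in the expansion of $\alphamala_\gamma$ is negative and absorbs all the $\norm{x}$-dependent cross terms, one gets $\alphamala_{\gamma}(x,z)\le C\gamma\norm[2]{z}$ for $\norm{x}$ large and $\norm{z}\le\norm{x}/(4\sqrt{2\gamma})$, the complementary Gaussian tail being $\exp(-c\norm[2]{x}/\gamma)$. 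The resulting rejection loss $C\gamma d\,\Veta_{\bareta}(x)$ is then beaten by the ULA contraction for $\norm{x}\ge R$ with $R$ large. Your proof attributes the whole rejection control to \Cref{ass:regularity-U_C3}, whereas the tail regime genuinely requires \Cref{ass:curvature_U}; without that ingredient the drift, and hence the theorem, does not follow.

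A secondary remark: your minorisation argument (condition on all $m_\gamma$ proposals being accepted and on confinement to a fixed ball, then lower-bound the resulting pure Euler--Maruyama density) is a different route from the paper's, which instead couples two ULA chains over $\ceil{1/\gamma}$ steps and transfers the estimate to MALA through the total-variation perturbation bound $\tvnorm{\updelta_x\Qgam^{\ceil{1/\gamma}}-\updelta_x\Rkerg^{\ceil{1/\gamma}}}\le C\gamma^{1/2}(\cdot)$. Your route is plausible in spirit (it is close to the Bou-Rabee--Hairer strategy), but as written it is only a sketch: lower-bounding the density of the chain \emph{conditioned} on full acceptance and confinement, and controlling the exit probability from the confining ball over $\bigO(1/\gamma)$ steps, both require arguments you do not supply, and ``stochastic domination'' of a Gaussian is not the right tool for a pointwise density lower bound. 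The paper's perturbation route avoids all of this at the cost of the single Lemma~\ref{lem:diff_tv_MALA_ULA}.
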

\begin{proof}
  The proof is postponed to \Cref{sec:proof:theo:V-geom_ergo_MALA}.
\end{proof}
Note that the constants $C_{\bgamma},\rho_{\bgamma}$ only depend on the characteristics of $U$ appearing in the conditions \Cref{ass:regularity-U}, \Cref{ass:regularity-U_C3} and \Cref{ass:curvature_U} and are independent of the stepsize $\gamma$. As a result, \Cref{theo:V-geom_ergo_MALA} establishes that the rate of convergence of MALA, \ie, $\gamma \log(\rho_{\bgamma})$ scales linearly with respect to the stepsize $\gamma$. It is in accordance with the convergence rates of ULA \cite{eberle2018quantitative,debortoli2018back} and with the result that under appropriate conditions,  MALA is at the first order an approximation of the Langevin diffusion. Indeed,  \cite[Lemma 7]{brosse2019diffusionv2} (see also \cite{refId0,lelievre_stoltz_2016,roberts:rosenthal:1998} and the references therein for similar results)  shows that for any $\varphi : \rset^d \to \rset$, $\rmC^{\infty}$ with compact support, $\Qmala_{\gamma} \varphi(x) - \Psemigroup_{\gamma} \varphi(x) = \gamma^{2}(1+\norm[q]{x})  \psi_{\gamma,\varphi}(x)$, for $q \geq 0$ and some function $\psi_{\gamma,\varphi} : \rset^d \to \rset$ satisfying $\sup_{x \in \rset^d, \, \gamma \in\ocint{0,\bgamma}} \abs{\psi_{\gamma,\varphi}(x)} < \plusinfty$. On the other hand, if $(\Psemigroup_t)_{t \geq 0}$ is $V_{\mathrm{L}}$-uniformly geometrically ergodic, \ie, there exist $V_{\mathrm{L}} : \rset^d \to \coint{0,1}$, $\rho_{\mathrm{L}} \in \coint{0,1}$ and $C_{\mathrm{L}} \geq 0$ such that for any $t \geq 0$ and $x \in\rset^d$, $\Vnorm[V_{\mathrm{L}}]{\updelta_x \Psemigroup_t - \pi} \leq C_{\mathrm{L}} \rho_{\mathrm{L}}^{t}$,  $\Psemigroup_{\gamma}$ as a discrete Markov kernel converges to $\pi$ with a convergence rate, \ie, $\gamma\log(\rho_{\mathrm{L}})$ which scales linearly with respect to $\gamma$. Hence, such a convergence is expected for $\Qmala_{\gamma}$ with some constants $C_{\bgamma}$ and $\rho_{\bgamma}$ independent of $\gamma \in \ocint{0,\bgamma}$ as stated in \Cref{theo:V-geom_ergo_MALA}. This type of convergence is important to obtain bounds on Poisson solution associated with $\Qmala_{\gamma}$ which holds uniformly with respect to the discretization parameters. We refer to \cite{brosse2019diffusionv2,lelievre_stoltz_2016,durmus:enfroy:2021} for further discussions on this matter.


The proof of \Cref{theo:V-geom_ergo_MALA} consists in establishing explicit minorization and drift conditions for $\Rmala_{\gamma}$ for $\gamma \in \ocint{0,\bgamma}$ for some $\bgamma >0$; see \eg~\cite[Chapter~19]{douc:moulines:priouret:soulier:2018}. In particular,  to obtain the stated dependence with respect to the stepsize $\gamma$, we show that for some $\bgamma >0$:
\begin{enumerate}[label=(\Roman*)]
\item\label{item:condition_ergo_I} there exist $\lambdaFL\in\ooint{0,1}$ and $\cFL<\plusinfty$ such that for all $\step\in\ocint{0,\bgamma}$
  \begin{equation}
    \label{eq:def-discrete-drift}
\RKer_\step \lV_{\bareta} \leq \lambdaFL^\step \lV_{\bareta} + \step \cFL  \eqsp;
\end{equation}
\item\label{item:condition_ergo_II}  there exists $\varepsilon\in\ocint{0,1}$ such that for all $\step\in\ocint{0,\bgamma}$ and $x,x'\in\defEnsLigne{\lV_{\bareta} \leq M}$,
\begin{equation}\label{eq:def-minorization}
  \tvnorm{\Rker_\step^{\ceil{1/\step}}(x,\cdot) - \Rker_\step^{\ceil{1/\step}}(x',\cdot)} \leq 2(1 - \varepsilon) \eqsp,
\end{equation}
where
\begin{equation}
  M>\parenthese{\frac{4b \lambda^{-\bgamma}}{(1-\lambda)\log(1/\lambda)}} \vee 1 \eqsp.
\end{equation}
\end{enumerate}
Then,  \ref{item:condition_ergo_I} implies by an easy induction that for any $\gamma \in \ocint{0,\bgamma}$,
\begin{equation}\label{eq:def-discrete-drift_2}
  \RKer_\step^{\ceil{1/\gamma}} \lV_{\bareta} \leq \lambdaFL \lV_{\bareta} +  \cFL(1+\bgamma)  \eqsp.
\end{equation}
Therefore, applying
\cite[Theorem~19.4.1]{douc:moulines:priouret:soulier:2018} to $\RKer^{\ceil{1/\gamma}}_{\gamma}$ for $\gamma \in \ocint{0,\bgamma}$ using \ref{item:condition_ergo_II} and \eqref{eq:def-discrete-drift_2}, it follows  that \Cref{theo:V-geom_ergo_MALA} holds  and $\pi(V_{\bareta}) < \plusinfty$. Accordingly, it is enough to show that conditions \ref{item:condition_ergo_I} and \ref{item:condition_ergo_II} hold.

\subsection{Comparison with existing litterature}
\label{sec:comp-with-exist}

MALA has been shown to be uniformly geometrically ergodic in \cite{roberts:tweedie-Langevin:1996} but under very restrictive conditions which we recall.
Define for any $x \in\rset^d$,
\begin{equation}
  \ART(x) = \ensemble{y \in\rset^d}{\pi(x)r_{\gamma}(x,y)\leq \pi(y)r_{\gamma}(y,x)} \eqsp, \quad \IRT(x) = \ensemble{y\in\rset^d}{\norm{y} \leq \norm{x}} \eqsp.
\end{equation}
We say that $\ART$ converges inward in $r_{\gamma}$ for $\gamma >0$ if
\begin{equation}
  \label{eq:1}
  \lim_{\norm{x} \to \plusinfty}\int_{\rset^d} \1_{\ART(x)\cap \IRT(x)}(y)r_{\gamma}(x,y) \rmd x = 0 \eqsp.
\end{equation}

Define for any $a \geq 0$, $  \VlyapUnMALA_a(x) = \exp(a \norm{x})$.

\begin{theorem}[\protect{\cite[Theorem 4.1]{roberts:tweedie-Langevin:1996}}]
  \label{theo:2}
Let $\gamma >0$.  Assume that there exists $\eta >0$ such that
\begin{equation}
  \label{eq:tail_U}
    \liminf_{\norm{x} \to \plusinfty} \{\norm{x} - \normLigne{x - \gamma \nabla U(x)} \}\geq \eta \eqsp,
  \end{equation}
  and $\ART$ converges inward in $r_{\gamma}$. Then $\Rmala_{\gamma}$ is $\VlyapUnMALA_{a}$-uniformly geometrically ergodic for $a \in \ooint{0,\gamma \eta}$, \ie, there exist $C_{\gamma} \geq 0$ and $\kappa_{\gamma} >0$ such that
  for any $x \in \rset^d$,
  \begin{equation}
    \label{eq:2_0}
    \Vnorm[\VlyapUnMALA_{a}]{\updelta_x \Qmala_{\gamma} - \pi} \leq C_{\gamma} \rme^{-\kappa_{\gamma} \gamma k} \VlyapUnMALA_{a}(x) \eqsp.
  \end{equation}
\end{theorem}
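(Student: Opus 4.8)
The plan is to obtain \eqref{eq:2_0} by the classical drift-and-minorization route for $\VlyapUnMALA_a$-uniform geometric ergodicity (see e.g.\ \cite[Chapter~19]{douc:moulines:priouret:soulier:2018}, or the Meyn--Tweedie theorem): it suffices to produce $\lambda\in\ooint{0,1}$, $b<\plusinfty$ and a set $\mathsf{C}$ with $\Rmala_{\gamma}\VlyapUnMALA_a\leq\lambda\VlyapUnMALA_a+b\1_{\mathsf{C}}$, a one-step minorization on $\mathsf{C}$, and $\pi$-irreducibility together with aperiodicity. Two of these ingredients are essentially free: $\pi$-irreducibility (and reversibility) of $\Rmala_{\gamma}$ is recalled above, while aperiodicity follows from the fact that the rejection probability $\int_{\rset^d}\{1-\alphaMALA(x,y)\}\rmala_{\gamma}(x,y)\rmd y$ is strictly positive at every $x$, so that $\Rmala_{\gamma}(x,\cdot)$ charges the atom $\{x\}$. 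For the minorization I would use that $(x,y)\mapsto\alphaMALA(x,y)\rmala_{\gamma}(x,y)$ is continuous and strictly positive on $\rset^d\times\rset^d$, hence bounded below on $\mathsf{C}\times\mathsf{C}$ for any compact $\mathsf{C}$; the absolutely continuous part of $\Rmala_{\gamma}$ then yields $\Rmala_{\gamma}(x,\cdot)\geq\varepsilon\,\nu(\cdot)$ for $x\in\mathsf{C}$, with $\nu$ proportional to Lebesgue measure on $\mathsf{C}$. Since $\VlyapUnMALA_a(x)=\rme^{a\norm{x}}$, the sublevel sets $\{\VlyapUnMALA_a\leq M\}$ are closed balls, so $\mathsf{C}=\{\VlyapUnMALA_a\leq M\}$ with $M$ large will serve as the small set once the drift has been established.

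The heart of the argument, and where I expect the real work to lie, is the drift inequality. Writing $\mu_x=x-\gamma\nabla U(x)$ and using $\int\rmala_{\gamma}(x,y)\rmd y=1$,
\begin{equation}
\label{eq:sketch_drift}
\Rmala_{\gamma}\VlyapUnMALA_a(x)=\VlyapUnMALA_a(x)+\int_{\rset^d}\alphaMALA(x,y)\rmala_{\gamma}(x,y)\bigl\{\VlyapUnMALA_a(y)-\VlyapUnMALA_a(x)\bigr\}\rmd y\eqsp,
\end{equation}
so it is enough to prove $\limsup_{\norm{x}\to\plusinfty}\Rmala_{\gamma}\VlyapUnMALA_a(x)/\VlyapUnMALA_a(x)<1$; on any fixed ball $\Rmala_{\gamma}\VlyapUnMALA_a$ is bounded (by continuity of the kernel and finiteness of $\PE[\rme^{a\sqrt{2\gamma}\norm{\ZrGaussian}}]$), which is absorbed into $b\1_{\mathsf{C}}$. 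I would split the integral in \eqref{eq:sketch_drift} according to whether $y$ is inward ($y\in\IRT(x)$, where $\VlyapUnMALA_a(y)-\VlyapUnMALA_a(x)\leq0$) or outward, and on each piece according to whether $y\in\ART(x)$ (so $\alphaMALA(x,y)=1$) or not. On the outward piece, bound $\alphaMALA\leq1$ and estimate $\int_{\norm{y}>\norm{x}}\rmala_{\gamma}(x,y)\rme^{a(\norm{y}-\norm{x})}\rmd y$ by a Gaussian tail computation centred at $\mu_x$: since $\norm{\mu_x}\leq\norm{x}-\eta+o(1)$ by \eqref{eq:tail_U}, only the component of $\ZrGaussian$ along $\mu_x/\norm{\mu_x}$ contributes to leading order, and this term is, as $\norm{x}\to\plusinfty$, at most $\rme^{-a\eta+a^2\gamma}$ times a Gaussian tail probability, hence strictly below $1$ for $a$ in the claimed range. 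On the inward piece, the part over $\ART(x)\cap\IRT(x)$ has integrand in $[-\VlyapUnMALA_a(x),0]$ and is bounded in modulus by $\VlyapUnMALA_a(x)\int_{\ART(x)\cap\IRT(x)}\rmala_{\gamma}(x,y)\rmd y=o(\VlyapUnMALA_a(x))$ precisely by the inward-convergence hypothesis \eqref{eq:1}; the remaining part, over $(\rset^d\setminus\ART(x))\cap\IRT(x)$, is nonpositive, and on it $\alphaMALA(x,y)\rmala_{\gamma}(x,y)=\pi(y)\rmala_{\gamma}(y,x)/\pi(x)$, which together with \eqref{eq:tail_U} lets one quantify how much of the inward contraction survives the Metropolis correction. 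Collecting the pieces gives $\Rmala_{\gamma}\VlyapUnMALA_a(x)/\VlyapUnMALA_a(x)\leq\rho+o(1)$ as $\norm{x}\to\plusinfty$ for some $\rho<1$, hence the drift with any $\lambda\in\ooint{\rho,1}$ and suitable $b,\mathsf{C}$.

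With the drift inequality, the minorization on $\{\VlyapUnMALA_a\leq M\}$, and $\pi$-irreducibility/aperiodicity all in hand, an application of \cite[Theorem~19.4.1]{douc:moulines:priouret:soulier:2018} (equivalently the Meyn--Tweedie theorem) gives $\VlyapUnMALA_a$-uniform geometric ergodicity, i.e.\ \eqref{eq:2_0}, with $C_{\gamma},\kappa_{\gamma}$ depending only on $\lambda,b,M,\varepsilon$. The main obstacle is the inward piece of the drift: one must rule out that the Metropolis rejection cancels the $\rme^{-a\eta}$-type contraction of the proposal, and this is exactly the point at which the (admittedly unwieldy) requirement that $\ART$ converges inward in $\rmala_{\gamma}$ enters — it forces the proposals that are inward but lie on the favourable side of the acceptance boundary to carry vanishing mass as $\norm{x}\to\plusinfty$, so that the contraction coming from \eqref{eq:tail_U} is not diluted.
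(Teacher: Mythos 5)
First, a point of reference: the paper does not prove this statement at all — \Cref{theo:2} is quoted from \cite[Theorem 4.1]{roberts:tweedie-Langevin:1996}, so the only proof to compare against is the original one. Your overall architecture (a geometric drift inequality for $\VlyapUnMALA_a$ plus a minorization on sublevel sets, with the drift integral split according to $y\in\ART(x)$ versus $y\notin\ART(x)$ and $y\in\IRT(x)$ versus $y\notin\IRT(x)$) is exactly the Roberts--Tweedie strategy, and the irreducibility, aperiodicity, minorization and outward-piece parts of your sketch are essentially sound.

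The gap is in the inward piece, and as written it is fatal. You discard the contribution of $\ART(x)\cap\IRT(x)$ as $o(\VlyapUnMALA_a(x))$ on the grounds that \eqref{eq:1} makes its $\rmala_{\gamma}(x,\cdot)$-mass vanish, and you then try to extract the contraction from $(\rset^d\setminus\ART(x))\cap\IRT(x)$, where $\alphaMALA(x,y)=\pi(y)\rmala_{\gamma}(y,x)/\{\pi(x)\rmala_{\gamma}(x,y)\}<1$. But \eqref{eq:tail_U} is a statement about the proposal mean only; it gives no lower bound whatsoever on this acceptance ratio, which can be arbitrarily close to $0$ uniformly over $\IRT(x)$. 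If essentially no inward proposal is accepted, none of the inward drift of the proposal transfers to the Metropolis chain, and the claimed bound $\Rmala_{\gamma}\VlyapUnMALA_a(x)\leq(\rho+o(1))\VlyapUnMALA_a(x)$ does not follow. Your closing sentence in fact inverts the logic: if the \emph{favourable} inward proposals carry vanishing mass, the contraction \emph{is} diluted. The resolution is that in \cite{roberts:tweedie-Langevin:1996} the inward-convergence condition is imposed on the symmetric difference $\ART(x)\,\Delta\,\IRT(x)$ (the intersection in \eqref{eq:1} appears to be a typo in the present paper): under that condition inward proposals are asymptotically all auto-accepted, so that
\begin{equation}
\int_{\rset^d}\alphaMALA(x,y)\rmala_{\gamma}(x,y)\bigl\{\VlyapUnMALA_a(y)-\VlyapUnMALA_a(x)\bigr\}\rmd y
\leq \int_{\rset^d}\rmala_{\gamma}(x,y)\bigl\{\VlyapUnMALA_a(y)-\VlyapUnMALA_a(x)\bigr\}\rmd y + o(\VlyapUnMALA_a(x))\eqsp,
\end{equation}
obtained by bounding $\alphaMALA\leq 1$ on the outward part, using $\alphaMALA=1$ on $\ART(x)\cap\IRT(x)$, and absorbing the remainder into the vanishing mass of the symmetric difference; the drift of the proposal kernel coming from \eqref{eq:tail_U} then transfers directly. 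You should rebuild the inward estimate along these lines rather than attempting to control the acceptance ratio on $(\rset^d\setminus\ART(x))\cap\IRT(x)$.
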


Let us comment on \Cref{theo:2}. First, while \eqref{eq:tail_U} is relatively easy to verify under mild assumptions about the tail of $\pi$, deriving practical conditions for $U$ that ensure that $\ART$ converges inward into $r_{\gamma}$ is very difficult, and \cite{roberts:tweedie-Langevin:1996}
only succeeds in showing that this holds in the one-dimensional setting and under strong conditions for $U$. Second, \Cref{theo:2} is not quantitative and therefore the constants in the geometric convergence of $Q_{\gamma}$ to $\pi$ may depend strongly on the step size $\gamma$, which in general must be chosen small to ensure that a non-negligible fraction of candidates is accepted during the algorithm. For example, from the optimal scaling for MALA in \cite{roberts:rosenthal:1998}, the step size should scale as $d^{-1/3}$ with the dimension $d \to \plusinfty$ in the ideal scenario $\pi = \pi_1^{\otimes d}$ for a one-dimensional distribution $\pi_1$.
Even if these constants were independent of $\gamma$\footnote{which is the case for the one-dimensional distribution considered in \cite{roberts:tweedie-Langevin:1996} after a careful review of the computations}, we can see that the parameter $a$ of the Lyapunov function $\VlyapUnMALA_{a}$ must be chosen proportional to the step size $\gamma$, which implies that convergence behaves poorly with respect to $\gamma$ for unbounded functions. Finally, this result does not give a recommendation for the choice of $\gamma$ that ensures geometric convergence. Note that our results \Cref{theo:V-geom_ergo_MALA_just_C2} and \Cref{theo:V-geom_ergo_MALA} address all these issues.

Recent studies \cite{pmlr-v75-dwivedi18a,chewi2021optimal} based on conductance arguments \cite{lovasz:vempala:2007,kannan:lovasz:simonovits:1995} establish quantitative complexity bounds for MALA in the case where the potential $U$ is strongly convex. More precisely, given a precision $\epsilon > 0$, these works are interested in finding a minimal number of iterations $N_{\epsilon} \in \nsets$ and a step size $\gamma_{\epsilon} > 0$ that ensures that $\tvnorm{\xi \Rmala_{\gamma_{\epsilon}}^{N_{\epsilon}}-\pi}\leq \epsilon$, where $\xi$ is either a warm start or a well-chosen initial distribution. In contrast to these works, we do not impose a strong convexity condition and our result can be applied to any initial distribution. Finally, we show $V$-uniform geometric ergodicity, which is a  stronger convergence guarantee.

Finally, we mention \cite{bourabee:hairer:2013}, which studies the case where $U$ satisfies \Cref{ass:curvature_U} but potentially violates \Cref{ass:regularity-U}, \ie, $x \mapsto \norm{\rmD^2 U(x)}$ can be unbounded.  \cite[Theorem 3.1]{bourabee:hairer:2013} shows that under suitable regularity conditions and \Cref{ass:curvature_U}, there exist $\rho \in \coint{0,1}$, $\bgamma > 0$ and $C_2 \geq 0$ such that for any $E_0 \in \rset$, there exists $C_1(E_0) \geq 0$ such that for any $x \in \rset^d$, $U(x) \leq E_0$ and $\gamma \in\ocint{0,\bgamma}$,
\begin{equation}
  \label{eq:6}
  \tvnorm{\updelta_x \Rmala_{\gamma}^{k\floor{1/\gamma}} - \pi} \leq C_1(E_0)\{\rho^{k} + \rme^{-C_2/\gamma^{1/4}}\} \eqsp.
\end{equation}
We show that in the case $U$ is Lipschitz the extra term $\rme^{-C_2/\gamma^{1/4}}$ can be omitted and the convergence occurs in a particular $V$-norm.



\section{Proof of the main results}
\label{subsec:geom-ergodicity-ula}

\subsection{Bounds on the acceptance ratio}
The analysis of MALA is naturally  related to the study of the ULA algorithm. More precisely, since for any $x \in\rset^d$ and $\msa \in \mcbb(\rset^d)$, the Markov kernel corresponding to ULA \eqref{eq:def_ula} is given by
\begin{equation}
  \label{eq:def_kernel_ula}
  \Rulag(x,\msa) = \int_{\rset^d}  \1_{\msa}(x- \gamma \nabla U(x) + \sqrt{2\gamma} z) \min(1,\rme^{-\alphamala_{\gamma}(x,z)})  \varphibf(z) \rmd z
\end{equation}
 where $\varphibf(z) = (2\uppi)^{-d/2} \rme^{-\norm{z}^2/2}$ is the density of the $d$-dimensional standard Gaussian distribution and
\begin{align}
\label{eq:def-kernel-MALA}
  &
\textstyle{\Rmalag(x ,\msa) = \int_{\rset^d}  \1_{\msa}(x- \gamma \nabla U(x) + \sqrt{2\gamma} z) \min(1,\rme^{-\alphamala_{\gamma}(x,z)})  \varphibf(z) \rmd z}  \\
  \nonumber
&\phantom{--}+
\textstyle{\updelta_x(\msa) \int_{\rset^d} \{1- \min(1,\rme^{-\alphamala_{\gamma}(x,z)})} \} \varphibf(z) \rmd z \eqsp,\\
    \label{eq:def-alpha-MALA}
& \textstyle{\alphamala_{\gamma}(x,z) = \pU(x - \gamma \nabla U(x) + \sqrt{2 \gamma}z) - \pU(x)} \\
  \nonumber
&  \phantom{--}
\textstyle{+ (1/2) \{\norm[2]{z-(\gamma/2)^{1/2}\defEns{\nabla U(x) + \nabla U(x-\gamma \nabla U(x) + \sqrt{2\gamma} z)}} -\norm[2]{z} \}}\eqsp.
\end{align}
 the difference between the two Markov kernels can be expressed for any bounded measurable function $f:\rset^d\to\rset$ by
\begin{multline}
\label{eq:diff-rula-rmala}
  \Rmalag f(x)  - \Rulag f(x) = \int_{\rset^d}\{f(x) - f(x-\gamma \nabla U(x) + \sqrt{2 \gamma} z)\} \\
  \times \{1 - \min(1,\rme^{-\alphamala_{\gamma}(x,z)}) \} \varphibf(z) \rmd z \eqsp.
\end{multline}
Since $1-\min(1,\rme^{-t}) \leq \max(0,t)$ for any $t \in \rset$,
properties of ULA can then be transferred to MALA from perturbation
arguments achieved by a careful analysis of $\alphamala_{\gamma}$.
Most analyses of MALA dealing with either its convergence
\cite{refId0,eberle:2014} or its optimal scaling
\cite{roberts:rosenthal:1998} establish that
$\alphamala_{\gamma}(x,z)$ is of order $\bigO(\gamma^{3/2})$ for fixed $x,z\in\rset^d$. More
precisely, we have:

\begin{lemma}
  \label{lem:bound_alpha_mala_1}
Assume \Cref{ass:regularity-U} and \Cref{ass:regularity-U_C3} holds.
 Then, for any $\bgamma >0$, there exists an explicit constant (see \eqref{eq:def_C_1_bgamma}) $C_{1,\bgamma} < \infty$ such that for any $x,z \in \rset^d$, $\gamma \in \ocint{0,\bgamma}$, it holds
  \begin{equation}
          \label{eq:lem:bound_alpha_mala_1}
\abs{      \alphamala_{\gamma}(x,z) }\leq
C_{1,\bgamma} \gamma^{3/2}\{\norm[2]{z}+\norm[4]{z} + \norm[2]{x}\}  \eqsp.
  \end{equation}
\end{lemma}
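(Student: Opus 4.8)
The plan is to Taylor-expand the acceptance exponent $\alphamala_\gamma(x,z)$ defined in \eqref{eq:def-alpha-MALA} around $\gamma = 0$ and show that the zeroth- and first-order (in $\sqrt{\gamma}$) terms cancel, leaving a remainder of order $\gamma^{3/2}$. Write $y = x - \gamma\nabla U(x) + \sqrt{2\gamma}\,z$ for the proposal and split $\alphamala_\gamma(x,z) = T_1 + T_2$, where $T_1 = \pU(y) - \pU(x)$ and $T_2 = (1/2)\{\norm[2]{z - (\gamma/2)^{1/2}[\nabla U(x) + \nabla U(y)]} - \norm[2]{z}\}$. Expanding the square in $T_2$ gives $T_2 = -(\gamma/2)^{1/2}\ps{z}{\nabla U(x) + \nabla U(y)} + (\gamma/4)\norm[2]{\nabla U(x) + \nabla U(y)}$.

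For $T_1$, I would apply a second-order Taylor expansion of $\pU$ with integral remainder: $T_1 = \ps{\nabla U(x)}{y - x} + \int_0^1 (1-t)\,\rmD^2 U(x + t(y-x))\{y-x\}^{\otimes 2}\,\rmd t$. Since $y - x = -\gamma\nabla U(x) + \sqrt{2\gamma}\,z$, the linear part is $-\gamma\norm[2]{\nabla U(x)} + \sqrt{2\gamma}\ps{\nabla U(x)}{z}$. In $T_2$, I would similarly expand $\nabla U(y) = \nabla U(x) + \rmD^2 U(x)\{y-x\} + (\text{remainder controlled by }\Mtt\norm[2]{y-x})$ using \Cref{ass:regularity-U_C3}; the term $-(\gamma/2)^{1/2}\ps{z}{\nabla U(x) + \nabla U(y)}$ then contributes $-\sqrt{2\gamma}\ps{z}{\nabla U(x)}$ at leading order, which cancels the $\sqrt{2\gamma}\ps{\nabla U(x)}{z}$ from $T_1$. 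The next order: the $\sqrt{\gamma}\cdot(y-x) = O(\gamma)$ terms. The $O(\gamma)$ contributions are $-\gamma\norm[2]{\nabla U(x)}$ from $T_1$'s linear part, $\rmD^2 U(x)\{\sqrt{2\gamma}z\}^{\otimes 2}/2 = \gamma\,\rmD^2 U(x)\{z\}^{\otimes 2}$ from $T_1$'s quadratic part (at base point $x$), $-(\gamma/2)^{1/2}\ps{z}{\rmD^2 U(x)\{\sqrt{2\gamma}z - \gamma\nabla U(x)\}}$ from $T_2$ which contributes $-\gamma\,\rmD^2 U(x)\{z\}^{\otimes 2}$ at order $\gamma$ (the $\gamma^{3/2}$ part going into the remainder), and $(\gamma/4)\norm[2]{2\nabla U(x)} = \gamma\norm[2]{\nabla U(x)}$ from the last term of $T_2$. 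These cancel in pairs: $-\gamma\norm[2]{\nabla U(x)} + \gamma\norm[2]{\nabla U(x)} = 0$ and $\gamma\,\rmD^2 U(x)\{z\}^{\otimes 2} - \gamma\,\rmD^2 U(x)\{z\}^{\otimes 2} = 0$.

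Everything that remains is $O(\gamma^{3/2})$, and the task is to bound it. The remainder from $T_1$ is $\int_0^1 (1-t)[\rmD^2 U(x+t(y-x)) - \rmD^2 U(x)]\{y-x\}^{\otimes 2}\rmd t$, bounded by $(\Mtt/6)\norm[3]{y-x}$ via \Cref{ass:regularity-U_C3}; since $\norm{y-x} \leq \gamma\Ltt\norm{x} + \sqrt{2\gamma}\norm{z} \leq \sqrt{\gamma}(\bgamma^{1/2}\Ltt\norm{x} + \sqrt 2 \norm{z})$ for $\gamma \le \bgamma$, cubing gives a $\gamma^{3/2}$ factor times a polynomial in $\norm{x},\norm{z}$ of the form $\norm[3]{x} + \norm[3]{z}$ — except we want only $\norm[2]{x} + \norm[2]{z} + \norm[4]{z}$ in the final bound, so here I must be slightly careful: the cubic term $\gamma^{3}\Ltt^3\norm[3]{x}$ is actually $\gamma^{3/2}\cdot\gamma^{3/2}\Ltt^3\norm[3]{x}$, and I would absorb $\gamma^{3/2}\norm[3]{x} \leq \gamma^{3/2}(1 + \norm[2]{x})\cdot\text{const}$ only if $\norm{x}$ is bounded — which it is not. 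The cleaner route, and the one I expect the paper takes, is to keep track of powers more carefully: the genuinely problematic high-degree-in-$x$ terms all carry an extra power of $\gamma$, so e.g. $\gamma^3\norm[3]{x}$ should be grouped as $(\gamma\norm{x})^2 \cdot \gamma\norm{x}$ — hmm, that still is not obviously $\le C\gamma^{3/2}\norm[2]{x}$. I think the resolution is that for the drift argument one only needs this bound to be \emph{useful} when combined with the Gaussian weight $\varphibf(z)$ and the Lyapunov function, so the precise polynomial is whatever falls out; matching it to exactly $\norm[2]{z} + \norm[4]{z} + \norm[2]{x}$ presumably uses that $\gamma \le \bgamma$ lets one convert $\gamma^{1/2}\norm{x}$-type factors — actually, re-examining: $\norm[3]{y-x} \le 4(\gamma^3\Ltt^3\norm[3]{x} + 2\sqrt 2\gamma^{3/2}\norm[3]{z})$, and $\gamma^3\Ltt^3\norm[3]{x} = \gamma^{3/2}\cdot\gamma^{3/2}\Ltt^3\norm[3]{x}$; this is $\le \gamma^{3/2}\cdot C_{\bgamma}\norm[2]{x}$ is \emph{false} for large $\norm{x}$. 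So the bound \eqref{eq:lem:bound_alpha_mala_1} as stated must come from a smarter grouping — most likely $T_1$'s quadratic remainder is expanded to one more order (third-order Taylor with the $\rmD^3 U$ remainder), so that the $O(\gamma^2)$ term $\tfrac12\rmD^2 U(x)\{-\gamma\nabla U(x)\}^{\otimes 2}$ pieces get handled exactly and only a genuinely $\gamma^{3/2}$-homogeneous remainder survives, at which point the polynomial weight really is of degree $\le 2$ in $x$ (from $\norm{\nabla U(x)}\le\Ltt\norm{x}$ appearing squared but multiplied by $\gamma^2 \le \bgamma^{1/2}\gamma^{3/2}$). I would therefore carry the expansion of \emph{all} terms to the order where the leftover is uniformly $O(\gamma^{3/2})$ with a degree-$2$-in-$x$, degree-$4$-in-$z$ polynomial coefficient, then collect constants into the explicit $C_{1,\bgamma}$ referenced in \eqref{eq:def_C_1_bgamma}. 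The main obstacle is precisely this bookkeeping: ensuring the cancellation of the $O(1)$, $O(\sqrt\gamma)$, $O(\gamma)$ terms is clean, and that every surviving term either already carries $\gamma^{3/2}$ or carries a higher power of $\gamma$ that, combined with $\gamma \le \bgamma$ and the sub-linear growth $\norm{\nabla U(x)} \le \Ltt\norm{x}$ from \Cref{ass:regularity-U}, collapses to the advertised polynomial $\norm[2]{z} + \norm[4]{z} + \norm[2]{x}$ (the $\norm[4]{z}$ arising from $\norm[2]{\rmD^2 U(x)\{z\}^{\otimes 2}}$-type cross terms and the cubic/quartic pieces of $\norm{y-x}$ against the $z$-dependence).
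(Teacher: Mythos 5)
Your cancellation structure (the $O(\gamma^{1/2})$ and $O(\gamma)$ terms vanishing) is correct and matches what the paper's exact decomposition \eqref{lem:durmus_moulines_saksman} encodes, but the remainder estimate you propose does not close, and you flag this yourself: bounding the second-order Taylor remainder of $T_1$ by $(\Mtt/6)\norm[3]{y-x}$ produces the term $\gamma^3\Mtt\Ltt^3\norm[3]{x}$, which is \emph{not} dominated by $C\gamma^{3/2}\norm[2]{x}$ uniformly in $x$; the same defect appears when you expand $\nabla U(y)$ with a $\Mtt\norm[2]{y-x}$ remainder inside $T_2$, which yields a $\gamma^{5/2}\norm{z}\norm[2]{x}$ contribution that likewise cannot be absorbed into $\norm[2]{z}+\norm[4]{z}+\norm[2]{x}$. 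Your proposed repair --- pushing the Taylor expansion of $U$ one order further --- does not fix this, because the third-order remainder still contains $\rmD^3 U$ contracted against $\gamma^3\nabla U(x)^{\otimes 3}$, i.e.\ again a $\gamma^3\norm[3]{x}$ term.

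The missing idea is that the third-derivative bound must \emph{never} be applied to the $\nabla U(x)$-components of the displacement $y-x$. The paper starts from the exact identity \eqref{lem:durmus_moulines_saksman}, in which every term is $\rmD^2 U(x_t)$ contracted against tensors built from $z$ and $\nabla U(x)$ with $\nabla U(x)$ appearing at most twice; hence each $A_{k,\gamma}$ with $k\geq 3$ is of degree at most $2$ in $\norm{x}$ using only $\norm{\rmD^2 U}\leq\Ltt$ and $\norm{\nabla U(x)}\leq\Ltt\norm{x}$, and carries a prefactor $\gamma^{k/2}\leq\gamma^{3/2}\bgamma^{(k-3)/2}$, so \Cref{ass:regularity-U_C3} is not needed there at all. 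The sole term at order $\gamma$, namely $\gamma A_{2,\gamma}$, is purely quadratic in $z$; there the extra factor $\gamma^{1/2}$ is gained from $\int_0^1(1/2-t)\,\rmd t=0$, which allows replacing $\rmD^2 U(x_t)$ by $\rmD^2 U(x_t)-\rmD^2 U(x)$ and invoking the $\Mtt$ bound with only \emph{one} extra factor of the displacement $-\gamma^{1/2}\nabla U(x)+\sqrt{2}z$ (see \eqref{eq:decomposition-A-2}); this contributes at most $\gamma^{1/2}\Ltt\norm{x}\norm[2]{z}\leq\gamma^{1/2}\Ltt(\norm[4]{z}+\norm[2]{x})$, i.e.\ degree $1$ in $\norm{x}$ before the final AM--GM step. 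Without this selective use of $\Mtt$, the claimed polynomial weight $\norm[2]{z}+\norm[4]{z}+\norm[2]{x}$ cannot be reached, so as written the proposal has a genuine gap.
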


This result is a first step in the proof of \Cref{theo:V-geom_ergo_MALA}, from which we will be able to transfer the explicit minorization condition of ULA (\Cref{propo:small_set_ula}) to MALA; see \Cref{lem:diff_tv_MALA_ULA} and \Cref{propo:small_set_mala} below. Unfortunately, the dependence on $\norm{x}$ in the upper bound of \eqref{eq:lem:bound_alpha_mala_1} prevents us from doing the same for the Lyapunov drift condition of ULA \Cref{propo:super_lyap_ula}. Instead, we rely on the following upper bound, which does not have the correct order with respect to $\gamma$, but is independent of $x$.

\begin{lemma}
  \label{lem:bound_alpha_mala_2}
  Assume \Cref{ass:regularity-U} and \Cref{ass:curvature_U}. Then, for any $\bgamma \in \ocint{0, \mtt^3/(4\Ltt^4)}$, there exists an explicit constant (see \eqref{eq:def_C_2_bgamma}) $C_{2,\bgamma} < \infty$ such that for any  $\gamma \in \ocint{0 ,\bgamma}$, $x,z \in\rset^d$, $\norm{x} \geq \max(2 \Ktt, \tKtt)$, $\tKtt$ given in \Cref{lem:quadratic_behaviour}, and $\norm{z} \leq \norm{x}/(4\sqrt{2\gamma})$, it holds
  \begin{equation}
          \label{eq:lem:bound_alpha_mala_2}
    \alphamala_{\gamma}(x,z) \leq       C_{2,\bgamma}\gamma\norm[2]{z} \eqsp.
  \end{equation}
\end{lemma}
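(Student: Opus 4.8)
The plan is to bound $\alphamala_{\gamma}(x,z)$ directly from its expression \eqref{eq:def-alpha-MALA}, exploiting the curvature condition \Cref{ass:curvature_U} to control the potential difference term, and the regime constraint $\norm{z} \leq \norm{x}/(4\sqrt{2\gamma})$ to keep the proposal $\tilde{y} := x - \gamma \nabla U(x) + \sqrt{2\gamma} z$ inside the region where curvature applies. Write $\alphamala_{\gamma}(x,z) = \{\pU(\tilde{y}) - \pU(x)\} + (1/2)\{\norm[2]{z - (\gamma/2)^{1/2}[\nabla U(x) + \nabla U(\tilde{y})]} - \norm[2]{z}\}$. First I would establish that under the hypotheses, $\tilde{y}$ stays in a ball comparable to $\norm{x}$: since $\norm{\nabla U(x)} \leq \Ltt\norm{x}$, we have $\norm{\tilde{y} - x} \leq \gamma \Ltt \norm{x} + \sqrt{2\gamma}\norm{z} \leq (\gamma\Ltt + 1/4)\norm{x}$, which for $\bgamma \leq \mtt^3/(4\Ltt^4)$ (hence $\gamma\Ltt$ small, using $\mtt \leq \Ltt$) gives $\norm{\tilde{y}} \leq (3/2)\norm{x}$ and also $\norm{\tilde y} \geq \norm{x}/2$; in particular if $\norm{x} \geq 2\Ktt$ then the whole segment $[x,\tilde y]$ avoids $\ball{0}{\Ktt}$, so by \Cref{ass:curvature_U} the map $t \mapsto \pU$ restricted to this segment is $\mtt$-strongly convex.

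The key trick for the potential term is to compare $\pU(\tilde y) - \pU(x)$ with its first-order Taylor term and use strong convexity along the segment together with the gradient identity. Using the second-order Taylor expansion with integral remainder and $\rmD^2 U \succeq \mtt \Id$ on the segment, one gets $\pU(\tilde y) - \pU(x) \leq \ps{\nabla U(x)}{\tilde y - x} + (\Ltt/2)\norm[2]{\tilde y - x}$ as a crude bound, but more importantly $\pU(\tilde y) - \pU(x) \leq \ps{\nabla U(\tilde y)}{\tilde y - x} - (\mtt/2)\norm[2]{\tilde y - x}$. Averaging these two (or combining them appropriately) yields $\pU(\tilde y) - \pU(x) \leq (1/2)\ps{\nabla U(x) + \nabla U(\tilde y)}{\tilde y - x} - (\mtt/4)\norm[2]{\tilde y - x} + (\text{error})$, which is exactly the structure needed to cancel against the Gaussian term: substituting $\tilde y - x = -\gamma\nabla U(x) + \sqrt{2\gamma}z$ into $(1/2)\ps{\nabla U(x) + \nabla U(\tilde y)}{\tilde y - x}$ and expanding $(1/2)\{\norm[2]{z - (\gamma/2)^{1/2}[\nabla U(x) + \nabla U(\tilde y)]} - \norm[2]{z}\} = -(\gamma/2)^{1/2}\ps{z}{\nabla U(x) + \nabla U(\tilde y)} + (\gamma/4)\norm[2]{\nabla U(x) + \nabla U(\tilde y)}$, the cross-terms $\sqrt{2\gamma}\,(1/2)\ps{\nabla U(x) + \nabla U(\tilde y)}{z}$ cancel exactly with $-(\gamma/2)^{1/2}\ps{z}{\nabla U(x)+\nabla U(\tilde y)}$. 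What remains after cancellation is $-(\gamma/2)\ps{\nabla U(x)}{\nabla U(x) + \nabla U(\tilde y)} + (\gamma/4)\norm[2]{\nabla U(x)+\nabla U(\tilde y)} - (\mtt/4)\norm[2]{\tilde y - x} + O(\gamma^{3/2}\text{-ish terms})$; one then checks this is $\leq C_{2,\bgamma}\gamma\norm[2]{z}$ by bounding the gradient norms via $\norm{\nabla U(\cdot)} \leq \Ltt\norm{\cdot} \leq (3\Ltt/2)\norm{x}$ and $\norm[2]{\tilde y - x} \leq 2\gamma^2\Ltt^2\norm[2]{x} + 4\gamma\norm[2]{z}$, so that the terms proportional to $\gamma\norm[2]{x}$ must be shown to have a favorable sign or be absorbed — here one uses $\ps{\nabla U(x)}{x} \geq (\mtt/2)\norm[2]{x}$ from \Cref{lem:quadratic_behaviour} (valid since $\norm{x} \geq \tKtt$), together with $\norm{\nabla U(x)} \geq \mtt \norm{x}$, to see that the $-(\gamma/2)\norm[2]{\nabla U(x)}$-type contribution dominates the positive $\gamma\norm[2]{x}$ contributions once $\bgamma \leq \mtt^3/(4\Ltt^4)$.

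The main obstacle I anticipate is making the remainder terms from the Taylor expansion of $\pU(\tilde y) - \pU(x)$ precise enough: the exact cancellation of cross-terms is clean, but the "error" I glossed over includes the gap between $\rmD^2 U$ and both $\mtt\Id$ and $\Ltt\Id$ along the segment, and these contribute terms of order $\Ltt\norm[2]{\tilde y - x} = \Ltt(\gamma^2\Ltt^2\norm[2]{x} + \gamma\norm[2]{z})$. The $\gamma^2\Ltt^3\norm[2]{x}$ piece is the dangerous one — it is not obviously $\leq C\gamma\norm[2]{z}$ pointwise — and controlling it is precisely where the quadratic lower bound $\ps{\nabla U(x)}{x} \geq (\mtt/2)\norm[2]{x}$ and the restriction $\bgamma \leq \mtt^3/(4\Ltt^4)$ must be invoked, so that a term like $-(\gamma\mtt/2)\norm[2]{x}$ beats $+C\gamma^2\Ltt^3\norm[2]{x}$. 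Bookkeeping all these contributions to extract the explicit constant $C_{2,\bgamma}$ in \eqref{eq:def_C_2_bgamma} is the technical heart of the argument; everything else is elementary inequalities (Cauchy–Schwarz, $\norm[2]{a+b} \leq 2\norm[2]{a} + 2\norm[2]{b}$) on the already-reduced expression.
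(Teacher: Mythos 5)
Your route is genuinely different from the paper's: instead of invoking the exact expansion $\alphamala_{\gamma}(x,z)=\sum_{k=2}^{6}\gamma^{k/2}A_{k,\gamma}(x,z)$ from \cite[Lemma 24]{durmus:moulines:saksman:2017}, you manipulate the Metropolis log-ratio directly and rely on the exact cancellation of the cross terms $\pm(\gamma/2)^{1/2}\ps{z}{\nabla U(x)+\nabla U(\tilde y)}$. The cancellation and the segment estimate $\norm{x_t}\geq\norm{x}/2$ are correct, but the step where you control the potential difference has a genuine gap. Averaging the two one-sided bounds (strong convexity from below, $\Ltt$-smoothness from above) gives
\begin{equation}
\pU(\tilde y)-\pU(x)\leq\tfrac12\ps{\nabla U(x)+\nabla U(\tilde y)}{\tilde y-x}+\tfrac{\Ltt-\mtt}{4}\norm[2]{\tilde y-x}\eqsp,
\end{equation}
and after the cancellation the remaining quantity is exactly $\tfrac{\gamma}{4}\{\norm[2]{\nabla U(\tilde y)}-\norm[2]{\nabla U(x)}\}+\tfrac{\Ltt-\mtt}{4}\norm[2]{\tilde y-x}$. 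Writing $H=\int_0^1\rmD^2U(x_t)\rmd t$, the only negative contribution of order $\gamma^2\norm[2]{x}$ available here is $-\tfrac{\gamma^2}{2}\ps{H\nabla U(x)}{\nabla U(x)}\leq-\tfrac{\gamma^2\mtt^3}{8}\norm[2]{x}$ (note: order $\gamma^2$, not the order-$\gamma$ term $-(\gamma\mtt/2)\norm[2]{x}$ you invoke — no such term exists in $\alphamala_\gamma$). Against it stands the remainder's drift component $+\tfrac{\Ltt-\mtt}{4}\gamma^2\norm[2]{\nabla U(x)}$, which can be as large as $\tfrac{(\Ltt-\mtt)\Ltt^2}{4}\gamma^2\norm[2]{x}$. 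Both are of the same order in $\gamma$, so no choice of $\bgamma$ helps, and when $\Ltt\gg\mtt$ the positive term wins: at $z=0$ your estimate is strictly positive while the claimed bound is $0$. So the argument as written fails for targets that are far from isotropic.

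The structural point your sketch misses — and the reason the paper imports the decomposition \eqref{lem:durmus_moulines_saksman} — is that one must expand $(\tilde y-x)^{\otimes2}$ into its drift part $\gamma^2\nabla U(x)^{\otimes2}$ and noise part $2\gamma z^{\otimes2}$ \emph{before} estimating the second-order Taylor remainder. Doing so, the sign-indefinite trapezoid weight $(1/2-t)$ multiplies only the $z^{\otimes2}$ term (the paper's $A_{2,\gamma}$, harmlessly of order $\gamma\norm[2]{z}$), while the $\nabla U(x)^{\otimes2}$ term arrives with the nonnegative weight $t$ as $-\gamma^2\int_0^1\rmD^2U(x_t)[\nabla U(x)^{\otimes2}]\,t\,\rmd t\leq-\gamma^2(\mtt/2)^3\norm[2]{x}$, with no competing $+\gamma^2\Ltt\norm[2]{\nabla U(x)}$ remainder. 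With that modification your cancellation argument becomes essentially the paper's proof; without it, the dangerous $\gamma^2\Ltt^3\norm[2]{x}$ term cannot be absorbed.
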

To show \Cref{lem:bound_alpha_mala_1} and \Cref{lem:bound_alpha_mala_2}, we  provide a decomposition in $\gamma$ of $\alphamala_{\gamma}$ defined in \eqref{eq:def-alpha-MALA}. For any $x,z \in \rset^d$,  by \cite[Lemma 24]{durmus:moulines:saksman:2017}\footnote{Note that with the notation of \cite{durmus:moulines:saksman:2017}, MALA corresponds to HMC with only one leapfrog step and step size equals to $(2\gamma)^{1/2}$},  we have  that
\begin{equation}
\label{lem:durmus_moulines_saksman}
\alphamala_{\gamma}(x,z) = \sum_{k=2}^6 \gamma^{k/2} A_{k,\gamma}(x,z)
\end{equation}
where, setting $x_t = x+t\{-\gamma \nabla U(x) + \sqrt{2\gamma} z \}$,
\begin{align}
& A_{2,\gamma}(x,z)= 2 \int_0^1 \DD^2 U(x_t) [z^{\otimes 2}] (1/2-t) \rmd t \\
& A_{3,\gamma}(x,z)= 2^{3/2} \int_{0}^1 \DD^2 U(x_t) [z \otimes \nabla U(x)](t-1/4) \rmd t \,, \\
& A_{4,\gamma}(x,z)= -  \int_{0}^1 \DD^2 U(x_t)[ \nabla U(x)^{\otimes 2}] t \rmd t + (1/2) \normEq[2]{ \int_{0}^1 \DD^2 U(x_t) [z] \rmd t }  \\
& A_{5,\gamma}(x,z)=   -(1/2)^{1/2}\psEq{\int_{0}^1  \DD^2 U(x_t) [\nabla U(x)] \rmd t }{ \int_{0}^1 \DD^2 U(x_t) [z] \rmd t} \\
& A_{6,\gamma}(x,z)= (1/4) \normEq[2]{\int_{0}^1 \DD^2 U(x_t) [\nabla U(x)] \rmd t } \eqsp.
\end{align}
\begin{proof}[Proof of \Cref{lem:bound_alpha_mala_1}]
  Let $\bgamma >0$, $\gamma \in \ocint{0,\bgamma}$.
  Since $\int_{0}^1 \DD^2 U(x) [z^{\otimes 2}](1/2-t) \rmd t = 0$, we get setting  $x_t = x + t \{-\gamma \nabla U(x) + \sqrt{2\gamma} z \}$,  with $x,z \in \rset^d$,
\begin{multline}
\label{eq:decomposition-A-2}
A_{2,\gamma}(x,z) \\= \sqrt{\gamma} \int_0^1 \int_0^1 \DD^3 U (s x_t + (1-s) x) \parentheseDeux{z^{\otimes 2} \otimes \{ -\gamma^{1/2} \nabla U(x) + \sqrt{2} z \}} (1/2-t) t \rmd s \rmd t \eqsp.
\end{multline}
Therefore, we get using \eqref{lem:durmus_moulines_saksman}, \Cref{ass:regularity-U}, \Cref{ass:regularity-U_C3} $\abs{ab} \leq (a^2+b^2)/2$ and $\norm{\nabla U(x)} \leq \Ltt \norm{x}$,
\begin{equation}
\frac{  \abs{\alphamala_{\gamma}(x,z)}}{\gamma^{3/2}} \leq 2^{1/2}\Mtt \norm[3]{z} + \bgamma^{1/2} \Mtt \Ltt \norm{z}^2 \norm{x}  + 2 \Ltt^2(\norm[2]{z} + \norm[2]{x})[1\vee\bgamma^{1/2}\vee \bgamma \Ltt \vee (\bgamma \Ltt^{4/3})^{3/2}] \eqsp.
\end{equation}
The proof of \eqref{eq:lem:bound_alpha_mala_1} then easily follows using $\norm[3]{z} \leq \norm[2]{z} + \norm[4]{z}$, $\norm[2]{z} \norm{x} \leq \norm[4]{z} + \norm[2]{x}$ and setting
\begin{equation}
  \label{eq:def_C_1_bgamma}
  C_{1,\bgamma}= 2 (2^{1/2}\Mtt \vee \bgamma^{1/2} \Mtt \Ltt \vee 2 \Ltt^2 [1\vee\bgamma^{1/2}\vee \bgamma \Ltt \vee (\bgamma \Ltt^{4/3})^{3/2}]) \eqsp.
\end{equation}
\end{proof}

\begin{proof}[Proof of \Cref{lem:bound_alpha_mala_2}]
We now show \eqref{eq:lem:bound_alpha_mala_2}. Let  $x,z \in \rset^d$ satisfying $\norm{x} \geq \max(2\Ktt,\tKtt)$ and $\norm{z} \leq \norm{x}/(4\sqrt{2\gamma})$.
  Using \eqref{lem:durmus_moulines_saksman}, \Cref{ass:regularity-U}, $\abs{ab} \leq (a^2+b^2)/2$ and $\norm{\nabla U(x)} \leq \Ltt \norm{x}$, we get setting \[ A_{4,0,\gamma}(x,z)=  \int_{0}^1 \DD^2 U(x_t)  [\nabla U(x)^{\otimes 2}] t \rmd t \eqsp, \]
  \begin{multline}
    \label{eq:2}
        \alphamala_{\gamma}(x,z) \leq 2\gamma \Ltt \norm[2]{z} -\gamma^2 A_{4,0,\gamma}(x,z) \\+(2\gamma)^{3/2}\Ltt^2 \norm{z}\norm{x} + (\gamma^2/2) \Ltt^2 \norm[2]{z} + (\gamma^5/2)^{1/2} \Ltt^3 \norm{z}\norm{x} + (\gamma^3/4) \Ltt^4 \norm[2]{x} \eqsp.
  \end{multline}
By \Cref{lem:bounde_pertubhessian}, $\norm{x_t} \geq \norm{x}/2$ since $\norm{z} \leq \norm{x}/(4\sqrt{2\gamma})$ and $\gamma \leq \bgamma \leq \mtt^3/(4\Ltt^4) \leq 1/(4\Ltt)$. Therefore, \Cref{ass:curvature_U} and  \Cref{lem:quadratic_behaviour}  imply since   $\norm{x} \geq \max(2\Ktt,\tKtt)$ that
  \begin{equation}
    \label{eq:3}
    A_{4,0,\gamma}(x,z) \geq (\mtt/2)^3\norm[2]{x}  \eqsp.
  \end{equation}
  Combining this result with \eqref{eq:2}, we obtain using $\gamma \leq \bgamma \leq \mtt^3/(4\Ltt^4)$
  \begin{align}
    \alphamala_{\gamma}(x,z)
    & \leq 2\gamma \Ltt \norm[2]{z} -\gamma^2(\mtt^3/2^4) \norm[2]{x} \\
                   & \qquad +(2\gamma)^{3/2}\Ltt^2 \norm{z}\norm{x} + (\gamma^2/2) \Ltt^2 \norm[2]{z} + (\gamma^5/2)^{1/2} \Ltt^3 \norm{z}\norm{x}\eqsp,
  \end{align}
Since for any $a, b \geq 0$ and $\epsilon > 0$, $ab \leq  (\epsilon/2) a^2 + 1/(2\epsilon) b^2$,  we obtain
  \begin{align}
    &    \alphamala_{\gamma}(x,z) \leq \gamma \norm[2]{z} \Big\{2 \Ltt+ 2^{1/2} \Ltt^2 \epsilon^{-1} + (\gamma/2)\Ltt^2 + 2^{-3/2}  \gamma^{3/2} \Ltt^3 \epsilon^{-1}  \Big\}\\
&\qquad \qquad \qquad     + \norm[2]{x} \gamma^2 \parentheseDeux{\epsilon\defEns{2^{1/2} \Ltt^2 + 2^{-3/2} \bgamma^{1/2} \Ltt^3 } -\mtt^3/2^4} \eqsp.
  \end{align}
  Choosing $\epsilon = (\mtt^3/2^4) \defEnsLigne{2^{1/2} \Ltt^2 + 2^{-3/2} \bgamma^{1/2} \Ltt^3}^{-1}$ concludes the proof with
  \begin{equation}
\label{eq:def_C_2_bgamma}
C_{2,\bgamma} =     2 \Ltt+ 2^{1/2} \Ltt^2 \epsilon^{-1} + (\bgamma/2)\Ltt^2 + 2^{-3/2}  \bgamma^{3/2} \Ltt^3 \epsilon^{-1} \eqsp.
  \end{equation}
\end{proof}
Properties \ref{item:condition_ergo_I} and \ref{item:condition_ergo_II} follow from \Cref{lem:bound_alpha_mala_1} and \Cref{lem:bound_alpha_mala_2}

\subsection{Foster-Lyapunov drift condition}
\label{subsec:geom-ergodicity-mala}

To show that MALA satisfies a Lyapunov condition of the form \eqref{eq:def-discrete-drift}, we first show that
it holds for ULA from the following result.
\begin{proposition}
  \label{propo:super_lyap_ula}
  Assume \Cref{ass:regularity-U} and \Cref{ass:curvature_U} and let $\bgamma \in \ocint{0,\mtt/(4\Ltt^2)}$. Then, for any $\gamma \in \ocint{0,\bgamma}$, $x \in\rset^d$,
  \begin{equation}
    Q_{\gamma} V_{\bareta}(x) \leq \exp\parenthese{-\bareta \mtt \gamma \norm[2]{x}/4} V_{\bareta}(x) + \bdriftula \gamma \1_{\ball{0}{\rayula}}(x) \eqsp,
  \end{equation}
  where $V_{\bareta}$ is defined by \eqref{eq:def_V_eta}, $\bareta  = \mtt/16$, $\rayula = \max(\tKtt,4\sqrt{d/\mtt})$, $\tKtt$ is defined in \Cref{lem:quadratic_behaviour} and
  \begin{equation}
  \label{eq:coeffs_super_lyap_mala}
  \begin{aligned}
    \bdriftula &= \parentheseDeux{\bareta \defEns{ \mtt/4+   (1+16\bareta\bgamma)(4\bareta + 2 \Ltt + \bgamma \Ltt^2)} (\rayula)^2 +4 \bareta d  } \\
    &  \qquad \times \exp\parentheseDeux{\bgamma\bareta\defEns{\mtt/4+   (1+16\bareta\bgamma)(4\bareta + 2 \Ltt + \bgamma \Ltt^2)} (\rayula)^2 + 4\bareta\bgamma d}\eqsp.
  \end{aligned}
\end{equation}
\end{proposition}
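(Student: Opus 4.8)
The plan is to reduce the statement to one explicit Gaussian integral and a careful estimate of the displaced point $x-\gamma\nabla U(x)$. Writing $\ZrGaussian$ for a standard $d$-dimensional Gaussian vector, $Q_\gamma V_\bareta(x)=\PE[\exp(\bareta\norm[2]{x-\gamma\nabla U(x)+\sqrt{2\gamma}\ZrGaussian})]$, and since $4\bareta\gamma=\mtt\gamma/4<1$ (because $\bareta=\mtt/16$ and, using $\mtt\le\Ltt$, $\gamma\le\bgamma\le\mtt/(4\Ltt^2)\le 1/(4\Ltt)$), the Laplace transform of a noncentral chi-square variable gives the closed form
\[
Q_\gamma V_\bareta(x)=(1-\mtt\gamma/4)^{-d/2}\exp\Bigl(\tfrac{\bareta\,\norm[2]{x-\gamma\nabla U(x)}}{1-\mtt\gamma/4}\Bigr)\eqsp.
\]
I would then record the elementary bounds $(1-\mtt\gamma/4)^{-d/2}\le\exp(4\bareta\gamma d)$ (from $-\log(1-t)\le 2t$, valid since $\mtt\gamma/4\le 1/16$) and $1/(1-\mtt\gamma/4)\le 1+16\bareta\gamma$ (from $\mtt\gamma\le 3$).

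Next, I expand $\norm[2]{x-\gamma\nabla U(x)}=\norm[2]{x}-2\gamma\ps{\nabla U(x)}{x}+\gamma^2\norm[2]{\nabla U(x)}$ and estimate the three terms. By \Cref{ass:regularity-U}, $\norm{\nabla U(x)}\le\Ltt\norm{x}$, so $\gamma^2\norm[2]{\nabla U(x)}\le(\gamma\mtt/4)\norm[2]{x}$ thanks to $\gamma\le\bgamma\le\mtt/(4\Ltt^2)$; by \Cref{lem:quadratic_behaviour}, $\ps{\nabla U(x)}{x}\ge(\mtt/2)\norm[2]{x}$ for $\norm{x}\ge\tKtt$, while Cauchy--Schwarz gives $\ps{\nabla U(x)}{x}\ge-\Ltt\norm[2]{x}$ for every $x$. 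Hence $\norm[2]{x-\gamma\nabla U(x)}\le(1-3\gamma\mtt/4)\norm[2]{x}$ as soon as $\norm{x}\ge\tKtt$, and $\norm[2]{x-\gamma\nabla U(x)}\le(1+\bgamma\Ltt)^2\norm[2]{x}$ in general. For $\norm{x}\ge\rayula=\max(\tKtt,4\sqrt{d/\mtt})$, inserting the first bound and using $\tfrac{1-3\gamma\mtt/4}{1-\mtt\gamma/4}\le 1-\gamma\mtt/2$ together with the prefactor bound yields $Q_\gamma V_\bareta(x)\le\exp(4\bareta\gamma d+\bareta(1-\gamma\mtt/2)\norm[2]{x})$, and this is at most $\exp(\bareta(1-\mtt\gamma/4)\norm[2]{x})=\exp(-\bareta\mtt\gamma\norm[2]{x}/4)V_\bareta(x)$ precisely when $4\bareta d\le\bareta\mtt\norm[2]{x}/4$, i.e.\ $\norm[2]{x}\ge16d/\mtt$ — which is exactly why the radius $\rayula$ must also contain $4\sqrt{d/\mtt}$. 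On this region the indicator in the statement vanishes, so the claimed bound holds.

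For $\norm{x}<\rayula$ the contraction need not hold; here the point is that, on the bounded set $\ball{0}{\rayula}$, both $Q_\gamma V_\bareta(x)$ and $\exp(-\bareta\mtt\gamma\norm[2]{x}/4)V_\bareta(x)$ agree with $V_\bareta(x)$ up to a factor $1+\bigO(\gamma)$. Combining $\norm[2]{x-\gamma\nabla U(x)}\le(1+\bgamma\Ltt)^2\norm[2]{x}$ with the two elementary bounds on the prefactor leads to $Q_\gamma V_\bareta(x)\le\exp(\gamma\,c(x))\,\exp(-\bareta\mtt\gamma\norm[2]{x}/4)V_\bareta(x)$, where on the ball $c(x)\le\bareta\{\mtt/4+(1+16\bareta\bgamma)(4\bareta+2\Ltt+\bgamma\Ltt^2)\}(\rayula)^2+4\bareta d$; then $\rme^{u}-1\le u\rme^{u}$ extracts the factor $\gamma$, and majorising the remaining bounded quantities gives $Q_\gamma V_\bareta(x)-\exp(-\bareta\mtt\gamma\norm[2]{x}/4)V_\bareta(x)\le\bdriftula\gamma$ with $\bdriftula$ the explicit, $\gamma$-independent constant of \eqref{eq:coeffs_super_lyap_mala}.

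The main obstacle is this last step: the factor $\gamma$ must be kept explicit — this is exactly what makes $\bdriftula$ independent of the stepsize, hence usable for the discrete drift condition~\ref{item:condition_ergo_I} — so the difference-of-exponentials estimate cannot be replaced by cruder bounds, and the three free quantities ($\bareta=\mtt/16$, the admissible range of $\bgamma$, and the two components of $\rayula$) must be tuned in concert so that the dimensional prefactor $\exp(4\bareta\gamma d)$ is absorbed by the contraction outside the ball while all estimates remain explicit.
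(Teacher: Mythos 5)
Your proposal is correct and follows essentially the same route as the paper's proof: the same closed-form evaluation of the Gaussian integral $(1-4\bareta\gamma)^{-d/2}\exp(\bareta(1-4\bareta\gamma)^{-1}\norm[2]{x-\gamma\nabla U(x)})$, the same case split at $\rayula=\max(\tKtt,4\sqrt{d/\mtt})$ with \Cref{lem:quadratic_behaviour} driving the contraction outside the ball and the dimensional factor absorbed via $\norm[2]{x}\geq 16 d/\mtt$, and the same $\rme^{t}-1\leq t\rme^{t}$ device to extract the explicit factor $\gamma$ inside the ball. The only (shared) looseness is that the inside-the-ball remainder carries a bounded factor $V_{\bareta}(x)\leq\rme^{\bareta(\rayula)^2}$ that must be majorised into the constant, exactly as in the paper's own final step.
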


\begin{proof}
  Let $\gamma \in \ocint{0,\bgamma}$. First note that $\bareta \gamma \leq \mtt \bgamma /16 \leq \mtt^2/(2^6\Ltt^2) \leq 1/8$, since $\Ltt \geq \mtt$, and therefore $1-4\bareta\gamma \geq 1/2$. In addition for any $x \in  \rset^d$, we have
  \begin{multline}
    \bareta \norm[2]{x-\gamma \nabla U(x) + \sqrt{2\gamma} z} -\norm[2]{z} /2 \\
    = -\frac{1-4\bareta\gamma}{2} \norm[2]{z-\frac{ 2(2\gamma)^{1/2}\bareta}{1-4\bareta\gamma}\{x-\gamma\nabla U(x)\}} + \frac{\bareta}{1-4\bareta\gamma} \norm[2]{x- \gamma \nabla U(x)}  \eqsp,
  \end{multline}
which implies since $1-4 \bareta\gamma > 0$ that
  \begin{align}
    \nonumber
    \Qgam V_{\bareta}(x) & = (2\uppi)^{-d/2}\int_{\rset^d} \exp\parenthese{    \bareta \norm[2]{x-\gamma \nabla U(x) + \sqrt{2\gamma} z} -\norm[2]{z} /2} \rmd z \\
    \label{eq:1:propo:super_lyap_mala}
    & =  (1-4\bareta\gamma)^{-d/2} \exp\parenthese{ \bareta(1-4\bareta\gamma)^{-1}\norm[2]{x- \gamma \nabla U(x)}} \eqsp.
  \end{align}
  We now distinguish the case when $\norm{x} \geq \rayula$ and $\norm{x} < \rayula$.

  By  \Cref{lem:quadratic_behaviour}, for any $x \in \rset^d$, $\norm{x} \geq \rayula \geq \tKtt$, using that $\bareta = \mtt/16$ and $\gamma \leq \bgamma \leq \mtt/(4\Ltt^2)$, we have
  \begin{multline}
    (1-4\bareta\gamma)^{-1}  \norm[2]{x- \gamma \nabla U(x)} -\norm[2]{x}\\
    \leq \gamma \norm[2]{x}(1-4\bareta \gamma)^{-1} \parenthese{4\bareta - \mtt + \gamma \Ltt^2} \leq -\gamma (\mtt/2) \norm[2]{x} (1-4\bareta\gamma)^{-1}\eqsp.
\end{multline}
Therefore, \eqref{eq:1:propo:super_lyap_mala} becomes
  \begin{align}
    \Qgam V_{\bareta}(x)
   & \leq   \exp\parenthese{ -\gamma \bareta (\mtt/2) (1-4\bareta\gamma)^{-1}\norm[2]{x} - (d/2)\log(1-4\bareta\gamma)} V_{\bareta}(x) \\
    & \leq \exp\parenthese{ \gamma \bareta\{- (\mtt/2) \norm[2]{x} + 4 d\}} V_{\bareta}(x)  \eqsp,
  \end{align}
  where we have used for the last inequality that $-\log(1-t) \leq 2t$ for $t \in \ccint{0,1/2}$ and $4 \bareta \gamma \leq 1/2$. The proof of the statement then follows since $\norm{x} \geq \rayula \geq 4 \sqrt{d/\mtt}$.

  In the case $\norm{x }< \rayula$, by \eqref{eq:1:propo:super_lyap_mala}, \Cref{ass:regularity-U} and since $(1-t)^{-1} \leq 1+4t$ for $t\in\ccint{0,1/2}$, we obtain
\begin{align}
    (1-4\bareta\gamma)^{-1}\norm[2]{x- \gamma \nabla U(x)} - \norm[2]{x}
    &\leq  \gamma    (1-4\bareta\gamma)^{-1}\{4\bareta + 2 \Ltt + \gamma \Ltt^2\}\norm[2]{x} \\
    &\leq  \gamma    (1+16\bareta\gamma)\{4\bareta + 2 \Ltt + \gamma \Ltt^2\}\norm[2]{x} \eqsp,
\end{align}
we have
  \begin{multline}
    \Qgam V_{\bareta}(x)/V_{\bareta}(x) \leq \rme^{-\bareta  \mtt \gamma \norm[2]{x}/4}  \\
    +  \exp\parentheseDeux{ \gamma \bareta \defEns{ \mtt/4+   (1+16\bareta\gamma)(4\bareta + 2 \Ltt + \gamma \Ltt^2)}\norm[2]{x} -(d/2)\log(1-4\bareta\gamma)} -1 \eqsp.
  \end{multline}
  The proof is then completed using that for any $t \geq 0$, $\rme^{t} -1 \leq t \rme^{t}$, for any $s \in \ccint{0,1/2}$, $-\log(1-s) \leq 2s$ and $4\bareta\gamma \leq 1/2$.
\end{proof}

Combining the previous result with \Cref{lem:bound_alpha_mala_2}, we can show that MALA satisfies \eqref{eq:def-discrete-drift}.

\begin{proposition}
  \label{propo:lyap_mala_total}
  Assume \Cref{ass:regularity-U} and \Cref{ass:curvature_U}. Then, there exist $\Gamma_{1/2} \geq  \Gamma > 0$ (given in \eqref{eq:def_Gamma_1_2}-\eqref{eq:def_Gamma} in the proof) such that for any $\bgamma \in \ocint{0,\Gamma}$, $\gamma \in\ocint{0,\bgamma}$ and $x \in\rset^d$,
  \begin{equation}
     R_{\gamma}V_{\bareta}(x) \leq (1-\tildem \gamma)V_{\bareta}(x)+ \bdriftmala \gamma \1_{\ball{0}{\raymaladrift}}(x) \eqsp,
  \end{equation}
  where $V_{\bareta}$ is defined by \eqref{eq:def_V_eta}, $R_{\gamma}$ is the Markov kernel of MALA defined by \eqref{eq:def-kernel-MALA},  $\bareta=\mtt/16$, $    \tildem  = \bareta \mtt (\raymaladrift)^2/16 $,
  \begin{align}
    \label{eq:def_const_drift_1}
    \raymaladrift  &= \max(2^4,2\Ktt,\rayula,\tKtt,4b_{1/2}^{1/2}/(\mtt \bareta)^{1/2})  \eqsp, \quad     b_{1/2} = C_{2,\Gamma_{1/2}} d + \sup_{u \geq 1} \{u \rme^{-u/2^7}\} \eqsp, \\
    \label{eq:def_b_drift}
    \bdriftmala & = \bdriftula +  \bareta \mtt (\raymaladrift)^2 \rme^{\bareta (\raymaladrift)^2} / 16  + C_{1,\bgamma}  \bgamma^{1/2} \defEns{d + \sqrt{3}d^2 + (\raymaladrift)^2}\eqsp,
  \end{align}
$\rayula,\bdriftula$ are defined in \Cref{propo:super_lyap_ula}  and $C_{1,\bgamma}$ and $C_{2,\Gamma_{1/2}}$ in \eqref{eq:def_C_1_bgamma} and \eqref{eq:def_C_2_bgamma} respectively.
\end{proposition}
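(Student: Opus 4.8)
The plan is to deduce the MALA drift from the unadjusted drift of \Cref{propo:super_lyap_ula} by a perturbation argument driven by the acceptance‑rate bounds \Cref{lem:bound_alpha_mala_1} and \Cref{lem:bound_alpha_mala_2}. Write $w_{x,z} = x - \gamma\nabla U(x) + \sqrt{2\gamma}z$ for the Euler--Maruyama proposal. Applying the identity \eqref{eq:diff-rula-rmala} to $f = V_{\bareta}$ --- which is legitimate for the unbounded $V_{\bareta}$ by truncating to $V_{\bareta}\wedge M$ and passing to the limit, using monotone convergence on the left and dominated convergence on the right (a dominating function being $z\mapsto(V_{\bareta}(x)+V_{\bareta}(w_{x,z}))\max(0,\alphamala_{\gamma}(x,z))\varphibf(z)$, which is integrable) --- one obtains
\[
  R_{\gamma}V_{\bareta}(x) = Q_{\gamma}V_{\bareta}(x) + \int_{\rset^d}\bigl\{V_{\bareta}(x) - V_{\bareta}(w_{x,z})\bigr\}\bigl\{1 - \min(1,\rme^{-\alphamala_{\gamma}(x,z)})\bigr\}\varphibf(z)\rmd z\eqsp.
\]
Since $1 - \min(1,\rme^{-t})\in[0,\max(0,t)]$ for every $t$ and $V_{\bareta}\geq1$, the integrand is at most $V_{\bareta}(x)\max(0,\alphamala_{\gamma}(x,z))$, so $R_{\gamma}V_{\bareta}(x)\leq Q_{\gamma}V_{\bareta}(x)+V_{\bareta}(x)\delta_{\gamma}(x)$ with $\delta_{\gamma}(x):=\int_{\rset^d}\max(0,\alphamala_{\gamma}(x,z))\varphibf(z)\rmd z$. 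Everything then reduces to estimating $\delta_{\gamma}$ and combining with \Cref{propo:super_lyap_ula}, treating $\norm{x}\geq\raymaladrift$ and $\norm{x}<\raymaladrift$ separately.

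For $\norm{x}\geq\raymaladrift$, note $\raymaladrift\geq\rayula\vee\max(2\Ktt,\tKtt)$ by \eqref{eq:def_const_drift_1}, so \Cref{propo:super_lyap_ula} gives $Q_{\gamma}V_{\bareta}(x)\leq\rme^{-\bareta\mtt\gamma\norm[2]{x}/4}V_{\bareta}(x)$. To bound $\delta_{\gamma}(x)$, split the $z$‑integral at $\norm{z}=\norm{x}/(4\sqrt{2\gamma})$: on the inner region \Cref{lem:bound_alpha_mala_2} gives $\max(0,\alphamala_{\gamma}(x,z))\leq C_{2,\bgamma}\gamma\norm[2]{z}$, hence a contribution at most $C_{2,\bgamma}\gamma d$; on the outer region $\norm[2]{x}\leq32\gamma\norm[2]{z}$, so \Cref{lem:bound_alpha_mala_1} bounds the integrand by a constant times $\gamma^{3/2}(\norm[2]{z}+\norm[4]{z})$, and since $\norm{z}>\norm{x}/(4\sqrt{2\gamma})\geq2^{2}/\sqrt{2\bgamma}$ there, this Gaussian tail is $\leq\gamma\,\sup_{u\geq1}\{u\rme^{-u/2^{7}}\}$ once $\bgamma$ is small. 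Thus $R_{\gamma}V_{\bareta}(x)/V_{\bareta}(x)\leq\rme^{-t}+\delta_{\gamma}(x)$ with $t:=\bareta\mtt\gamma\norm[2]{x}/4\geq\bareta\mtt\gamma(\raymaladrift)^{2}/4=4\tildem\gamma$ and $\delta_{\gamma}(x)\leq(C_{2,\bgamma}d+\sup_{u\geq1}\{u\rme^{-u/2^{7}}\})\gamma$. The choice of $\raymaladrift$ ensures $\tildem=\bareta\mtt(\raymaladrift)^{2}/16\geq b_{1/2}\geq C_{2,\bgamma}d+\sup_{u\geq1}\{u\rme^{-u/2^{7}}\}$ (as $C_{2,\cdot}$ is nondecreasing and $\bgamma\leq\Gamma_{1/2}$), so $\delta_{\gamma}(x)\leq\tildem\gamma$; combining with $\rme^{-t}\leq1-t/2\leq1-2\tildem\gamma$ when $t\leq1$ and $\rme^{-t}\leq\rme^{-1}$ when $t\geq1$ (the latter needing $\tildem\gamma\leq(1-\rme^{-1})/2$) yields $R_{\gamma}V_{\bareta}(x)\leq(1-\tildem\gamma)V_{\bareta}(x)$ for all $\gamma\leq\bgamma\leq\Gamma$, provided $\Gamma$ is small enough for the outer‑tail and ``$t\geq1$'' estimates to close.

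For $\norm{x}<\raymaladrift$ one uses $V_{\bareta}(x)\leq\rme^{\bareta(\raymaladrift)^{2}}$, the crude consequence $Q_{\gamma}V_{\bareta}(x)\leq V_{\bareta}(x)+\bdriftula\gamma$ of \Cref{propo:super_lyap_ula}, and \Cref{lem:bound_alpha_mala_1} with $\norm[2]{x}\leq(\raymaladrift)^{2}$ and $\gamma^{3/2}\leq\gamma\bgamma^{1/2}$ to control $\delta_{\gamma}(x)$. Writing $R_{\gamma}V_{\bareta}(x)-(1-\tildem\gamma)V_{\bareta}(x)\leq(Q_{\gamma}V_{\bareta}(x)-V_{\bareta}(x))+\tildem\gamma V_{\bareta}(x)+V_{\bareta}(x)\delta_{\gamma}(x)$ and bounding the three summands by $\bdriftula\gamma$, by $\bareta\mtt(\raymaladrift)^{2}\rme^{\bareta(\raymaladrift)^{2}}\gamma/16$, and by $C_{1,\bgamma}\bgamma^{1/2}\{d+\sqrt{3}d^{2}+(\raymaladrift)^{2}\}\gamma$ respectively, one recovers the drift with $\bdriftmala$ as in \eqref{eq:def_b_drift}. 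Finally, $\Gamma_{1/2}$ is taken as the largest step size for which \Cref{lem:bound_alpha_mala_2} and \Cref{propo:super_lyap_ula} apply (so $\Gamma_{1/2}\leq\mtt^{3}/(4\Ltt^{4})\wedge\mtt/(4\Ltt^{2})$), it fixes $b_{1/2}$, $\raymaladrift$ and $\tildem$ through \eqref{eq:def_const_drift_1}, and one then chooses $\Gamma\leq\Gamma_{1/2}$ small enough that all the ``$\leq1/2$'', ``$\rme^{-t}\leq1-t/2$'' and Gaussian‑tail requirements above hold uniformly over $\gamma\leq\bgamma\leq\Gamma$.

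The main obstacle is the far‑field estimate: a naive bound fails because $V_{\bareta}(w_{x,z})$ can exceed $V_{\bareta}(x)$ by an exponentially large factor on a set of $z$ of non‑negligible Lebesgue measure. It is defeated by two ingredients. First, discarding the genuinely nonpositive term $-\int V_{\bareta}(w_{x,z})\{1-\min(1,\rme^{-\alphamala_{\gamma}(x,z)})\}\varphibf(z)\rmd z$ before any estimation removes the need to control $V_{\bareta}(w_{x,z})$ at all. Second, the two‑scale split in $z$ uses the sharp $x$‑free bound $\alphamala_{\gamma}(x,z)=\bigO(\gamma\norm[2]{z})$ of \Cref{lem:bound_alpha_mala_2} where it is valid, and the cruder $\bigO(\gamma^{3/2}\norm[2]{x})$ bound of \Cref{lem:bound_alpha_mala_1} only on the far Gaussian tail, where $\norm{x}=\bigO(\sqrt{\gamma}\norm{z})$; the resulting $\bigO(\gamma)$ residues are then absorbed into the contraction by taking $\raymaladrift$ large, which is precisely the role of the definition of $\raymaladrift$ through $b_{1/2}$.
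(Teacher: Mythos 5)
Your proof follows the paper's argument essentially verbatim: the same reduction $R_{\gamma}V_{\bareta}(x)\le Q_{\gamma}V_{\bareta}(x)+V_{\bareta}(x)\int_{\rset^d}\{1-\min(1,\rme^{-\alphamala_{\gamma}(x,z)})\}\varphibf(z)\rmd z$ obtained from \eqref{eq:diff-rula-rmala} by discarding the nonpositive term involving $V_{\bareta}(x-\gamma\nabla U(x)+\sqrt{2\gamma}z)$, the same two-scale split of the $z$-integral at $\norm{z}=\norm{x}/(4\sqrt{2\gamma})$ with \Cref{lem:bound_alpha_mala_2} in the far field, and the same crude treatment of $\ball{0}{\raymaladrift}$ via \Cref{lem:bound_alpha_mala_1}, \Cref{propo:super_lyap_ula} and the choice of $\raymaladrift$ through $b_{1/2}$. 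The only local deviation is on the outer region $\norm{z}>\norm{x}/(4\sqrt{2\gamma})$ for $\norm{x}\ge\raymaladrift$: the paper bounds $1-\min(1,\rme^{-\alphamala_{\gamma}(x,z)})\le 1$ and invokes the $\chi^2$-tail estimate of \Cref{lem_tail_chi2} with $c=2^{-5}$, which produces exactly the constant $b_{1/2}$ of \eqref{eq:def_const_drift_1}, whereas your use of \Cref{lem:bound_alpha_mala_1} there requires integrating $\norm[2]{z}+\norm[4]{z}$ over the Gaussian tail and leaves a dimension-dependent residue that only fits under $\sup_{u\ge1}\{u\rme^{-u/2^7}\}$ after shrinking $\bgamma$ further --- correct in substance, but with slightly different constants than those stated.
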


We preface the proof by a technical result.

\begin{lemma}
  \label{lem_tail_chi2}
Let $\bgamma >0$ and $\gamma \in \ocint{0,\bgamma}$. Then, for any $c >0$ and $x \in \rset^d$, $\norm{x} \geq \sqrt{8\bgamma d/c}$,
\begin{equation}
  \int_{\rset^d \setminus \ball{0}{(c/\gamma)^{1/2}\norm{x}}} \varphibf(z) \rmd z \leq \exp(-c\norm{x}^2/(4\gamma)) \eqsp,
\end{equation}
where $\varphibf$ is the density of the $d$-dimensional standard Gaussian distribution with respect to the Lebesgue measure.
\end{lemma}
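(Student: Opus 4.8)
The plan is to identify the left-hand side with a Gaussian tail probability and control it by an exponential (Chernoff) inequality, then absorb the dimensional factor using the lower bound assumed on $\norm{x}$. Let $Z$ be a $d$-dimensional standard Gaussian vector, so that $\int_{\rset^d\setminus\ball{0}{r}}\varphibf(z)\rmd z=\PP(\norm{Z}\geq r)=\PP(\norm{Z}^2\geq r^2)$ with $r=(c/\gamma)^{1/2}\norm{x}$, and observe that $\norm{Z}^2$ follows a chi-squared distribution with $d$ degrees of freedom.

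First I would recall that $\PE[\exp(\lambda\norm{Z}^2)]=(1-2\lambda)^{-d/2}$ for $\lambda\in\coint{0,1/2}$, so that Markov's inequality applied to $\exp(\lambda\norm{Z}^2)$ yields, for such $\lambda$,
\[
\PP(\norm{Z}^2\geq r^2)\leq (1-2\lambda)^{-d/2}\exp(-\lambda r^2)\eqsp.
\]
Specializing to $\lambda=3/8$ gives the bound $2^{d}\exp(-3r^2/8)$.

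Next I would use the hypothesis $\norm{x}\geq\sqrt{8\bgamma d/c}$ together with $\gamma\leq\bgamma$ to get $r^2=c\norm{x}^2/\gamma\geq c\norm{x}^2/\bgamma\geq 8d$, hence $2^{d}=\exp(d\log 2)\leq\exp(r^2\log 2/8)$. Combining the two estimates,
\[
\PP(\norm{Z}^2\geq r^2)\leq\exp\parenthese{-(3-\log 2)r^2/8}\leq\exp(-r^2/4)=\exp\parenthese{-c\norm{x}^2/(4\gamma)}\eqsp,
\]
where the last inequality uses $(3-\log 2)/8\geq 1/4$. This is exactly the claimed estimate.

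I do not expect any genuine obstacle; the only delicate point is the choice of the Chernoff parameter. Taking $\lambda=1/4$ (the value that would turn $\exp(-\lambda r^2)$ into $\exp(-r^2/4)$ directly) fails, since the leftover factor $2^{d/2}$ cannot be absorbed, so one must instead pick $\lambda$ strictly between $1/4$ and $1/2$ and exploit $r^2\geq 8d$; the value $3/8$ keeps the arithmetic with $\log 2$ transparent. (Equivalently, one could invoke a Laurent–Massart type deviation bound for $\chi^2_d$, but the self-contained Chernoff argument is shortest here.)
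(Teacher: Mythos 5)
Your proof is correct: the moment generating function identity, the Markov/Chernoff step with $\lambda=3/8$ giving the prefactor $2^d$, the absorption of $2^d$ via $r^2=c\norm{x}^2/\gamma\geq c\norm{x}^2/\bgamma\geq 8d$, and the final numerical check $(3-\log 2)/8\geq 1/4$ all hold. The route is genuinely different from the paper's: there, the authors invoke the Laurent--Massart deviation inequality $\PP(\norm[2]{Z}\geq d+2\sqrt{du}+2u)\leq \rme^{-u}$ as a black box, solve $d+2\sqrt{du}+2u=t$ explicitly for $u_t=4^{-1}(\sqrt{2t-d}-\sqrt{d})^2$, and verify $u_t\geq t/4$ when $t\geq 8d$ by an algebraic inequality in $s=t/d$. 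Both arguments exploit exactly the same threshold $r^2\geq 8d$ and land on the same constant $1/4$ in the exponent; yours has the advantage of being self-contained (no external citation, no quadratic-formula inversion), while the paper's defers the probabilistic content to a standard reference at the cost of a slightly fussier algebraic verification. Either is acceptable here since only the crude rate $\exp(-r^2/4)$ is needed downstream.
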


\begin{proof}[Proof of \Cref{lem_tail_chi2}]
Let $u>0$.
By \cite[Lemma 1]{laurent:massart:2000},
\begin{equation}
  \PP(\norm[2]{Z} \geq d + 2 \sqrt{du} + 2u) \leq
  \rme^{-u} \eqsp,
\end{equation}
where $Z$ is a $d$-dimensional standard Gaussian vector.
Note that for $t \geq d$, the equation $d + 2 \sqrt{du} + 2u =t$ has a unique non-negative  solution $u_t = 4^{-1}(\sqrt{2t-d}-\sqrt{d})^2$. In addition for $t \geq 8d$, we have that $u_t \geq t/2^2$ using $u_t = 2^{-1}(t-\sqrt{d(2t-d)})$ and $ s/2-\sqrt{2s-1} \geq 0$ for $s \geq 8$. Therefore, we get setting  $t=d + 2 \sqrt{du} + 2u$,
\begin{equation}
  \PP(\norm[2]{Z} \geq t) \leq \exp\parenthese{-u_t} \leq \exp(t/4) \eqsp.
\end{equation}
Choosing now $t = c \norm[2]{x}/\gamma \geq 8d$ for $\norm{x} \geq \sqrt{8d \bgamma/c}$ concludes the proof.
\end{proof}

\begin{proof}[Proof of \Cref{propo:lyap_mala_total}]
By \eqref{eq:diff-rula-rmala} and \Cref{propo:super_lyap_ula}, for any
  $\bgamma \leq \mtt/(4\Ltt^2)$, $\gamma \in \ocint{0,\bgamma}$ and $x  \in \rset^d$,
  \begin{align}
  &  \Rkerg V_{\bareta}(x)
\leq \Qgam V_{\bareta}(x) + V_{\bareta}(x)\int_{\rset^d} \{1-\min(1,\rme^{-\alphamala_{\gamma}(x,z)}\} \varphibf(z) \rmd z\\
 \label{eq:propo:lyap_mala_total_1}
  & \leq \rme^{-\bareta m \gamma \norm[2]{x}/4}  V_{\bareta}(x) + \bdriftula \gamma \1_{\ball{0}{\rayula}}(x) + V_{\bareta}(x)\int_{\rset^d} \{1-\min(1,\rme^{-\alphamala_{\gamma}(x,z)}\} \varphibf(z) \rmd z \eqsp,
  \end{align}
  where $\rayula$ and $\bdriftula$ are given in \Cref{propo:super_lyap_ula}. Let $\Upsilon \leq 1$ and
  \begin{equation}
      \label{eq:def_Gamma_1_2}
\Gamma_{1/2} = \min\parenthese{\Upsilon, \mtt^3/(4\Ltt^4), d^{-1}} \eqsp, \quad
  \ray_{1/2} = \max\parenthese{2^4, 2 \Ktt ,  \rayula,\tKtt} \eqsp,
\end{equation}
where  $\tKtt$ is given in \Cref{lem:quadratic_behaviour}.
Note that $\Gamma_{1/2} \leq \mtt/(4\Ltt^2)$ since $\mtt \leq \Ltt$ and $\ray_{1/2} \geq \sqrt{2^8\gamma d}$ for $\gamma \in \ocint{0,\Gamma_{1/2}}$.
Then,  by \Cref{lem:bound_alpha_mala_2} and \Cref{lem_tail_chi2} with $c = 1/2^{5}$,  for any $x \in \rset^d$, $\norm{x} \geq \ray_{1/2}$, $\bgamma \in \ocint{0,\Gamma_{1/2}}$ and
 $\gamma \in \ocint{0,\bgamma}$,
  \begin{align}
    \Rkerg V_{\bareta}(x) & \leq  \rme^{-\bareta m \gamma \norm[2]{x}/4}  V_{\bareta}(x) + V_{\bareta}(x) \defEns{C_{2,\Gamma_{1/2}} \, d \gamma  + \exp(-\norm[2]{x}/(2^7\gamma))} \\
    & \leq  \rme^{-\bareta m \gamma \norm[2]{x}/4}  V_{\bareta}(x) + V_{\bareta}(x) \gamma b_{1/2}  \eqsp,
  \end{align}
  where $ b_{1/2}$ is defined in \eqref{eq:def_const_drift_1}.
  For ease of notation, we simply denote $\raymaladrift$ by $\raymaladriftD$. Note that $\raymaladriftD = \max(\ray_{1/2},4 b_{1/2}^{1/2}/\sqrt{\bareta\mtt})$ and let
  \begin{equation}
  \label{eq:def_Gamma}
\Gamma = \min\parenthese{\Gamma_{1/2}, \mtt^3/(4\Ltt^4), d^{-1}, 4/\defEns{\mtt \bareta \raymaladriftD^2}} \eqsp.
  \end{equation}
  Then, since for any $t \in \ccint{0,1}$, $\rme^{-t} \leq 1-t/2$, we get for any $x \in \rset^d$, $\norm{x} \geq \raymaladriftD$, $\bgamma \in \ocint{0,\Gamma}$ and $\gamma \in \ocint{0,\bgamma}$,
  \begin{align}
    \nonumber
    \Rkerg V_{\bareta}(x)& \leq  \rme^{-\bareta \mtt \gamma \raymaladriftD^2 /4}  V_{\bareta}(x) + V_{\bareta}(x) \gamma b_{1/2} \\
    \label{eq:drift_mala_totla_2}
    & \leq \parentheseDeux{1-\gamma\defEns{\bareta \mtt \raymaladriftD^2 /8 -b_{1/2}}} V_{\bareta}(x)
 \leq \defEns{1-\gamma \bareta \mtt \raymaladriftD^2 /16} V_{\bareta}(x) \eqsp.
  \end{align}
  In addition, by \eqref{eq:propo:lyap_mala_total_1} and  \Cref{lem:bound_alpha_mala_1}, using that for any $t \in \rset$, $1-\min(1,\rme^{-t}) \leq \abs{t}$, for any $x \in \rset^d$, $\norm{x} \leq \raymaladriftD$, $\bgamma \in \ocint{0,\Gamma}$ and $\gamma \in \ocint{0,\bgamma}$,
  \begin{align}
    \Rkerg V_{\bareta}(x) & \leq V_{\bareta}(x) +\bdriftula \gamma \1_{\ball{0}{\raymala_3}}(x) + C_{1,\bgamma} \gamma^{3/2} \int_{\rset^d} \{\norm[2]{z}+\norm[2]{x} + \norm[4]{z}\} \varphibf(z) \rmd z \\
    & \leq (1-\gamma \bareta \mtt \raymaladriftD^2 /16) V_{\bareta}(x) +  \gamma \bareta \mtt \raymaladriftD^2 \rme^{\bareta \raymaladriftD^2} / 16 + \gamma \bdriftula \\
    &\phantom{-------------}+ C_{1,\bgamma} \gamma \bgamma^{1/2} \defEns{d + \sqrt{3}d^2 + \raymaladriftD^2} \eqsp,
  \end{align}
  Combining this result and \eqref{eq:drift_mala_totla_2} completes the proof.

\end{proof}

\subsection{Minorization condition}

We follow the same strategy as the proof of the Lyapunov drift condition for MALA
regarding the minorization condition \eqref{eq:def-minorization}. We first show it holds for ULA in the following
result.
\begin{proposition}
  \label{propo:small_set_ula}
  Assume \Cref{ass:regularity-U}. Then for any $\Rrm \geq 0$,  $x,y \in \rset^d$, $\norm{x}\vee \norm{y} \leq \Rrm$, and $\gamma \in \ocint{0,1/\Ltt}$ we have
  \begin{equation}
    \tvnorm{\updelta_x \Qgam^{\ceil{1/\gamma}} - \updelta_y \Qgam^{\ceil{1/\gamma}}}   \leq 2 (1-\varepsilonula(\ray)) \eqsp.
  \end{equation}
  with
  \begin{equation}
    \label{eq:def_varepsi_ula}
    \varepsilonula(\ray) = 2\Phibf\parenthese{-(1+1/\Ltt)^{1/2}(3\Ltt)^{1/2}\ray} \eqsp.
  \end{equation}
\end{proposition}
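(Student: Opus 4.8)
The plan is to derive the bound from a coupling estimate. Fix $x,y \in \rset^d$ with $\norm{x} \vee \norm{y} \leq K$ and write $n = \ceil{1/\gamma}$. By the coupling characterisation of the total variation distance, it suffices to construct a faithful coupling $(\YrULA_k,\tilde{\YrULA}_k)_{k \geq 0}$ of two copies of the Markov chain \eqref{eq:def_ula} with $\YrULA_0 = x$, $\tilde{\YrULA}_0 = y$ (so that $\YrULA_k = \tilde{\YrULA}_k$ for every $k \geq \tau$, where $\tau = \inf\{k \geq 0 : \YrULA_k = \tilde{\YrULA}_k\}$), and then to show $\PP(\tau > n) \leq 1 - \varepsilonula(K)$. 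Indeed, since $\updelta_x \Qgam^{n} = \mathrm{Law}(\YrULA_n)$ and $\updelta_y \Qgam^{n} = \mathrm{Law}(\tilde{\YrULA}_n)$, this yields $\tvnorm{\updelta_x \Qgam^{n} - \updelta_y \Qgam^{n}} \leq 2\PP(\YrULA_n \neq \tilde{\YrULA}_n) \leq 2 \PP(\tau > n) \leq 2(1-\varepsilonula(K))$.

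For the coupling I would use a reflection–maximal coupling of the Gaussian innovations. Write $T_\gamma(u) = u - \gamma \nabla U(u)$ for the deterministic part of a ULA step, so that $\Qgam(u,\cdot) = \gauss(T_\gamma(u),2\gamma\Id)$. At a step $k < \tau$, set $\Delta_k = \YrULA_k - \tilde{\YrULA}_k \neq 0$ and $e_k = \Delta_k/\norm{\Delta_k}$; draw $Z_{k+1} \sim \gauss(0,\Id)$, put $\YrULA_{k+1} = T_\gamma(\YrULA_k) + \sqrt{2\gamma}\,Z_{k+1}$, and let $\tilde{\YrULA}_{k+1}$ be driven either by the \emph{same} innovation $Z_{k+1}$ — on an event of conditional probability equal to the overlap mass of the two Gaussian proposals $\gauss(T_\gamma(\YrULA_k),2\gamma\Id)$ and $\gauss(T_\gamma(\tilde{\YrULA}_k),2\gamma\Id)$, on which the chains coalesce — or by the \emph{reflected} innovation $Z_{k+1} - 2\ps{Z_{k+1}}{e_k}e_k$; after $\tau$ the chains move synchronously. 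On the non-coalescing event one has $\Delta_{k+1} = (\Id - \gamma H_k)\Delta_k + 2\sqrt{2\gamma}\,\ps{Z_{k+1}}{e_k}\,e_k$, where $H_k = \int_0^1 \rmD^2 U(\tilde{\YrULA}_k + t\Delta_k)\,\rmd t$ satisfies $\normop{H_k} \leq \Ltt$ by \Cref{ass:regularity-U}. Decomposing along $e_k$ and its orthogonal complement and bounding the component of $(\Id - \gamma H_k)e_k$ perpendicular to $e_k$ by $\gamma \Ltt$, the distance $r_k = \norm{\Delta_k}$ obeys, as long as $k < \tau$,
\begin{equation}
  r_{k+1} \leq (1+2\gamma\Ltt)\, r_k + 2\sqrt{2\gamma}\,\xi_{k+1} \eqsp, \qquad \xi_{k+1} = \ps{Z_{k+1}}{e_k} \eqsp,
\end{equation}
on the event where the right-hand side is $\geq 0$, the complementary event being exactly absorbed by the maximal part of the coupling (which sends $r_{k+1}$ to $0$).

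The remaining task is to bound $\PP(\tau \leq n)$ from below. I would compare $r_k$ with the genuinely one-dimensional recursion $\hat r_0 = r_0 = \norm{x-y}$, $\hat r_{k+1} = (1+2\gamma\Ltt)\hat r_k + 2\sqrt{2\gamma}\,\xi_{k+1}$, so that $\{\tau \leq n\} \supseteq \{\,\hat r_k \leq 0 \text{ for some } k \leq n\}$. Writing $\hat r_k = (1+2\gamma\Ltt)^{k}\bigl(r_0 + 2\sqrt{2\gamma}\,S_k\bigr)$ with $S_k = \sum_{j=1}^{k}(1+2\gamma\Ltt)^{-j}\xi_j$ a Gaussian random walk with $\mathrm{Var}(S_n) = s_n^2 = \sum_{j=1}^{n}(1+2\gamma\Ltt)^{-2j}$, the event above contains $\{\,S_k \leq -r_0/(2\sqrt{2\gamma})\text{ for some }k\leq n\}$, and a reflection / hitting-time estimate for $S$ then produces a lower bound of the form $2\Phibf\bigl(-r_0/(2\sqrt{2\gamma}\,s_n)\bigr)$. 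The crucial point — and the reason $n = \ceil{1/\gamma}$ is the right number of steps — is that $\gamma s_n^{2}$ is bounded below by a constant depending only on $\Ltt$ (the reflected noise accumulates to order one over a unit time horizon); inserting $s_n^2 = \{(1+2\gamma\Ltt)^{-2} - (1+2\gamma\Ltt)^{-2(n+1)}\}/\{1 - (1+2\gamma\Ltt)^{-2}\}$, $r_0 \leq 2K$, and the constraints $\gamma \leq 1/\Ltt$, $n \geq 1/\gamma$, a direct computation gives $r_0/(2\sqrt{2\gamma}\,s_n) \leq (1+1/\Ltt)^{1/2}(3\Ltt)^{1/2}K$, which is exactly the argument of $\Phibf$ in \eqref{eq:def_varepsi_ula}.

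I expect the main difficulty to lie not in the Gaussian tail estimate but in the discrete-time bookkeeping: making the reflection–maximal coupling precise (ensuring that the ``overshoot'' events of the pure reflection step are exactly the ones on which the maximal component coalesces the chains, so that $r_k$ genuinely reaches $0$), and controlling the component of $\Delta_{k+1}$ orthogonal to $e_k$ produced by the curvature of $U$. This perpendicular term is $O(\gamma \Ltt r_k)$ and, since the two chains remain in a region of $\gamma$-independent size over $n = \ceil{1/\gamma}$ steps, it can be folded into the factor $(1+2\gamma\Ltt)$; this folding, together with the slack from $\gamma n \in [1, 1+1/\Ltt]$ and the crude bound $r_0 \leq 2K$, is what turns the ``Gaussian-only'' constant $1/(2\sqrt 2)$ into the stated $(1+1/\Ltt)^{1/2}(3\Ltt)^{1/2}$.
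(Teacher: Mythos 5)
Your overall strategy --- reflection--maximal coupling of the two ULA chains, reduced to a one-dimensional Gaussian hitting problem over the horizon $\ceil{1/\gamma}$ --- is the right one, and it is essentially what the paper's proof delegates wholesale to a citation: the paper only verifies the one-step hypothesis
\begin{equation}
\norm[2]{x-y-\gamma\{\nabla U(x)-\nabla U(y)\}}\leq (1+\gamma\upkappa(\gamma))\norm[2]{x-y}\eqsp,\qquad \upkappa(\gamma)=2\Ltt+\gamma\Ltt^2\eqsp,
\end{equation}
which follows from \Cref{ass:regularity-U} by expanding the square, and then invokes \cite[Corollary 5]{debortoli2018back}, which is precisely the discrete-time reflection-coupling estimate you are attempting to reconstruct. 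So you are not taking a different route; you are re-proving the cited lemma.

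Two concrete gaps remain. First, the step you yourself flag --- that the recursion $r_{k+1}\leq(1+2\gamma\Ltt)r_k+2\sqrt{2\gamma}\,\xi_{k+1}$ holds ``on the event where the right-hand side is $\geq 0$, the complementary event being exactly absorbed by the maximal part of the coupling'' --- is the entire content of the cited corollary, not bookkeeping. After one step the difference $\Delta_{k+1}=(\Id-\gamma H_k)\Delta_k+2\sqrt{2\gamma}\,\xi_{k+1}e_k$ is no longer parallel to $e_k$, so containment of $\{\hat r_k\leq 0\text{ for some }k\leq n\}$ in $\{\tau\leq n\}$ for your scalar comparison walk is exactly what must be proved; the orthogonal $O(\gamma\Ltt r_k)$ term has to be absorbed at the level of $\norm[2]{\Delta_{k+1}}$, not by adding norms of components and then appending the reflected noise to the parallel part only. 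Second, the constant does not come out as you assert. With the per-step factor $(1+2\gamma\Ltt)$ one has $s_n^2=\sum_{j=1}^n(1+2\gamma\Ltt)^{-2j}=\{1-(1+2\gamma\Ltt)^{-2n}\}/\{4\gamma\Ltt(1+\gamma\Ltt)\}$, and with $r_0\leq 2\Rrm$ the argument of $\Phibf$ becomes $\Rrm\{2\Ltt(1+\gamma\Ltt)/(1-(1+2\gamma\Ltt)^{-2n})\}^{1/2}$; the required inequality $2\Ltt(1+\gamma\Ltt)\leq 3(\Ltt+1)(1-(1+2\gamma\Ltt)^{-2n})$ fails once $\gamma\Ltt$ is of order one (e.g.\ $\gamma=1/\Ltt$ and $\Ltt\geq 3$ force the right-hand side below the left). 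The stated constant arises from the squared-norm rate $\upkappa(\gamma)\leq 3\Ltt$ for $\gamma\leq 1/\Ltt$ multiplied by the horizon $\gamma\ceil{1/\gamma}\leq 1+1/\Ltt$, giving $\upkappa(\gamma)\,\gamma\ceil{1/\gamma}\leq 3\Ltt(1+1/\Ltt)$; your factor $(1+2\gamma\Ltt)$ on the norm corresponds to $1+4\gamma\Ltt+4\gamma^2\Ltt^2$ on the squared norm and is strictly lossier. Replacing it by $(1+\gamma\upkappa(\gamma))^{1/2}$ and then either citing or fully carrying out the coupling construction would close both gaps.
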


\begin{proof}
By \Cref{ass:regularity-U} for any $x,y \in \rset^d$,
 \[ \norm[2]{x-y-\gamma\{\nabla U(x) - \nabla U(y)\}} \leq (1+ \gamma \upkappa(\gamma)) \norm[2]{x-y} \]
 where $\upkappa(\gamma) =  (2 \Ltt+\Ltt^2 \gamma)$. The proof follows from  \cite[Corollary 5]{debortoli2018back}.
\end{proof}

We then use \Cref{lem:bound_alpha_mala_1} to obtain the following bounds on the total variation distance between the iterates of MALA and ULA starting from the same initial point. Combined with the previous result, this will allow us to use a perturbation argument to show \eqref{eq:def-minorization}.

\begin{lemma}
  \label{lem:diff_tv_MALA_ULA}
  Assume \Cref{ass:regularity-U} and let $\bgamma>0$. Then,  for any $x \in \rset^d$ and $\gamma \in \ocint{0,\bgamma}$, we have
  \begin{equation}
    \label{eq:1:lem:diff_tv_MALA_ULA}
    \tvnorm{\updelta_x \Qgam - \updelta_x \Rkerg}  \leq C_{1,\bgamma} \gamma^{3/2} (d+\sqrt{3} d^2+\norm[2]{x}) \eqsp.
  \end{equation}
  If in addition and \Cref{ass:curvature_U} holds and let $\bgamma\in\ocint{0,\mtt/(4\Ltt^2)}$. Then,  for any $x \in \rset^d$ and $\gamma \in \ocint{0,\bgamma}$,
\begin{align}
    \label{eq:2:lem:diff_tv_MALA_ULA}
    \tvnorm{\updelta_x \Qgam^{\ceil{1/\gamma}} - \updelta_x \Rkerg^{\ceil{1/\gamma}}} &\leq C_{1,\bgamma} \gamma^{1/2} (d+\sqrt{3}d^2+\norm[2]{x}+2\tbdriftula/\mtt) \eqsp,
  \end{align}
  where $C_{1,\bgamma}$ is defined in \eqref{eq:def_C_1_bgamma} and
    \begin{equation}
    \label{eq:def_tbdrift_ula}
\tbdriftula =  2d + [\max\parentheseLigne{\tKtt, 2\sqrt{(2d)/\mtt}}]^2 \parenthese{\bgamma \Ltt^2 + 2\Ltt + \mtt/2} \eqsp.
  \end{equation}
\end{lemma}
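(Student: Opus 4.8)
The plan is to treat the two inequalities separately: \eqref{eq:1:lem:diff_tv_MALA_ULA} is a one-step perturbation estimate, while \eqref{eq:2:lem:diff_tv_MALA_ULA} follows by telescoping the $\ceil{1/\gamma}$ steps and keeping a bound, uniform in the number of steps, on the second moment of the ULA iterates.

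For \eqref{eq:1:lem:diff_tv_MALA_ULA}, I would start from the identity \eqref{eq:diff-rula-rmala}: for any measurable $f$ with $\normsup{f}\le1$ and any $x\in\rset^d$, bounding $|f(x)-f(x-\gamma\nabla U(x)+\sqrt{2\gamma}z)|\le2$ and using $1-\min(1,\rme^{-t})\le\max(0,t)\le|t|$ for $t\in\rset$ gives $|\Rkerg f(x)-\Qgam f(x)|\le2\int_{\rset^d}|\alphamala_{\gamma}(x,z)|\varphibf(z)\rmd z$. Inserting the pointwise bound of \Cref{lem:bound_alpha_mala_1}, namely $|\alphamala_{\gamma}(x,z)|\le C_{1,\bgamma}\gamma^{3/2}(\norm[2]{z}+\norm[4]{z}+\norm[2]{x})$, and integrating against the standard Gaussian density (using that $\int\norm[2]{z}\varphibf(z)\rmd z$ and $\int\norm[4]{z}\varphibf(z)\rmd z$ are explicit $d$-dependent constants), then taking the supremum over $\normsup{f}\le1$ and recalling $\tvnorm{\mu}=\sup_{\normsup{f}\le1}|\mu(f)|$, yields the stated bound; the precise numerical constant is a matter of bookkeeping against \eqref{eq:def_C_1_bgamma}.

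For \eqref{eq:2:lem:diff_tv_MALA_ULA}, set $n=\ceil{1/\gamma}$ and note that $\gamma\le\bgamma\le\mtt/(4\Ltt^2)$ forces $\gamma\le1$, hence $n\le2/\gamma$. I telescope
\[
\updelta_x\Rkerg^{n}-\updelta_x\Qgam^{n}=\sum_{k=0}^{n-1}\updelta_x\Qgam^{k}\,(\Rkerg-\Qgam)\,\Rkerg^{n-1-k}\eqsp,
\]
with $\Qgam$ on the left, and use that applying a Markov kernel is a $\tvnorm{\cdot}$-contraction on signed measures so that each summand is at most $\tvnorm{\updelta_x\Qgam^{k}(\Rkerg-\Qgam)}\le\int\Qgam^{k}(x,\rmd x')\,\tvnorm{\updelta_{x'}\Rkerg-\updelta_{x'}\Qgam}$; by \eqref{eq:1:lem:diff_tv_MALA_ULA} this is bounded by $C_{1,\bgamma}\gamma^{3/2}\bigl(d+\sqrt3\,d^2+\Qgam^{k}(\norm[2]{\cdot})(x)\bigr)$. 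It then remains to control $\Qgam^{k}(\norm[2]{\cdot})(x)$ uniformly in $k$, for which I establish a contractive drift for ULA in $\norm[2]{\cdot}$: expanding $\Qgam(\norm[2]{\cdot})(x)=\norm[2]{x-\gamma\nabla U(x)}+2\gamma d$, using $\norm{\nabla U(x)}\le\Ltt\norm{x}$ from \Cref{ass:regularity-U}, $\ps{\nabla U(x)}{x}\ge(\mtt/2)\norm[2]{x}$ for $\norm{x}\ge\tKtt$ from \Cref{lem:quadratic_behaviour} (which invokes \Cref{ass:curvature_U}), and $\gamma\le\bgamma\le\mtt/(4\Ltt^2)$, one obtains $\Qgam(\norm[2]{\cdot})(x)\le(1-\gamma\mtt/2)\norm[2]{x}+\gamma\tbdriftula$ for every $x$, with $\tbdriftula$ exactly as in \eqref{eq:def_tbdrift_ula} (the radius $\max(\tKtt,2\sqrt{2d/\mtt})$ absorbing the regime $\norm{x}<\tKtt$ and the $2d$ the Gaussian term). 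Iterating this drift and summing a geometric series gives $\Qgam^{k}(\norm[2]{\cdot})(x)\le\norm[2]{x}+2\tbdriftula/\mtt$ for all $k\ge0$. Summing over $k$ contributes a factor $n\le2/\gamma$, so $\gamma^{3/2}n\le2\gamma^{1/2}$, and \eqref{eq:2:lem:diff_tv_MALA_ULA} follows.

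The main obstacle is the uniform-in-$k$ second-moment bound, and this is precisely why \Cref{ass:curvature_U} is needed for \eqref{eq:2:lem:diff_tv_MALA_ULA} but not for \eqref{eq:1:lem:diff_tv_MALA_ULA}: a merely ``bounded-increment'' drift $\Qgam(\norm[2]{\cdot})\le\norm[2]{\cdot}+\gamma c$ would make $\sum_{k<n}\Qgam^{k}(\norm[2]{\cdot})(x)$ grow like $n^{2}\sim\gamma^{-2}$, leaving a spurious term of order $\gamma^{-1/2}$ and destroying the estimate. Telescoping with the Gaussian kernel $\Qgam$ on the left (and $\Rkerg$ on the right) is what makes the argument work, since $\Qgam$ obeys the clean contractive $\norm[2]{\cdot}$-drift above, whereas the analogous computation for $\Rkerg$ is complicated by the Metropolis rejection, which correlates with the proposal increment.
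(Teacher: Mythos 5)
Your proof is correct and follows essentially the same route as the paper: the one-step bound via the identity \eqref{eq:diff-rula-rmala}, the inequality $1-\min(1,\rme^{-t})\leq\abs{t}$ and \Cref{lem:bound_alpha_mala_1}, then the telescoping decomposition $\sum_{k}\updelta_x\Qgam^{k}(\Qgam-\Rkerg)\Rkerg^{\ceil{1/\gamma}-k-1}$ combined with the contractive second-moment drift for $\Qgam$ (which is exactly the paper's \Cref{propo:lyap_mala}, with the same constant $\tbdriftula$). Your closing remark on why \Cref{ass:curvature_U} is needed only for the multi-step bound matches the role this assumption plays in the paper's argument.
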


We preface the proof by a technical lemma.

\begin{lemma}
  \label{propo:lyap_mala}
  Assume \Cref{ass:regularity-U}, \Cref{ass:curvature_U} and let $\bgamma\in\ocint{0,\mtt/(4\Ltt^2)}$. Then, for any $\gamma \in \ocint{0,\bgamma}$ and $x\in\rset^d$, $    \int_{\rset^d} \norm[2]{y} Q_{\gamma}(x,\rmd y) \leq \defEns{1-(\mtt\gamma)/2} \norm[2]{x}
    +  \gamma \tbdriftula$,
  where $Q_{\gamma}$ is the Markov kernel of ULA defined in \eqref{eq:def_kernel_ula} and  $\tKtt$ is defined in \Cref{lem:quadratic_behaviour}, and $\tbdriftula$ is defined in \eqref{eq:def_tbdrift_ula}.
\end{lemma}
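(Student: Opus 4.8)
The plan is to reduce the claim to an elementary Gaussian computation with the Euler--Maruyama step. Since $0\le\min(1,\rme^{-\alphamala_{\gamma}(x,z)})\le 1$ and $\norm[2]{\cdot}\ge 0$, it suffices to bound $\int_{\rset^d}\norm[2]{x-\gamma\nabla U(x)+\sqrt{2\gamma}\,z}\,\varphibf(z)\,\rmd z$ from above, exactly as $Q_\gamma$ is treated in the proof of \Cref{propo:super_lyap_ula}. As $\varphibf$ is the standard Gaussian density, the term linear in $z$ integrates to $0$ and $\int\norm[2]{z}\varphibf(z)\rmd z=d$, so
\[
\int_{\rset^d}\norm[2]{y}\,Q_{\gamma}(x,\rmd y)\le\norm[2]{x-\gamma\nabla U(x)}+2\gamma d=\norm[2]{x}-2\gamma\ps{x}{\nabla U(x)}+\gamma^2\norm[2]{\nabla U(x)}+2\gamma d\eqsp.
\]
It then remains to control the right-hand side using $\norm{\nabla U(x)}\le\Ltt\norm{x}$ from \Cref{ass:regularity-U}, splitting according to whether $\norm{x}$ is large or small relative to $R:=\max(\tKtt,2\sqrt{(2d)/\mtt})$.

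In the far field $\norm{x}\ge R$: since $R\ge\tKtt$, \Cref{lem:quadratic_behaviour} gives $\ps{x}{\nabla U(x)}\ge(\mtt/2)\norm[2]{x}$, so the displayed quantity is at most $(1-\gamma\mtt+\gamma^2\Ltt^2)\norm[2]{x}+2\gamma d$. Using $\gamma\le\bgamma\le\mtt/(4\Ltt^2)$ one has $\gamma^2\Ltt^2\le\gamma\mtt/4$, and since $\norm[2]{x}\ge R^2\ge 8d/\mtt$ the term $2\gamma d$ is absorbed into $\gamma\mtt\norm[2]{x}/4$; this yields $\int\norm[2]{y}\,Q_{\gamma}(x,\rmd y)\le(1-\gamma\mtt/2)\norm[2]{x}$.

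In the near field $\norm{x}\le R$: I would bound crudely, using $\abs{\ps{x}{\nabla U(x)}}\le\Ltt\norm[2]{x}$ and $\gamma\le\bgamma$,
\[
\norm[2]{x}-2\gamma\ps{x}{\nabla U(x)}+\gamma^2\norm[2]{\nabla U(x)}\le(1-\gamma\mtt/2)\norm[2]{x}+\gamma\bigl(\mtt/2+2\Ltt+\bgamma\Ltt^2\bigr)\norm[2]{x}\le(1-\gamma\mtt/2)\norm[2]{x}+\gamma R^2\bigl(\mtt/2+2\Ltt+\bgamma\Ltt^2\bigr)\eqsp,
\]
and then add the remaining $2\gamma d$, recognising the constant $\tbdriftula$ of \eqref{eq:def_tbdrift_ula}. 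Combining the two cases, and noting that the additive term is nonnegative so the far-field bound also fits the claimed form, gives the assertion for every $x\in\rset^d$. The whole argument is routine: the only points needing care are the choice of the threshold $R$ so that $2\gamma d$ is swallowed in the far field, and the bookkeeping confirming that the single constraint $\bgamma\le\mtt/(4\Ltt^2)$ is all that is used (and only in the far-field case) — no genuine obstacle is anticipated.
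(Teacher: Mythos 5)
Your proof is correct and follows essentially the same route as the paper: the same Gaussian moment computation for the Euler--Maruyama step, the same split at $\ray = \max(\tKtt, 2\sqrt{(2d)/\mtt})$ with \Cref{lem:quadratic_behaviour} in the far field and the crude Lipschitz bound $\abs{\ps{x}{\nabla U(x)}} \leq \Ltt\norm[2]{x}$ in the near field, yielding the same constant $\tbdriftula$. Your preliminary remark that the acceptance factor in \eqref{eq:def_kernel_ula} can be dropped (being in $\ccint{0,1}$ against a nonnegative integrand) is a harmless extra precaution that the paper's proof implicitly takes as well.
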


\begin{proof}
  Let $\gamma\in\ocint{0,\bgamma}$ and $x\in\rset^d$. By \Cref{ass:regularity-U}, we have
  \begin{equation}
    \int_{\rset^d} \norm[2]{y} Q_{\gamma}(x,\rmd y) \leq
    2\gamma d + \norm[2]{x}(1+\gamma^2 \Ltt^2) - 2\gamma\ps{\nabla U(x)}{x} \eqsp.
  \end{equation}
  Set $\ray = \max\parentheseLigne{\tKtt, 2\sqrt{(2d)/\mtt}}$.
  We distinguish the case when $\norm{x} \geq \ray$ and $\norm{x} < \ray$.
  If $\norm{x} \geq \ray \geq \tKtt$, by \Cref{lem:quadratic_behaviour}, and since $\gamma \leq \bgamma \leq \mtt/(4\Ltt^2)$, $\norm{x} \geq \ray \geq 2\sqrt{(2d)/\mtt}$,
  \begin{align}
    \int_{\rset^d} \norm[2]{y} Q_{\gamma}(x,\rmd y) &\leq
    \norm[2]{x} \parentheseDeux{1-\gamma \defEns{\mtt - \gamma \Ltt^2 - (2d)/\norm[2]{x}}} \leq \norm[2]{x} \defEns{1-\gamma \mtt /2} \eqsp.
  \end{align}
  If $\norm{x} < \ray$, we obtain
  \begin{equation}
    \int_{\rset^d} \norm[2]{y} Q_{\gamma}(x,\rmd y) \leq
    \norm[2]{x} \defEns{1-\gamma \mtt /2} +
    \gamma \norm[2]{x} \parenthese{\gamma \Ltt^2 + 2 \Ltt + \mtt/2} + 2\gamma d \eqsp,
  \end{equation}
  which concludes the proof.
\end{proof}

\begin{proof}
  Let $x \in \rset^d$  and $\gamma \in \ocint{0,\bgamma}$.
We first show that \eqref{eq:1:lem:diff_tv_MALA_ULA} holds and then use this result to prove \eqref{eq:2:lem:diff_tv_MALA_ULA}.
Let $f : \rset^d \to \rset$ be a bounded and measurable function. Then, by \eqref{eq:diff-rula-rmala}, we have
\begin{align}
  &\abs{\Qgam f(x) - \Rkerg f(x)} \\
  & \qquad  = \Big| \int_{\rset^d}\{f(x-\gamma \nabla U(x) + \sqrt{2 \gamma} z) - f(x)\}  \{1 - \min(1,\rme^{-\alphamala_{\gamma}(x,z)}) \} \varphibf(z) \rmd z \Big| \\
  & \qquad  \leq 2 \norm{f}_{\infty} \int_{\rset^d} \abs{1 - \min(1,\rme^{-\alphamala_{\gamma}(x,z)}) } \varphibf(z) \rmd z   \leq  2 \norm{f}_{\infty} \int_{\rset^d} \abs{\alphamala_{\gamma}(x,z)}  \varphibf(z) \rmd z \eqsp.
\end{align}
The conclusion of \eqref{eq:1:lem:diff_tv_MALA_ULA} then follows from an application of \Cref{lem:bound_alpha_mala_1}.

We now turn to the proof of \eqref{eq:2:lem:diff_tv_MALA_ULA}. Consider the following decomposition
  \begin{equation}
    \updelta_x \Qgam^{\ceil{1/\gamma}} - \updelta_x \Rkerg^{\ceil{1/\gamma}} = \sum_{k=0}^{\ceil{1/\gamma}-1} \updelta_x \Qgam^k \{\Qgam - \Rkerg\} \Rkerg^{\ceil{1/\gamma}-k-1} \eqsp.
  \end{equation}
  Therefore using the triangle inequality, we obtain that
  \begin{equation}
            \label{eq:4}
        \tvnorm{\updelta_x \Qgam^{\ceil{1/\gamma}} - \updelta_x \Rkerg^{\ceil{1/\gamma}}} \leq \sum_{k=0}^{\ceil{1/\gamma}-1} \tvnorm{ \updelta_x \Qgam^k \{\Rkerg - \Qgam\} \Rkerg^{\ceil{1/\gamma}-k-1}} \eqsp.
      \end{equation}
      We now bound each term in the sum. Let $k \in \{0,\ldots,\ceil{1/\gamma}-1\}$ and $f : \rset^d \to \rset$ be a bounded and measurable function. By   \eqref{eq:1:lem:diff_tv_MALA_ULA}, we obtain that
      \[ \abs{ \updelta_x \{\Rkerg - \Qgam\} \Rkerg^{\ceil{1/\gamma}-k-1} f} \leq C_{1,\bgamma} \norm{f}_{\infty} \gamma^{3/2} \{d+\sqrt{3}d^2+\norm[2]{x}\} \]
      and therefore using \Cref{propo:lyap_mala}, we get
      \begin{equation}
\abs{       \updelta_x \Qgam^k   \{\Rkerg - \Qgam\} \Rkerg^{\ceil{1/\gamma}-k-1} f} \leq C_{1,\bgamma} \norm{f}_{\infty} \gamma^{3/2} \{d+\sqrt{3}d^2+(1-\mtt\gamma/2)^k \norm[2]{x} + 2\tbdriftula/\mtt \} \eqsp.
     \end{equation}
     Plugging this result in         \eqref{eq:4}, we obtain
     \begin{align}
       \tvnorm{\updelta_x \Qgam^{\ceil{1/\gamma}} - \updelta_x \Rkerg^{\ceil{1/\gamma}}} &\leq C_{1,\bgamma} \gamma^{3/2} \sum_{k=0}^{\ceil{1/\gamma}-1}   \{d+\sqrt{3}d^2+(1-\mtt\gamma/2)^k \norm[2]{x} + 2\tbdriftula/\mtt \} \\
       &\leq C_{1,\bgamma} \gamma^{1/2} \{d+\sqrt{3}d^2+\norm[2]{x}+2\tbdriftula/\mtt\} \eqsp,
     \end{align}
     which concludes the proof.
\end{proof}

\begin{proposition}
  \label{propo:small_set_mala}
  Assume \Cref{ass:regularity-U} and \Cref{ass:curvature_U}. Then for any $\ray \geq 0$ there exists $\tGamma_{\ray} > 0$ (given in \eqref{eq:def_bgamma_ray} in the proof), such that for any $x,y \in \rset^d$, $\norm{x}\vee \norm{y} \leq \ray$, and $\gamma \in \ocintLigne{0,\tGamma_{\ray}}$ we have
  \begin{equation}
\label{eq:small_set_mala_propo}
    \tvnorm{\updelta_x \Rkerg^{\ceil{1/\gamma}} - \updelta_y \Rkerg^{\ceil{1/\gamma}}}   \leq 2 (1-\varepsilonula(\ray)/2) \eqsp,
  \end{equation}
  where $\varepsilonula(\ray)$ is defined in \eqref{eq:def_varepsi_ula}.
\end{proposition}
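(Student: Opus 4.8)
The plan is a perturbation argument: over a window of $\ceil{1/\gamma}$ steps we view MALA as a small perturbation of ULA, use the ULA minorization of \Cref{propo:small_set_ula} as the dominant term, and absorb the ULA--MALA discrepancy bounded in \Cref{lem:diff_tv_MALA_ULA} into the minorization gap $\varepsilonula(\ray)$ by taking $\gamma$ small enough. To this end I would first fix an a priori stepsize ceiling $\bgamma_0 = \min\parenthese{1/\Ltt,\ \mtt/(4\Ltt^2)}$, so that \Cref{propo:small_set_ula} (valid for $\gamma\leq 1/\Ltt$) and the second assertion of \Cref{lem:diff_tv_MALA_ULA} (valid when its auxiliary parameter is $\leq \mtt/(4\Ltt^2)$) both apply for every $\gamma\in\ocint{0,\bgamma_0}$. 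I would invoke \Cref{lem:diff_tv_MALA_ULA} with that auxiliary parameter set to $\bgamma_0$; since the constants $C_{1,\bgamma}$ of \eqref{eq:def_C_1_bgamma} and $\tbdriftula$ of \eqref{eq:def_tbdrift_ula} are non-decreasing in $\bgamma$, evaluating them at $\bgamma_0$ yields a bound uniform over $\gamma\in\ocint{0,\bgamma_0}$, and committing to $\bgamma_0$ up front (rather than taking the auxiliary parameter to be $\gamma$) avoids a circular dependence of the final threshold on itself.

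Next, for $x,y\in\rset^d$ with $\norm{x}\vee\norm{y}\leq\ray$ and $\gamma\in\ocint{0,\bgamma_0}$, I would insert the two ULA iterates and apply the triangle inequality,
\begin{multline*}
\tvnorm{\updelta_x \Rkerg^{\ceil{1/\gamma}} - \updelta_y \Rkerg^{\ceil{1/\gamma}}}
\leq \tvnorm{\updelta_x \Rkerg^{\ceil{1/\gamma}} - \updelta_x \Qgam^{\ceil{1/\gamma}}}
+ \tvnorm{\updelta_x \Qgam^{\ceil{1/\gamma}} - \updelta_y \Qgam^{\ceil{1/\gamma}}} \\
+ \tvnorm{\updelta_y \Qgam^{\ceil{1/\gamma}} - \updelta_y \Rkerg^{\ceil{1/\gamma}}} \eqsp.
\end{multline*}
By \Cref{propo:small_set_ula} the middle term is at most $2(1-\varepsilonula(\ray))$, and by \eqref{eq:2:lem:diff_tv_MALA_ULA} together with $\norm{x}^2\vee\norm{y}^2\leq\ray^2$ each of the two outer terms is at most $C_{1,\bgamma_0}\gamma^{1/2}S$, where $S = d+\sqrt{3}d^2+\ray^2+2\tbdriftula/\mtt$. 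Collecting these estimates gives $\tvnorm{\updelta_x \Rkerg^{\ceil{1/\gamma}} - \updelta_y \Rkerg^{\ceil{1/\gamma}}}\leq 2(1-\varepsilonula(\ray)) + 2C_{1,\bgamma_0}\gamma^{1/2}S$.

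Finally I would set
\begin{equation}
\label{eq:def_bgamma_ray}
\tGamma_{\ray} = \min\parenthese{\bgamma_0,\ \frac{\varepsilonula(\ray)^2}{4\,C_{1,\bgamma_0}^2\,S^2}} \eqsp,
\end{equation}
which is strictly positive since $\varepsilonula(\ray)>0$ by \eqref{eq:def_varepsi_ula}. For $\gamma\in\ocint{0,\tGamma_{\ray}}$ one then has $2C_{1,\bgamma_0}\gamma^{1/2}S\leq\varepsilonula(\ray)$, so the bound of the previous paragraph becomes $2(1-\varepsilonula(\ray))+\varepsilonula(\ray) = 2(1-\varepsilonula(\ray)/2)$, which is precisely \eqref{eq:small_set_mala_propo}. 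I do not expect a genuine obstacle here, since the analytic work is already packaged in \Cref{propo:small_set_ula} and \Cref{lem:diff_tv_MALA_ULA}; the only points that need care are the monotonicity bookkeeping for $C_{1,\bgamma}$ and $\tbdriftula$ described above, and the observation that the $\gamma^{1/2}$ decay in \eqref{eq:2:lem:diff_tv_MALA_ULA} --- a genuine positive power of $\gamma$ surviving after the window of length $\ceil{1/\gamma}$ --- is what makes the perturbation term negligible for small $\gamma$.
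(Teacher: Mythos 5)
Your proposal is correct and follows essentially the same route as the paper's proof: the same triangle-inequality decomposition through the ULA iterates, \Cref{propo:small_set_ula} for the middle term, \eqref{eq:2:lem:diff_tv_MALA_ULA} for the two outer terms, and the same choice of threshold $\tGamma_{\ray}$ (the paper writes it as $\tGamma_{1/2}\wedge\bigl[\varepsilonula(\ray)/(2C_{1,\tGamma_{1/2}}S)\bigr]^2$, which coincides with your expression). Your remark about fixing the auxiliary stepsize ceiling up front to avoid circularity matches the paper's introduction of $\tGamma_{1/2}$.
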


\begin{proof}
First note that for any $x,y \in \rset^d$, $\gamma >0$, by the triangle inequality, we obtain
  \begin{multline}
    \label{eq:decomposiiton_small_set_mala}
       \tvnorm{\updelta_x \Rkerg^{\ceil{1/\gamma}} - \updelta_y \Rkerg^{\ceil{1/\gamma}}} \leq        \tvnorm{\updelta_x \Rkerg^{\ceil{1/\gamma}} - \updelta_x \Qgam^{\ceil{1/\gamma}}}\\ + \tvnorm{\updelta_x \Qgam^{\ceil{1/\gamma}} - \updelta_y \Qgam^{\ceil{1/\gamma}}} + \tvnorm{\updelta_y \Rkerg^{\ceil{1/\gamma}} - \updelta_y \Qgam^{\ceil{1/\gamma}}} \eqsp.
     \end{multline}
     We now give some bounds for each term on the right hand side for any $x,y \in \rset^d$,  $\norm{x}\vee \norm{y} \leq \ray$ for a fixed $\ray \geq 0$ and $\gamma \leq  1/\Ltt$.
     By \Cref{propo:small_set_ula},  for any $x,y \in \rset^d$,  $\norm{x}\vee \norm{y} \leq \ray$ and $\gamma \leq  1/\Ltt$,
     \begin{equation}\label{eq:bound_small_ULA_proof_small_MALA}
       \tvnorm{\updelta_x \Qgam^{\ceil{1/\gamma}} - \updelta_y \Qgam^{\ceil{1/\gamma}}} \leq
       2(1-\varepsilonula(\ray)) \eqsp.
     \end{equation}
     In addition,
consider for $\tUpsilon >0$, $\tGamma_{1/2} = \tUpsilon \wedge \mtt/(4\Ltt^2)$. By \Cref{lem:diff_tv_MALA_ULA},  for any $\gamma \in \ocintLigne{0,\tGamma_{1/2}}$, and $x \in \rset^d$, $\norm{x} \leq \ray$,
  \begin{equation}
    \tvnorm{\updelta_x \Qgam^{\ceil{1/\gamma}} - \updelta_x \Rkerg^{\ceil{1/\gamma}}} \leq  C_{1,\tGamma_{1/2}}\gamma^{1/2}(d+\sqrt{3} d^2+\ray^2+2\tbdriftulatGamma/\mtt) \eqsp.
  \end{equation}

  Combining this result with \eqref{eq:bound_small_ULA_proof_small_MALA} in \eqref{eq:decomposiiton_small_set_mala}, we obtain that for any $\ray \geq 0$, $\gamma \in \ocintLigne{0,\tGamma_{1/2}}$, and  $x,y\in \rset^d$, $\norm{x}\vee\norm{y} \leq \ray$,
  \begin{equation}
    \norm{\updelta_x \Rkerg^{\ceil{1/\gamma}} - \updelta_y \Rkerg^{\ceil{1/\gamma}}} \leq 2(1-\varepsilonula(\ray)) + 2 C_{1,\tGamma_{1/2}}\gamma^{1/2}(d+\sqrt{3} d^2+\ray^2+2\tbdriftulatGamma/\mtt) \eqsp.
  \end{equation}
  Therefore, we obtain that for any $x,y \in \rset^d$, $\norm{x} \vee \norm{y} \leq \ray$, $\gamma \in \ocintLigne{0,\tGamma_{\ray}}$, \eqref{eq:small_set_mala_propo} holds  taking
  \begin{equation}
    \label{eq:def_bgamma_ray}
\tGamma_{\ray} = \tGamma_{1/2} \wedge \parentheseDeux{\frac{\varepsilon(\ray)}{2 C_{1,\tGamma_{1/2}}(d+\sqrt{3} d^2+\ray^2+2\tbdriftulatGamma/\mtt)}}^2 \eqsp.
  \end{equation}
\end{proof}

\subsection{Proof of \Cref{theo:V-geom_ergo_MALA_just_C2}}
\label{sec:proof:theo:V-geom_ergo_MALA_just_C2}
It is easy to prove that any compact set of $\rset^d$ with positive Lebesgue measure is a small set for the Markov kernel associated with MALA \eqref{eq:def-kernel-MALA} and in particular that $\Qmala_{\gamma}$ is strongly aperiodic and irreducible; see \eg~\cite[Example 9.1.5]{douc:moulines:priouret:soulier:2018}. The proof follows from \cite[Theorem 15.2.4.]{douc:moulines:priouret:soulier:2018}

\subsection{Proof of \Cref{theo:V-geom_ergo_MALA}}
\label{sec:proof:theo:V-geom_ergo_MALA}

\Cref{propo:lyap_mala_total} shows that there exist $\Gamma_{1/2} \geq  \Gamma > 0$ (given in \eqref{eq:def_Gamma_1_2}-\eqref{eq:def_Gamma}) such that for any $\bgamma \in \ocint{0,\Gamma}$, $\gamma \in\ocint{0,\bgamma}$ and $x \in\rset^d$,
  \begin{equation}
     R_{\gamma}V_{\bareta}(x) \leq (1-\tildem \gamma)V_{\bareta}(x)+ \bdriftmala \gamma  \eqsp,
  \end{equation}
  where $V_{\bareta}$ is defined by \eqref{eq:def_V_eta}, $R_{\gamma}$ is the Markov kernel of MALA defined by \eqref{eq:def-kernel-MALA},  $\bareta=\mtt/16$, $    \tildem, \bdriftmala$ are specified in the statement of \Cref{propo:lyap_mala_total}.
  Using \cite[Lemma 14.1.10]{douc:moulines:priouret:soulier:2018}, we obtain $\pi(V_{\bareta}) \leq A_{\bgamma,\bareta} = \bdriftmala/\tildem$. Therefore, we get
  \begin{equation}
    \label{eq:def_bA_final}
    \pi(V_{\bareta}) \leq \barA_{\bareta} = \inf_{\bgamma \in \ocint{0,\Gamma}} A_{\bgamma,\bareta} \eqsp.
  \end{equation}
We now show \eqref{eq:V-geom_ergo_MALA}.
  Using $1-t \leq \rme^{-t}$ for $t \in \rset$ and setting $\lambdaFL = \rme^{-\tildem}<1$,  an easy induction implies that for any $\gamma \in \ocint{0,\bgamma}$,
\begin{equation}\label{eq:def-discrete-drift_2}
  \RKer_\step^{\ceil{1/\gamma}} \lV_{\bareta} \leq \lambdaFL \lV_{\bareta} +  \bdriftmala(1+\bgamma) \eqsp.
\end{equation}

Set now
\begin{equation}
  \label{eq:def_M_gamma}
M_{\bgamma}  = \parenthese{\frac{4\bdriftmala (1+\bgamma)}{1-\lambda}} \vee 1 \eqsp, \quad \ray_{\bgamma} = (\log(M_{\bgamma})/\bareta)^{1/2} \eqsp.
\end{equation}
Note that $\ball{0}{\ray_{\bgamma}}  = \{\lV_{\bareta} \leq M_{\bgamma}\}$, $\bgamma \mapsto \bdriftmala (1+\bgamma)$ and $\bgamma \mapsto \ray_{\bgamma}$ are increasing on $\rset_+$. Then, $\tGamma_{\ray_{\bgamma}} \geq \tGamma_{\ray_{\Gamma}}$ for $\bgamma \leq \Gamma$ where $\tGamma_{\ray}$ is defined in \eqref{eq:def_bgamma_ray}, and \Cref{propo:small_set_mala}  implies setting
\begin{equation}
  \label{eq:def_bGamma_final}
  \bGamma = \Gamma \wedge \tGamma_{\ray_{\Gamma}} \eqsp,
\end{equation}
that for any $\bgamma \in \ocint{0,\bGamma}$,  any $x,y\in \{\lV_{\bareta} \leq M_{\bgamma}\}$, and $\gamma \in \ocint{0,\bgamma}$,
$\tvnorm{\updelta_{x} \Qmala^{\ceil{1/\gamma}} - \updelta_{y} \Qmala^{\ceil{1/\gamma}}} \leq 2(1-\varepsilon(\ray_{\bgamma}))$. As a result,
\cite[Theorem~19.4.1]{douc:moulines:priouret:soulier:2018} applied to $\Qmala^{\ceil{1/\gamma}}$ shows   that for any $x \in\rset^d$, $n \in \nset$,
\begin{equation}
 \Vnorm[\lV_{\bareta}]{\updelta_x \Qmala^{n \ceil{1/\gamma}}_{\gamma} - \pi} \leq C_{\bgamma} \{ \lV_{\bareta}(x) + \pi(\lV_{\bareta}) \} \rho^n_{\bgamma} \eqsp,
  \end{equation}
  where
  \begin{equation}
    \label{eq:cst_bornes_mala}
      \begin{aligned}
        &\log \rho_{\bgamma}  = \frac{\log(1-2^{-1}\varepsilon(\ray_{\bgamma})) \log\bar\lambda} { \bigl(\log(1-2^{-1}\varepsilon(\ray_{\bgamma})) +
          \log\bar\lambda-\log {\barbdriftmalabeta}\bigr) } \eqsp ,\\
        &\bar\lambda = \lambda +(1-\lambda)/2 \eqsp, \qquad
 {\barbdriftmala} = \lambda \bdriftmala + M_{\bgamma}
        \eqsp,   \\
        &C_{\bgamma}  = \rho_{\bgamma}^{-1}\{\lambda+1\}\{1+{\barbdriftmala}/[(1-2^{-1}\varepsilon(\ray_{\bgamma})(1-\bar\lambda)]\} \eqsp.
      \end{aligned}
\end{equation}



\section{Extension to \Cref{ass:curvature_U_alpha}}
\label{sec:extens-cref}

Define for any $x \in \rset^d$ and $\etabeta >0$,
\begin{equation}
  \label{eq:lyap_sub_quad_beta}
\Wbeta_{\etabeta}(x) = \exp\{\etabeta (1+\norm[2]{x})^{1/2}\} \eqsp.  
\end{equation}

\begin{theorem}
  \label{theo:V-geom_ergo_MALA_just_C2_beta}
  Assume \Cref{ass:regularity-U} and \Cref{ass:curvature_U_alpha}$(\beta)$ for $\beta \in\coint{0,1}$.
Then, there exists $\Gamma_{\beta} >0$ (defined in \eqref{v_beta:eq:def_Gamma}) such for any $\gamma \in \ocint{0,\Gamma_{\beta}}$, there exist $C_{\gamma,\beta} \geq 0$ and $\rho_{\gamma,\beta}\in\coint{0,1}$ satisfying for any $x \in \rset^d$,
  \begin{equation}
    \label{eq:50}
    \Vnorm[\Wbeta_{\baretabeta}]{\updelta_x \Rmala_{\gamma} - \pi} \leq C_{\gamma,\beta} \rho_{\gamma,\beta}^{k}\Wbeta_{\baretabeta}(x) \eqsp,
  \end{equation}
  where $  \baretabeta=\mttbeta/2^5$.
\end{theorem}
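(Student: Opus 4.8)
The plan is to mimic the proof of \Cref{theo:V-geom_ergo_MALA_just_C2}, replacing the Gaussian Lyapunov function $V_{\bareta}$ by the sub-Gaussian function $\Wbeta_{\baretabeta}$ of \eqref{eq:lyap_sub_quad_beta} with $\baretabeta=\mttbeta/2^{5}$. Namely, it suffices to establish a Foster--Lyapunov drift condition $\Rmala_{\gamma}\Wbeta_{\baretabeta}\leq(1-\tildembeta\gamma)\Wbeta_{\baretabeta}+\bdriftmalabeta\gamma\,\1_{\ball{0}{\raymaladriftbeta}}$ for all $\gamma\in\ocint{0,\Gamma_{\beta}}$ and some $\Gamma_{\beta}>0$, and then to observe, exactly as in the proof of \Cref{theo:V-geom_ergo_MALA_just_C2} (using that $\rmala_{\gamma}$ is everywhere positive and continuous), that $\Rmala_{\gamma}$ is $\pi$-irreducible and strongly aperiodic and that every compact set of positive Lebesgue measure is $1$-small; \cite[Theorem~15.2.4]{douc:moulines:priouret:soulier:2018} then yields \eqref{eq:50} with non-explicit $C_{\gamma,\beta},\rho_{\gamma,\beta}$. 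Only the drift condition requires work, and, as in \Cref{subsec:geom-ergodicity-mala}, I would establish it first for ULA and then transfer it to MALA through \eqref{eq:diff-rula-rmala}.

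For ULA I would prove the analogue of \Cref{propo:super_lyap_ula}: there are $\Gamma_{\beta}>0$, $\raymaladriftbeta\geq0$ and $c,\bdriftulabeta>0$ with, for all $\gamma\in\ocint{0,\Gamma_{\beta}}$, $Q_{\gamma}\Wbeta_{\baretabeta}(x)\leq\exp(-c\gamma\norm[1-\beta]{x})\Wbeta_{\baretabeta}(x)$ for $\norm{x}\geq\raymaladriftbeta$ and $Q_{\gamma}\Wbeta_{\baretabeta}(x)\leq\Wbeta_{\baretabeta}(x)+\bdriftulabeta\gamma$ for $\norm{x}<\raymaladriftbeta$. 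The key point is that $f:y\mapsto(1+\norm[2]{y})^{1/2}$ is convex with $\norm{\nabla f}\leq1$ and $\normop{\rmD^{2}f}\leq1$, so, with $a_{x}=x-\gamma\nabla U(x)$, a second-order Taylor bound gives $f(a_{x}+\sqrt{2\gamma}z)\leq f(a_{x})+\sqrt{2\gamma}\ps{\nabla f(a_{x})}{z}+\gamma\norm[2]{z}$, whence, after integrating $\varphibf$,
\begin{equation*}
  Q_{\gamma}\Wbeta_{\baretabeta}(x)\leq(1-2\baretabeta\gamma)^{-d/2}\exp\parenthese{\baretabeta^{2}\gamma/(1-2\baretabeta\gamma)}\,\Wbeta_{\baretabeta}(a_{x})\eqsp,
\end{equation*}
both prefactors being $1+\bigO(\gamma)$. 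It then remains to bound $f(a_{x})-f(x)$: from $\ps{\nabla U(x)}{x}\geq(\mttbeta/2)\norm[2]{x}/(1+\norm[\beta]{x})$ (\Cref{lem:quadratic_behaviour_alpha}) and $\norm{\nabla U(x)}\leq2\Lttbeta\norm{x}/(1+\norm[3\beta/4]{x})$ (\Cref{lem:upper_bound_behaviour_alpha}), for $\norm{x}$ large the term $\gamma^{2}\norm[2]{\nabla U(x)}$ is negligible in front of $2\gamma\ps{\nabla U(x)}{x}$ since $3\beta/2>\beta$, so $\norm[2]{a_{x}}\leq\norm[2]{x}\{1-\gamma\mttbeta/(2(1+\norm[\beta]{x}))\}$ and thus $f(a_{x})-f(x)\leq-c\gamma\mttbeta\norm[1-\beta]{x}$; as $\beta<1$ this beats the $\bigO(\gamma d)$ correction above, which fixes $\raymaladriftbeta$ and $\Gamma_{\beta}$.

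To pass to MALA, I would use \eqref{eq:diff-rula-rmala} together with $1-\min(1,\rme^{-t})\leq\max(0,t)$ to get $\Rmala_{\gamma}\Wbeta_{\baretabeta}(x)\leq Q_{\gamma}\Wbeta_{\baretabeta}(x)+\Wbeta_{\baretabeta}(x)\int_{\rset^{d}}\max(0,\alphamala_{\gamma}(x,z))\varphibf(z)\rmd z$, and then prove the analogue of \Cref{lem:bound_alpha_mala_2} under \Cref{ass:regularity-U} and \Cref{ass:curvature_U_alpha}$(\beta)$: for $\norm{x}$ large and $\norm{z}\leq\norm{x}/(4\sqrt{2\gamma})$, $\alphamala_{\gamma}(x,z)\leq C\gamma\norm[2]{z}$. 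This uses the decomposition \eqref{lem:durmus_moulines_saksman} --- which involves only $\nabla U$ and $\rmD^{2}U$, hence requires no $\rmC^{3}$ assumption: the negative contribution $-\gamma^{2}A_{4,0,\gamma}(x,z)$, $A_{4,0,\gamma}(x,z)=\int_{0}^{1}\rmD^{2}U(x_{t})[\nabla U(x)^{\otimes2}]t\,\rmd t$, is bounded below, using $\norm{x_{t}}\geq\norm{x}/2$ (\Cref{lem:bounde_pertubhessian}), the lower Hessian bound in \Cref{ass:curvature_U_alpha}$(\beta)$ and $\norm{\nabla U(x)}\geq\mttbeta\norm{x}/(1+\norm[\beta]{x})$, by $c'\mttbeta^{3}\norm[2]{x}/(1+\norm[\beta]{x})^{3}$; the positive terms involving $\nabla U(x)$ are estimated through the upper bounds $\norm{\nabla U(x)}\leq2\Lttbeta\norm{x}/(1+\norm[3\beta/4]{x})$ and $\normop{\rmD^{2}U(x_{t})}\leq\Lttbeta/(1+\norm[3\beta/4]{x_{t}})$, and, splitting each cross term $ab$ via $ab\leq(\epsilon/2)a^{2}+(2\epsilon)^{-1}b^{2}$ with the powers of $\gamma$ distributed so that the $\norm[2]{z}$ part carries $\gamma$ and the $\norm[2]{x}$ part carries $\gamma^{2}$, they are absorbed into $-\gamma^{2}A_{4,0,\gamma}$ for $\gamma$ small, uniformly in $x$ --- here the identity $3\cdot\beta=4\cdot(3\beta/4)$ between the decay exponents of \Cref{ass:curvature_U_alpha}$(\beta)$ is exactly what makes the $\norm[2]{x}$-weights $(1+\norm[\beta]{x})^{-3}$ and $(1+\norm[3\beta/4]{x})^{-4}$ comparable --- leaving only $\bigO(\gamma\norm[2]{z})$. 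Controlling the region $\norm{z}>\norm{x}/(4\sqrt{2\gamma})$ by \Cref{lem_tail_chi2} then gives $\int\max(0,\alphamala_{\gamma}(x,z))\varphibf(z)\rmd z\leq C\gamma d+\exp(-\norm[2]{x}/(C\gamma))$ for $\norm{x}$ large, which combined with the ULA bound of the previous paragraph yields $\Rmala_{\gamma}\Wbeta_{\baretabeta}(x)\leq(1-\tildembeta\gamma)\Wbeta_{\baretabeta}(x)$ there (with $\raymaladriftbeta$ enlarged if needed); for $\norm{x}<\raymaladriftbeta$, bounding $\alphamala_{\gamma}$ crudely from \eqref{lem:durmus_moulines_saksman} by a polynomial in $(\norm{x},\norm{z})$ gives $\abs{\alphamala_{\gamma}(x,z)}\leq\gamma\,C(\raymaladriftbeta)(1+\norm[4]{z})$, hence $\Rmala_{\gamma}\Wbeta_{\baretabeta}(x)\leq\Wbeta_{\baretabeta}(x)+\bdriftmalabeta\gamma$, which completes the drift condition.

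The main obstacle is this MALA transfer, and within it the uniform-in-$x$ estimate $\alphamala_{\gamma}(x,z)\leq C\gamma\norm[2]{z}$: one must verify that every $\gamma^{2}$- and higher-order positive term of $\alphamala_{\gamma}$ produced by $\nabla U(x)$ is genuinely dominated by $-\gamma^{2}A_{4,0,\gamma}(x,z)$ simultaneously for all large $\norm{x}$, which forces one to keep precise track of the polynomial decay rates of \Cref{ass:curvature_U_alpha}$(\beta)$ --- the lower Hessian bound decaying like $\norm[-\beta]{x}$ versus the gradient and upper Hessian bounds decaying like $\norm[-3\beta/4]{x}$ --- and the resulting balance of exponents is what pins down the admissible range $\ocint{0,\Gamma_{\beta}}$ of step sizes. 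By comparison, the ULA drift computation and the soft $1$-small-set and irreducibility arguments are routine given \Cref{propo:super_lyap_ula}, \Cref{lem:quadratic_behaviour_alpha}, \Cref{lem:upper_bound_behaviour_alpha}, \Cref{lem_tail_chi2} and the proof of \Cref{theo:V-geom_ergo_MALA_just_C2}.
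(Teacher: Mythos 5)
Your proposal is correct and follows the same overall architecture as the paper: a Foster--Lyapunov drift for $\Rmala_{\gamma}$ in the norm $\Wbeta_{\baretabeta}$, obtained by first proving a ULA drift, then transferring it via \eqref{eq:diff-rula-rmala} together with a uniform-in-$x$ bound $\alphamala_{\gamma}(x,z)\leq C\gamma\norm[2]{z}$ for large $\norm{x}$ and $\norm{z}\leq\norm{x}/(4\sqrt{2\gamma})$ (this is exactly \Cref{lem:bound_alpha_mala_2_alpha}, including the exponent balance $(1+\norm[\beta]{x})^{-3}\geq(1+\norm[3\beta/4]{x})^{-4}$ and the $\epsilon$-splitting of cross terms), the Gaussian tail bound of \Cref{lem_tail_chi2}, and finally the soft small-set/irreducibility argument with \cite[Theorem~15.2.4]{douc:moulines:priouret:soulier:2018}. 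Two sub-steps differ from the paper's execution and are worth noting. For the ULA drift (\Cref{propo:super_lyap_beta_ula}) the paper exponentiates the $1$-Lipschitz function $y\mapsto(1+\norm[2]{y})^{1/2}$ through the Gaussian log-Sobolev inequality and Cauchy--Schwarz, whereas you use a second-order Taylor bound and an explicit Gaussian integral; both yield a prefactor $1+\bigO(\gamma)$ times $\Wbeta_{\baretabeta}(x-\gamma\nabla U(x))$, so the two are interchangeable. On the compact region $\norm{x}\leq\raymaladriftbeta$, the paper's \Cref{v_beta:propo:lyap_mala_total} invokes \Cref{lem:bound_alpha_mala_1}, whose statement formally requires the $\rmC^3$ assumption \Cref{ass:regularity-U_C3}, which \Cref{theo:V-geom_ergo_MALA_just_C2_beta} does not assume; your crude bound $\abs{\alphamala_{\gamma}(x,z)}\leq\gamma C(\raymaladriftbeta)(1+\norm[4]{z})$, read off directly from the decomposition \eqref{lem:durmus_moulines_saksman} using only \Cref{ass:regularity-U}, is all that is needed there (an $\bigO(\gamma)$ rather than $\bigO(\gamma^{3/2})$ bound suffices on a compact set) and keeps the argument within the stated hypotheses.
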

\begin{proof}
  The proof is identical to the one of \Cref{theo:V-geom_ergo_MALA_just_C2} using \Cref{v_beta:propo:lyap_mala_total} below in place of \Cref{propo:lyap_mala_total}. Therefore it is omitted. 
\end{proof}

\begin{theorem}
  \label{theo:V-geom_ergo_MALA_beta}
  Assume \Cref{ass:regularity-U}, \Cref{ass:regularity-U_C3} and \Cref{ass:curvature_U_alpha}$(\beta)$ for $\beta \in\coint{0,1}$.
Then, there exist $\bGamma_{\beta},\barA_{\beta} > 0$ (defined in \eqref{eq:def_bGamma_final_v_beta} and \eqref{v_beta:eq:def_bA_final}), such that for any $\bgamma \in\ocint{0,\bGamma_{\beta}}$, there exist $C_{\bgamma,\beta} \geq 0$ and $\rho_{\bgamma,\beta}\in\coint{0,1}$ (given in \eqref{eq:cst_bornes_mala_v_beta}) satisfying for any $x \in \rset^d$ and $\gamma \in\ocint{0,\bgamma}$,
  \begin{equation}
    \label{v_beta:eq:5}
    \Vnorm[\Wbeta_{\baretabeta}]{\updelta_x \Rmala_{\gamma} - \pi} \leq C_{\bgamma,\beta} \rho_{\bgamma,\beta}^{\gamma k} \{ \Wbeta_{\baretabeta}(x) + \pi(\Wbeta_{\baretabeta})\} \eqsp,
  \end{equation}
  where $  \baretabeta=\mttbeta/2^5$ and $\pi(\Veta_{\bareta}) \leq \barA_{\beta}$.
\end{theorem}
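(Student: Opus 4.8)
The plan is to reproduce the architecture of \Cref{sec:proof:theo:V-geom_ergo_MALA}, replacing the Gaussian Lyapunov function $V_{\bareta}$ of \eqref{eq:def_V_eta} by the sub-Gaussian function $W_{\baretabeta}$ of \eqref{eq:lyap_sub_quad_beta} with $\baretabeta=\mttbeta/2^5$, and the curvature input \Cref{ass:curvature_U} by \Cref{ass:curvature_U_alpha}$(\beta)$. It suffices to establish the two ingredients \ref{item:condition_ergo_I} and \ref{item:condition_ergo_II} with $V_{\bareta}$ replaced by $W_{\baretabeta}$; note that the sublevel sets $\{W_{\baretabeta}\leq M\}$ are again Euclidean balls $\ball{0}{((\log M/\baretabeta)^2-1)^{1/2}}$, so the perturbation argument for the minorization takes exactly the same form. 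Once both hold for some $\bGamma_{\beta}>0$, applying \cite[Theorem~19.4.1]{douc:moulines:priouret:soulier:2018} to $\Rmala_{\gamma}^{\ceil{1/\gamma}}$ as after \eqref{eq:def-discrete-drift_2} yields \eqref{v_beta:eq:5} together with $\pi(W_{\baretabeta})\leq\barA_{\beta}<\infty$, the constants being read off from the formulas analogous to \eqref{eq:def_bGamma_final}, \eqref{eq:def_bA_final}, \eqref{eq:cst_bornes_mala}.

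For the drift \ref{item:condition_ergo_I}, I would first prove a sub-Gaussian drift for ULA, the analogue of \Cref{propo:super_lyap_ula}: for $\gamma$ below an explicit threshold,
\begin{equation*}
  \Qgam W_{\baretabeta}(x)\leq \exp\Bigl(-c\,\gamma\,\frac{\norm[2]{x}}{(1+\norm{x}^{\beta})(1+\norm[2]{x})^{1/2}}\Bigr)W_{\baretabeta}(x)+b\,\gamma\,\1_{\ball{0}{\ray}}(x)\eqsp,
\end{equation*}
for suitable $c,b,\ray>0$. The difficulty, compared with \eqref{eq:1:propo:super_lyap_mala}, is that the Gaussian--Gaussian conjugacy that made that step an exact identity fails for $W_{\baretabeta}$. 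I would instead use that $x\mapsto(1+\norm[2]{x})^{1/2}$ is $1$-Lipschitz, so that $(1+\norm[2]{x-\gamma\nabla U(x)+\sqrt{2\gamma}z})^{1/2}\leq(1+\norm[2]{x-\gamma\nabla U(x)})^{1/2}+\sqrt{2\gamma}\norm{z}$ and hence $\Qgam W_{\baretabeta}(x)\leq\exp\{\baretabeta(1+\norm[2]{x-\gamma\nabla U(x)})^{1/2}\}\int_{\rset^d}\exp(\baretabeta\sqrt{2\gamma}\norm{z})\varphibf(z)\rmd z$; the Gaussian integral equals $1+\bigO(\gamma)$ by a standard bound, while by \Cref{lem:quadratic_behaviour_alpha} and \Cref{lem:upper_bound_behaviour_alpha}, for $\norm{x}\geq\tKttbeta$, $\norm[2]{x-\gamma\nabla U(x)}\leq\norm[2]{x}-\gamma\mttbeta\norm[2]{x}/(1+\norm{x}^{\beta})+\bigO(\gamma^{2})\norm[2]{x}/(1+\norm{x}^{3\beta/4})^{2}$, and a concavity estimate for $t\mapsto(1+t)^{1/2}$ turns this into the displayed contraction. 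I would then transfer this bound to MALA through \eqref{eq:diff-rula-rmala}, following \Cref{propo:lyap_mala_total}: \Cref{lem:bound_alpha_mala_1} is unchanged (it uses only \Cref{ass:regularity-U} and \Cref{ass:regularity-U_C3}), and one needs the sub-quadratic analogue of \Cref{lem:bound_alpha_mala_2}, i.e.\ $\alphamala_{\gamma}(x,z)\leq C^{(\beta)}_{2,\bgamma}\gamma\norm[2]{z}$ on $\norm{x}\geq\max(2\Kttbeta,\tKttbeta)$, $\norm{z}\leq\norm{x}/(4\sqrt{2\gamma})$, proved from the decomposition \eqref{lem:durmus_moulines_saksman} just as before: the only change is that the term $A_{4,0,\gamma}$ is now bounded below by \Cref{lem:quadratic_behaviour_alpha} by a quantity of order $\norm[2]{x}/(1+\norm{x}^{\beta})^{2}$, which still dominates the $\gamma\,\norm[2]{\nabla U(x)}$-type error terms (of order $\gamma\,\norm[2]{x}/(1+\norm{x}^{3\beta/4})^{2}$ via \Cref{lem:upper_bound_behaviour_alpha}) once $\gamma$ is small enough. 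Combining these with the tail bound \Cref{lem_tail_chi2} on $\{\norm{z}>\norm{x}/(4\sqrt{2\gamma})\}$ — where now $W_{\baretabeta}(x)\exp(-c\norm[2]{x}/\gamma)$ is harmless since $W_{\baretabeta}$ grows subexponentially — gives $\Rmala_{\gamma}W_{\baretabeta}\leq(1-\tildembeta\gamma)W_{\baretabeta}+b^{(\beta)}\gamma\1_{\ball{0}{\raymaladriftbeta}}$ for $\gamma$ below a threshold $\Gamma_{\beta}$; this is \ref{item:condition_ergo_I} with $\lambda=\rme^{-\tildembeta}$ after bounding $\1_{\ball{0}{\raymaladriftbeta}}\leq1$ and $1-\tildembeta\gamma\leq\lambda^{\gamma}$, and the induction leading to the horizon-$\ceil{1/\gamma}$ drift \eqref{eq:def-discrete-drift_2} carries over verbatim.

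For the minorization \ref{item:condition_ergo_II}, \Cref{propo:small_set_ula} is unaffected, since it rests only on \Cref{ass:regularity-U}. It remains, as in \Cref{propo:small_set_mala}, to bound $\tvnorm{\updelta_x\Qgam^{\ceil{1/\gamma}}-\updelta_x\Rkerg^{\ceil{1/\gamma}}}$: the telescoping identity and \Cref{lem:bound_alpha_mala_1} give the same per-step bound of order $\gamma^{3/2}(d+d^{2}+\int\norm[2]{y}\Qgam^{k}(x,\rmd y))$, so one only needs a replacement of \Cref{propo:lyap_mala}. Under \Cref{ass:curvature_U_alpha}$(\beta)$ one no longer obtains a one-step contraction of the second moment of ULA, but the crude bound $\int\norm[2]{y}\Qgam^{k}(x,\rmd y)\leq(1+C\gamma)\norm[2]{x}+C'$, valid uniformly for $k\leq\ceil{1/\gamma}$ because one ULA step increases the second moment by at most a factor $1+\bigO(\gamma^{2})$ plus an additive $\bigO(\gamma)$, is all that is needed; summing $\ceil{1/\gamma}$ per-step bounds yields $\tvnorm{\updelta_x\Qgam^{\ceil{1/\gamma}}-\updelta_x\Rkerg^{\ceil{1/\gamma}}}\leq C\gamma^{1/2}(d+d^{2}+\norm[2]{x}+C')$, and the perturbation argument of \Cref{propo:small_set_mala} then produces \eqref{eq:def-minorization} for $W_{\baretabeta}$ on any ball, which is what \cite[Theorem~19.4.1]{douc:moulines:priouret:soulier:2018} requires.

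The crux of the argument — and where I expect the main difficulty — is the ULA sub-Gaussian drift estimate of the second paragraph: losing the exact Gaussian integral forces the Lipschitz-splitting detour, and one must verify that the resulting correction factor $\int\exp(\baretabeta\sqrt{2\gamma}\norm{z})\varphibf(z)\rmd z$ is genuinely of order $\gamma$ (not $\gamma^{1/2}$) and does not swamp the contraction, which near $\norm{x}=\tKttbeta$ is only of order $\gamma\,\norm[2]{x}/\{(1+\norm{x}^{\beta})(1+\norm[2]{x})^{1/2}\}$; taking $\baretabeta=\mttbeta/2^5$ rather than $\mttbeta/16$ is precisely what provides the slack needed here. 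Everything else is bookkeeping: all thresholds and radii now depend on $\beta$ through $\mttbeta,\Lttbeta,\Kttbeta$ and the constants of \Cref{lem:quadratic_behaviour_alpha}--\Cref{lem:upper_bound_behaviour_alpha}, but no new idea is required.
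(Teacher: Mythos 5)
Your overall architecture — a Foster--Lyapunov drift for $\Wbeta_{\baretabeta}$ obtained by first treating ULA and then transferring to MALA via \eqref{eq:diff-rula-rmala} and a sub-quadratic analogue of \Cref{lem:bound_alpha_mala_2}, plus a minorization obtained from \Cref{propo:small_set_ula} and a crude, curvature-free bound on $\int\norm[2]{y}\Qgam^{k}(x,\rmd y)$ uniform over $k\leq\ceil{1/\gamma}$ — is exactly the paper's (\Cref{v_beta:propo:lyap_mala_total}, \Cref{lem:bound_alpha_mala_2_alpha}, \Cref{propo:lyap_mala_v2_just_lip}, \Cref{v_beta_propo:small_set_mala}). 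However, the one step you yourself single out as the crux fails as you propose to carry it out. After the splitting $(1+\norm[2]{x-\gamma\nabla U(x)+\sqrt{2\gamma}z})^{1/2}\leq(1+\norm[2]{x-\gamma\nabla U(x)})^{1/2}+\sqrt{2\gamma}\norm{z}$, the correction factor is
\begin{equation}
\int_{\rset^d}\rme^{\baretabeta\sqrt{2\gamma}\norm{z}}\varphibf(z)\rmd z \;\geq\; \exp\bigl(\baretabeta\sqrt{2\gamma}\,\PE[\norm{Z}]\bigr)\;=\;1+\Theta\bigl(\sqrt{\gamma d}\bigr)\eqsp,
\end{equation}
by Jensen, since $\PE[\norm{Z}]\asymp\sqrt{d}>0$. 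So the factor is of order $1+\bigO(\gamma^{1/2})$, \emph{not} $1+\bigO(\gamma)$: the answer to the question you flag for verification is negative. This is fatal for the statement you need, because the contraction available for $\norm{x}$ of order one is only $\exp(-c\gamma\norm{x}^{1-\beta})$; to beat a $\sqrt{\gamma d}$ loss you would need $\norm{x}^{1-\beta}\gtrsim\sqrt{d/\gamma}$, so the radius of the set on which the drift fails would blow up as $\gamma\downarrow0$, and with it the drift constant $b$, the small-set level $M$, and hence $1-\varepsilonula(\ray)$ — destroying precisely the uniformity in $\gamma\in\ocint{0,\bgamma}$ that \eqref{v_beta:eq:5} asserts.

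The fix, which is what \Cref{propo:super_lyap_beta_ula} does, is to \emph{center} the Lipschitz function at its mean rather than at the value of the proposal mean: since $\updelta_x\Qula_{\gamma}$ is Gaussian with covariance $2\gamma\Id$, the Herbst argument from its log-Sobolev inequality gives, for the $1$-Lipschitz map $F(y)=(1+\norm[2]{y})^{1/2}$,
\begin{equation}
\Qula_{\gamma}\Wbeta_{\baretabeta}(x)\leq\exp\Bigl(\gamma\baretabeta^{2}+\baretabeta\int_{\rset^d}F(y)\,\Qula_{\gamma}(x,\rmd y)\Bigr)\eqsp,
\end{equation}
so the multiplicative loss is $\rme^{\gamma\baretabeta^{2}}=1+\bigO(\gamma)$, and the mean is then controlled by Cauchy--Schwarz, $\int F\,\rmd\Qula_{\gamma}(x,\cdot)\leq(1+\norm[2]{x-\gamma\nabla U(x)}+2\gamma d)^{1/2}$, where the $2\gamma d$ sits harmlessly inside the square root. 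With this replacement your argument goes through; the remaining inaccuracy (one ULA step multiplies the second moment by $1+\bigO(\gamma)$, not $1+\bigO(\gamma^{2})$, so the uniform bound over $k\leq\ceil{1/\gamma}$ is $\rme^{C}\norm[2]{x}+C'$ rather than $(1+C\gamma)\norm[2]{x}+C'$) is immaterial for the perturbation step.
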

\begin{proof}
  The proof is postponed to \Cref{sec:proof:theo:V-geom_ergo_MALA_v_beta}.
\end{proof}

\subsection{Bounds on the acceptance ratio}

\begin{lemma}
  \label{lem:bound_alpha_mala_2_alpha}
  Assume \Cref{ass:regularity-U} and \Cref{ass:curvature_U_alpha}$(\beta)$ for $\beta\in\coint{0,1}$. Then, for any $\bgamma \leq (4\Ltt)^{-1} \wedge (\mttbeta^3/(2^4 \Lttbeta^4))$, there exists an explicit constant (see \eqref{eq:def_C_2_bgamma_beta}) $C_{2,\bgamma,\beta} < \infty$ such that for any  $\gamma \in \ocint{0 ,\bgamma}$, $x,z \in\rset^d$, $\norm{x} \geq \max(2 \Kttbeta, \tKttbeta,\bKttbeta)$, $\tKttbeta$ given in \Cref{lem:quadratic_behaviour_alpha} and $\bKttbeta$ in \Cref{lem:upper_bound_behaviour_alpha}, and $\norm{z} \leq \norm{x}/(4\sqrt{2\gamma})$, it holds
  \begin{equation}
          \label{eq:lem:bound_alpha_mala_2_alpha}
    \alphamala_{\gamma}(x,z) \leq       C_{2,\bgamma}\gamma\norm[2]{z} \eqsp.
  \end{equation}
\end{lemma}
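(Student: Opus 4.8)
The plan is to follow the proof of \Cref{lem:bound_alpha_mala_2} line by line, replacing the constant lower curvature $\mtt$ by the position-dependent quantity $\mttbeta/(1+\norm[\beta]{x})$ and the rough gradient bounds by those attached to \Cref{ass:curvature_U_alpha}$(\beta)$. Concretely, I would start from the decomposition \eqref{lem:durmus_moulines_saksman}, $\alphamala_{\gamma}(x,z)=\sum_{k=2}^{6}\gamma^{k/2}A_{k,\gamma}(x,z)$, and fix $\gamma\in\ocint{0,\bgamma}$ with $\bgamma\leq(4\Ltt)^{-1}\wedge(\mttbeta^3/(2^4\Lttbeta^4))$ and $x,z$ as in the statement. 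The first step is to localise the segment $x_t=x+t\{-\gamma\nabla U(x)+\sqrt{2\gamma}z\}$: exactly as in the proof of \Cref{lem:bound_alpha_mala_2}, \Cref{lem:bounde_pertubhessian} together with $\norm{z}\leq\norm{x}/(4\sqrt{2\gamma})$ and $\gamma\Ltt\leq 1/4$ gives $\norm{x_t}\geq\norm{x}/2$, while the same constraint and $\norm{\nabla U(x)}\leq\Ltt\norm{x}$ yield $\norm{x_t}\leq(3/2)\norm{x}$ for $t\in\ccint{0,1}$. Since $\norm{x}\geq 2\Kttbeta$, this places $x_t$ outside $\ball{0}{\Kttbeta}$, so \eqref{eq:7_alpha_2} gives that $\rmD^2 U(x_t)$ is positive semi-definite with $c\,\mttbeta/(1+\norm[\beta]{x})\leq\rmD^2 U(x_t)\{y^{\otimes 2}\}\leq C\,\Lttbeta/(1+\norm[3\beta/4]{x})$ for unit $y$ and absolute constants $c,C$ depending only on $\beta$ (through $\norm{x}/2\leq\norm{x_t}\leq(3/2)\norm{x}$); in particular $\normop{\rmD^2 U(x_t)}\leq C\Lttbeta/(1+\norm[3\beta/4]{x})$.

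Next I would bound the six terms. The two quadratic-in-$z$ contributions, $\gamma A_{2,\gamma}$ and the ``square'' half of $\gamma^2 A_{4,\gamma}$, are each $\leq C\gamma\norm[2]{z}$ (using $\gamma\leq\bgamma$ and $\normop{\rmD^2 U(x_t)}\leq\Ltt$). The genuinely negative contribution comes from $-\gamma^2 A_{4,0,\gamma}(x,z)$ with $A_{4,0,\gamma}(x,z)=\int_0^1\rmD^2 U(x_t)[\nabla U(x)^{\otimes 2}]t\,\rmd t$: bounding the integrand below by $(c\mttbeta/(1+\norm[\beta]{x}))\norm[2]{\nabla U(x)}$ and invoking the lower bound $\norm{\nabla U(x)}\geq\mttbeta\norm{x}/(1+\norm[\beta]{x})$ of \Cref{lem:quadratic_behaviour_alpha}, I get $\gamma^2 A_{4,0,\gamma}(x,z)\geq c\gamma^2\mttbeta^3\norm[2]{x}/(1+\norm[\beta]{x})^3\geq c\gamma^2\mttbeta^3\norm[2]{x}/(1+\norm[3\beta]{x})$, using $(1+\norm[\beta]{x})^3\leq 4(1+\norm[3\beta]{x})$. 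For the remaining ``cross'' terms I would use $\normop{\rmD^2 U(x_t)}\leq C\Lttbeta/(1+\norm[3\beta/4]{x})$ and the upper gradient bound $\norm{\nabla U(x)}\leq 2\Lttbeta\norm{x}/(1+\norm[3\beta/4]{x})$ of \Cref{lem:upper_bound_behaviour_alpha}, which gives $\gamma^{3/2}\abs{A_{3,\gamma}}\leq C\gamma^{3/2}\Lttbeta^2\norm{z}\norm{x}/(1+\norm[3\beta/2]{x})$, $\gamma^{5/2}\abs{A_{5,\gamma}}\leq C\gamma^{5/2}\Lttbeta^3\norm{z}\norm{x}/(1+\norm[3\beta/2]{x})$, and $\gamma^3 A_{6,\gamma}\leq C\gamma^3\Lttbeta^4\norm[2]{x}/(1+\norm[3\beta]{x})$.

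Finally I would absorb the cross terms into $-\gamma^2 A_{4,0,\gamma}$. For $\gamma^3 A_{6,\gamma}$ this is immediate: its coefficient $C\gamma^3\Lttbeta^4$ is $\leq(1/2)c\gamma^2\mttbeta^3$ once $\gamma\leq\mttbeta^3/(2^4\Lttbeta^4)$, with the same denominator $1+\norm[3\beta]{x}$. For $A_{3,\gamma}$ and $A_{5,\gamma}$ I would apply $ab\leq(\epsilon/2)a^2+(2\epsilon)^{-1}b^2$ with $a=\norm{z}$, $b=\norm{x}/(1+\norm[3\beta/2]{x})$, using $(1+\norm[3\beta/2]{x})^2\geq 1+\norm[3\beta]{x}$; choosing $\epsilon$ proportional to $\gamma^{-1/2}\Lttbeta^2/(c\mttbeta^3)$ makes the resulting $\norm[2]{x}/(1+\norm[3\beta]{x})$ term have coefficient at most $(1/2)c\gamma^2\mttbeta^3$, hence cancelled by the negative term, while the residual $\norm[2]{z}$ term has coefficient $O(\gamma)$. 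Summing the $\norm[2]{z}$ coefficients from $A_{2,\gamma}$, the square half of $A_{4,\gamma}$, and the Young estimates for $A_{3,\gamma},A_{5,\gamma}$ then yields $\alphamala_{\gamma}(x,z)\leq C_{2,\bgamma,\beta}\gamma\norm[2]{z}$ with an explicit constant $C_{2,\bgamma,\beta}$, which defines \eqref{eq:def_C_2_bgamma_beta}.

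The main obstacle is the denominator bookkeeping: one must check that after Young's inequality every cross term lands against a power of $\norm{x}$ in the denominator that is at least $1+\norm[3\beta]{x}$, i.e.\ at least as large as the one produced by $-\gamma^2 A_{4,0,\gamma}$. It is precisely for this that the upper Hessian bound in \Cref{ass:curvature_U_alpha}$(\beta)$ carries the exponent $3\beta/4$: since $2\cdot(3\beta/2)=3\beta$ the squared $A_3$ cross term carries $1+\norm[3\beta]{x}$, and the $A_5$ and $A_6$ terms carry an even larger power; a strictly weaker decay rate in that upper bound would break the absorption. Everything else reduces to the same elementary estimates (AM--GM, $\normop{\rmD^2 U}\leq\Ltt$, $1-\min(1,\rme^{-t})\leq\max(0,t)$) already used for \Cref{lem:bound_alpha_mala_2}.
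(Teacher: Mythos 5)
Your proposal follows essentially the same route as the paper's proof: the decomposition \eqref{lem:durmus_moulines_saksman}, localisation of $x_t$ via \Cref{lem:bounde_pertubhessian}, the lower bound on $A_{4,0,\gamma}$ from \Cref{ass:curvature_U_alpha}$(\beta)$ and \Cref{lem:quadratic_behaviour_alpha}, the cross-term bounds from \Cref{lem:upper_bound_behaviour_alpha}, and absorption by Young's inequality; your denominator comparison $(1+\norm[\beta]{x})^{3}\leq 4(1+\norm[3\beta]{x})$ plays exactly the role of the paper's $(1+\norm[\beta]{x})^{-3}\geq(1+\norm[3\beta/4]{x})^{-4}$, and your $\gamma$-dependent choice of $\epsilon$ is equivalent to the paper's constant $\epsilon$ with the powers of $\gamma$ folded into the Young factors. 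The argument is correct as proposed.
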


\begin{proof}[of \Cref{lem:bound_alpha_mala_2}]
  We show \eqref{eq:lem:bound_alpha_mala_2_alpha} using the decomposition \eqref{lem:durmus_moulines_saksman}. Let  $\bgamma \leq (4\Ltt)^{-1} \wedge (\mttbeta^3/(2^4 \Lttbeta^4))$, $\gamma \in \ocint{0,\bgamma}$, $x,z \in \rset^d$ satisfying $\norm{x} \geq \max(2 \Kttbeta, \tKttbeta,\bKttbeta)$ and $\norm{z} \leq \norm{x}/(4\sqrt{2\gamma})$.
  Note that by \Cref{lem:bounde_pertubhessian}, for any $t \in \ccint{0,1}$, $\norm{x_t} \geq \norm{x}/2 \geq \Kttbeta$. Therefore, using \eqref{lem:durmus_moulines_saksman}, \Cref{ass:regularity-U}, \Cref{ass:curvature_U_alpha}$(\beta)$, \Cref{lem:upper_bound_behaviour_alpha},   and $\abs{ab} \leq (a^2+b^2)/2$, we get setting \[ A_{4,0,\gamma}(x,z)=  \int_{0}^1 \DD^2 U(x_t)  [\nabla U(x)^{\otimes 2}] t \rmd t \eqsp, \]
  \begin{multline}
    \label{eq:2_beta}
        \alphamala_{\gamma}(x,z) \leq 2\gamma \Ltt \norm[2]{z} -\gamma^2 A_{4,0,\gamma}(x,z) +2(2\gamma)^{3/2}\Lttbeta^2 \norm{z}\norm{x}/(1+\norm[3\beta/4]{x})^2 + (\gamma^2/2) \Ltt^2 \norm[2]{z} \\+ (2\gamma^5)^{1/2} \Lttbeta^3 \norm{z}\norm{x}/(1+\norm[3\beta/4]{x})^3 + \gamma^3 \Lttbeta^4 \norm[2]{x} /(1+\norm[3\beta/4]{x})^4\eqsp.
  \end{multline}
  Using $\norm{x_t} \geq \norm{x}/2 \geq \Kttbeta$ again, \Cref{ass:curvature_U_alpha}$(\beta)$ and  \Cref{lem:quadratic_behaviour_alpha} imply since $\norm{x} \geq \max(2 \Kttbeta, \tKttbeta,\bKttbeta)$ that
  \begin{equation}
    \label{eq:3_beta}
    A_{4,0,\gamma}(x,z) \geq (\mttbeta/2)^3\norm[2]{x}/(1+\norm[\beta]{x})^3  \eqsp.
  \end{equation}
Note that using that $(1+a)^{3/4} \leq 1 + a^{3/4}$ for $a \geq 0$, therefore $(1+\norm[\beta]{x})^{-3} \geq (1+\norm[3\beta/4]{x})^{-4}$.   Combining this result with \eqref{eq:3_beta} in \eqref{eq:2_beta}, we obtain using $\gamma \leq \bgamma \leq \mttbeta^3/(2^4\Lttbeta^4)$
  \begin{align}
&    \alphamala_{\gamma}(x,z)
  \leq 2\gamma \Ltt \norm[2]{z} -\gamma^2(\mttbeta^3/2^4) \norm[2]{x}/(1+\norm[\beta]{x})^3 \\
     &  +2(2\gamma)^{3/2}\Lttbeta^2 \norm{z}\norm{x}/(1+\norm[3\beta/4]{x})^2 + (\gamma^2/2) \Ltt^2 \norm[2]{z} + (2\gamma^5)^{1/2} \Lttbeta^3 \norm{z}\norm{x}/(1+\norm[3\beta/4]{x})^{3}\eqsp,
  \end{align}
Since for any $a, b \geq 0$ and $\epsilon > 0$, $ab \leq  (\epsilon/2) a^2 + 1/(2\epsilon) b^2$, and $(1+\norm[\beta]{x})^{-3} \geq (1+\norm[3\beta/4]{x})^{-4}$,  we obtain
  \begin{align}
    &    \alphamala_{\gamma}(x,z) \leq \gamma \norm[2]{z} \defEns{2 \Ltt+ 2^{3/2} \Lttbeta^2 \epsilon^{-1} + (\gamma/2)\Ltt^2 + 2^{-1/2}  \gamma^{3/2} \Lttbeta^3 \epsilon^{-1}  }\\
&\qquad \qquad \qquad     + \norm[2]{x} \gamma^2 \parentheseDeux{\epsilon\defEns{2^{3/2} \Lttbeta^2 + 2^{-1/2} \bgamma^{1/2} \Lttbeta^3 } -\mttbeta^3/2^4}/(1+\norm[\beta]{x})^{3} \eqsp.
  \end{align}
  Choosing $\epsilon = (\mttbeta^3/2^4) \defEnsLigne{2^{3/2} \Lttbeta^2 + 2^{-1/2} \bgamma^{1/2} \Lttbeta^3}^{-1}$ concludes the proof with
  \begin{equation}
\label{eq:def_C_2_bgamma_beta}
C_{2,\bgamma,\beta} =     2 \Ltt+ 2^{3/2} \Lttbeta^2 \epsilon^{-1} + (\bgamma/2)\Ltt^2 + 2^{-1/2}  \bgamma^{3/2} \Lttbeta^3 \epsilon^{-1} \eqsp. 
  \end{equation}
\end{proof}

\subsection{Lyapunov drift condition}

\begin{proposition}
  \label{propo:super_lyap_beta_ula}
  Assume \Cref{ass:regularity-U}, \Cref{ass:curvature_U_alpha}$(\beta)$ for $\beta \in\coint{0,1}$ and let $\bgamma \in \ocint{ 0,\mttbeta/(2^5\Lttbeta^2)}$. Then for any $\gamma \in \ocintLigne{0,\bgamma}$ and $x \in\rset^d$,
  \begin{equation}
    \label{eq:propo:super_lyap_beta_ula}
    \Qula_{\gamma} \Wbeta_{\baretabeta}(x) \leq \exp\defEns{-\frac{\gamma \mttbeta \baretabeta \norm[]{x}}{16(1+\norm[\beta]{x})}} \Wbeta_{\baretabeta}(x) + \gamma \bdriftulabeta \1_{\ball{0}{\rayulabeta}}(x) \eqsp, 
  \end{equation}
  where
  \begin{align}
    \label{eq:def_rayulabeta}
    \baretabeta & =  \mttbeta/2^5\eqsp,
                  \qquad     \rayulabeta  = 1 \vee \bKttbeta \vee \tKttbeta \vee[2^5d/\mttbeta]^{1/(2-\beta)}\eqsp, \\
    \bdriftulabeta &= \defEns{ \baretabeta( \Ltt(1+ \Ltt/2)\rayulabeta^2 +  d+\baretabeta)+\frac{ \mttbeta \baretabeta (1+\rayulabeta^2)^{1/2}}{16(1+(\rayulabeta)^{\beta})}} \rme^{\gamma \baretabeta( \Ltt(1+ \Ltt/2)\rayulabeta^2 +  d+\baretabeta)}   \eqsp, 
  \end{align}
$  \bKttbeta$ and $ \tKttbeta$ are defined in \Cref{lem:upper_bound_behaviour_alpha} and \Cref{lem:quadratic_behaviour_alpha} respectively.
\end{proposition}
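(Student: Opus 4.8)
\textbf{Proof plan for \Cref{propo:super_lyap_beta_ula}.}
The argument runs parallel to that of \Cref{propo:super_lyap_ula}, with the Gaussian Lyapunov function replaced by the sub-Gaussian $\Wbeta_{\baretabeta}$. Fix $\gamma\in\ocint{0,\bgamma}$ and set $a=a(x)=x-\gamma\nabla U(x)$, so that, with $\ZrGaussian$ a standard $d$-dimensional Gaussian vector,
\begin{equation*}
  \Qula_{\gamma}\Wbeta_{\baretabeta}(x)=\PE\bigl[\exp\bigl(\baretabeta(1+\norm[2]{a+\sqrt{2\gamma}\ZrGaussian})^{1/2}\bigr)\bigr]\eqsp.
\end{equation*}
In contrast with the quadratic case \eqref{eq:1:propo:super_lyap_mala}, this integral is not Gaussian and cannot be evaluated in closed form, so the whole point is to dominate it by a Gaussian integral after linearising the square root, and then to reproduce, in the two regimes $\norm{x}\geq\rayulabeta$ and $\norm{x}<\rayulabeta$, the argument of \Cref{propo:super_lyap_ula}.

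The first step uses the elementary inequality, valid for all $u,v\in\rset^d$ by concavity of $t\mapsto\sqrt t$ on $\rset_{+}$,
\begin{equation*}
  (1+\norm[2]{u+v})^{1/2}\leq(1+\norm[2]{u})^{1/2}+\frac{2\ps{u}{v}+\norm[2]{v}}{2(1+\norm[2]{u})^{1/2}}\eqsp.
\end{equation*}
Applying it with $u=a$, $v=\sqrt{2\gamma}\ZrGaussian$, exponentiating and integrating against $\ZrGaussian$ via $\PE[\exp(\ps{w}{\ZrGaussian}+\lambda\norm[2]{\ZrGaussian})]=(1-2\lambda)^{-d/2}\exp(\norm[2]{w}/(2(1-2\lambda)))$ --- licit since $\baretabeta\gamma$ is small on $\ocint{0,\bgamma}$ because $\mttbeta\leq\Lttbeta$ --- yields the uniform bound $\Qula_{\gamma}\Wbeta_{\baretabeta}(x)\leq\exp(\baretabeta(1+\norm[2]{a})^{1/2})\,\rme^{2\baretabeta\gamma d+2\baretabeta^{2}\gamma}$ and, crucially, the sharper tail bound
\begin{equation*}
  \Qula_{\gamma}\Wbeta_{\baretabeta}(x)\leq\exp\bigl(\baretabeta(1+\norm[2]{a})^{1/2}\bigr)\,\rme^{2\baretabeta\gamma d/\norm{a}+2\baretabeta^{2}\gamma}\qquad\text{whenever }\norm{a}\geq1\eqsp,
\end{equation*}
since then the coefficient of $\norm[2]{\ZrGaussian}$ is divided by $(1+\norm[2]{a})^{1/2}\geq\norm{a}$. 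The replacement of $\gamma d$ by $\gamma d/\norm{a}$ in the tail is the delicate point: it is precisely what keeps the exceptional ball $\ball{0}{\rayulabeta}$ \emph{independent of $\gamma$} with radius $[2^{5}d/\mttbeta]^{1/(2-\beta)}$ --- a naive $\gamma d$ correction would force a radius of order $(d/\mttbeta)^{1/(1-\beta)}$, larger for $\beta\in\coint{0,1}$.

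It then remains to split on $\norm{x}$. For $\norm{x}\geq\rayulabeta$ I would use \Cref{ass:regularity-U}, \Cref{ass:curvature_U_alpha}$(\beta)$, \Cref{lem:quadratic_behaviour_alpha} ($\ps{\nabla U(x)}{x}\geq(\mttbeta/2)\norm[2]{x}/(1+\norm[\beta]{x})$) and \Cref{lem:upper_bound_behaviour_alpha} ($\norm{\nabla U(x)}\leq2\Lttbeta\norm{x}/(1+\norm[3\beta/4]{x})$), together with $\gamma\leq\bgamma\leq\mttbeta/(2^{5}\Lttbeta^{2})$ and $(1+\norm[3\beta/4]{x})^{2}\geq1+\norm[\beta]{x}$ (for $\norm{x}\geq1$), to obtain $\tfrac78\norm[2]{x}\leq\norm[2]{a}\leq\norm[2]{x}-\tfrac78\gamma\mttbeta\norm[2]{x}/(1+\norm[\beta]{x})$; a second application of the concavity inequality and $(1+\norm[2]{x})^{1/2}\leq\sqrt2\,\norm{x}$ give $\baretabeta(1+\norm[2]{a})^{1/2}-\baretabeta(1+\norm[2]{x})^{1/2}\leq-7\baretabeta\gamma\mttbeta\norm{x}/(16\sqrt2\,(1+\norm[\beta]{x}))$. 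Feeding this and $\norm{a}\geq\norm{x}/2$ into the sharper bound, and then using $\rayulabeta\geq[2^{5}d/\mttbeta]^{1/(2-\beta)}$ to absorb $2\baretabeta\gamma d/\norm{x}$ and $2\baretabeta^{2}\gamma$ into a fraction of the drift, leaves an exponent $\leq-\gamma\mttbeta\baretabeta\norm{x}/(16(1+\norm[\beta]{x}))$, which is \eqref{eq:propo:super_lyap_beta_ula} on this region. For $\norm{x}<\rayulabeta$, \Cref{ass:regularity-U} gives $\norm{a}\leq(1+\gamma\Ltt)\norm{x}$, hence by concavity $(1+\norm[2]{a})^{1/2}\leq(1+\norm[2]{x})^{1/2}+\gamma\Ltt(1+\Ltt/2)\norm[2]{x}$; plugging this into the uniform bound gives $\Qula_{\gamma}\Wbeta_{\baretabeta}(x)\leq\Wbeta_{\baretabeta}(x)\exp(\gamma\baretabeta(\Ltt(1+\Ltt/2)\rayulabeta^{2}+d+\baretabeta))$ up to relabelling constants, and then $\rme^{t}-\rme^{-s}\leq(t+s)\rme^{t}$ for $t,s\geq0$, the crude bound $\Wbeta_{\baretabeta}(x)\leq\rme^{\baretabeta(1+\rayulabeta^{2})^{1/2}}$ on the ball and $\norm{x}/(1+\norm[\beta]{x})\leq(1+\rayulabeta^{2})^{1/2}/(1+\rayulabeta^{\beta})$ yield \eqref{eq:propo:super_lyap_beta_ula}; collecting the constants from the two regimes produces the value of $\bdriftulabeta$ in \eqref{eq:def_rayulabeta}. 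The main obstacle throughout is the balancing act of the first regime: the linearisation of $(1+\norm[2]{\cdot})^{1/2}$ must simultaneously keep the noise contribution at order $\gamma d/\norm{x}$ and still transfer the $O(\gamma\norm[2]{x}/(1+\norm[\beta]{x}))$ decrease of $\norm[2]{x-\gamma\nabla U(x)}$ into an $O(\gamma\norm{x}/(1+\norm[\beta]{x}))$ decrease of the exponent, against the $O(\gamma^{2})$ gradient term tamed by the range of $\bgamma$.
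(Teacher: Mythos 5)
Your proposal is correct in structure and reaches the stated bound, but the key first step is genuinely different from the paper's. Where you linearise $t\mapsto(1+t)^{1/2}$ pointwise inside the expectation and then evaluate the Gaussian moment generating function of the resulting linear-plus-quadratic form in $\ZrGaussian$ (picking up the dimension factor through $(1-2\lambda)^{-d/2}$ with $\lambda=\baretabeta\gamma/(1+\norm[2]{a})^{1/2}$), the paper instead invokes the Gaussian log-Sobolev inequality \cite[Proposition 5.5.1, (5.4.1)]{bakry:gentil:ledoux:2014} for the measure $\Qula_{\gamma}(x,\cdot)$ applied to the $1$-Lipschitz function $y\mapsto(1+\norm[2]{y})^{1/2}$, which gives $\Qula_{\gamma}\Wbeta_{\baretabeta}(x)\leq\exp\{\gamma\baretabeta^{2}+\baretabeta\int(1+\norm[2]{y})^{1/2}\Qula_{\gamma}(x,\rmd y)\}$ in one stroke, and then applies Cauchy--Schwarz to reduce everything to the exact second-moment identity $\int(1+\norm[2]{y})\Qula_{\gamma}(x,\rmd y)=1+\norm[2]{x-\gamma\nabla U(x)}+2\gamma d$. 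The two routes are essentially equivalent in strength: both produce the dimension-free sub-Gaussian correction $\gamma\baretabeta^{2}$ and, crucially, the $\gamma d/\norm{x}$ (rather than $\gamma d$) noise contribution in the tail regime that you correctly identify as the reason the radius can be taken of order $(d/\mttbeta)^{1/(2-\beta)}$; in the paper this scaling emerges automatically from linearising $(1+\norm[2]{a}+2\gamma d)^{1/2}$, in yours from $(1-2\lambda)^{-d/2}$ with $\lambda$ inversely proportional to $\norm{a}$. Your version is more elementary and self-contained (no appeal to log-Sobolev), at the cost of a heavier computation. One caution: in the regime $\norm{x}\geq\rayulabeta$ your constant budget is tight --- the usable drift coefficient is $7/(16\sqrt{2})\approx 0.31$ and the target retains $1/16$, so absorbing both $2\baretabeta^{2}\gamma$ and $2\baretabeta\gamma d/\norm{a}$ only closes if you use the sharper bounds $\norm{a}\geq(7/8)^{1/2}\norm{x}$ (not $\norm{x}/2$) and $\|w\|^{2}/(2(1-2\lambda))\leq(1+o(1))\gamma\baretabeta^{2}$ (not $2\gamma\baretabeta^{2}$); with the looser constants as written the final exponent does not quite reach $-\gamma\mttbeta\baretabeta\norm{x}/(16(1+\norm[\beta]{x}))$. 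The treatment of the ball $\norm{x}<\rayulabeta$ matches the paper's (and your inclusion of the factor $\sup_{\ball{0}{\rayulabeta}}\Wbeta_{\baretabeta}=\rme^{\baretabeta(1+\rayulabeta^{2})^{1/2}}$ in the drift constant is in fact the careful bookkeeping).
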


\begin{proof}
  Let $x \in\rset^d$ and $\gamma \in \ocint{0,\bgamma}$.
  By definition \eqref{eq:def_kernel_ula} and  \cite[Proposition 5.5.1, (5.4.1)]{bakry:gentil:ledoux:2014},  $\updelta_x\Rula_{\gamma}$ satisfies a log-Sobolev inequality and since $ y \mapsto (1+\norm[2]{y})^{1/2}$ is $1$-Lipschitz we have
  \begin{align}
    \label{eq:9_propo:super_lyap_beta_ula_0} 
    \txts    \Qula_{\gamma}\Wbeta_{\baretabeta}(x) & \leq \exp\parenthese{\gamma (\baretabeta)^2 + \baretabeta \int_{\rset^d}(1+\norm[2]{y})^{1/2} \Qula_{\gamma}(x,\rmd y)} \\
    \label{eq:9_propo:super_lyap_beta_ula} 
 & \leq \exp\defEns{\gamma (\baretabeta)^2 + \baretabeta \parenthese{\int_{\rset^d}(1+\norm[2]{y}) \Qula_{\gamma}(x,\rmd y)}^{1/2}}
                                                 \eqsp,
  \end{align}
  where we have used the Cauchy-Schwarz inequality in the last line. We then bound the second term.
  We have by \Cref{lem:bound_alpha_mala_2_alpha} and \Cref{lem:upper_bound_behaviour_alpha} for $x \not \in \ball{0}{1\vee\tKttbeta\vee\bKttbeta}$,
  \begin{align}
    \label{eq:10}
&    \txts (   \int_{\rset^d}(1+\norm[2]{y})^{1/2} \Qula_{\gamma}(x,\rmd y) )^{1/2}  = (1+\norm[2]{x- \gamma \nabla U(x)} + 2 \gamma d) ^{1/2} \\
        \label{eq:10_propo:super_lyap_beta_ula}
    & \leq (1+\norm[2]{x})^{1/2}(1-2^{-1}\gamma \mttbeta /(1+\norm[\beta]{x})+ 4 \Lttbeta^2\gamma^2/(1+\norm[\beta]{x})^2 + 2\gamma d/\norm[2]{x})^{1/2}\eqsp,
  \end{align}
  where we have used that $1 \geq t^2/(1+t^2) \geq 1/2$ for $t \geq 1$.
  Since $1/(1+\norm[\beta]{x}) \geq 1/(1+\norm[\beta]{x})^2$ and $\gamma \leq \bgamma \leq \mttbeta/(2^{5}\Lttbeta^2)$, we get  $-\mttbeta /(1+\norm[\beta]{x})+ 2^{5} \Lttbeta^2\gamma/(1+\norm[\beta]{x})^2 \leq 0$ for any $x \in\rset^d$. In addition, for $x \in\rset^d$, $\norm{x} \geq 1 \vee [2^{5} d/\mttbeta]^{1/(2-\beta)}$, $\mttbeta \geq 2^{5} d \norm[\beta-2]{x} \geq 2^{4} d (1+\norm[\beta]{x})/\norm[2]{x}$. Combining these two results in \eqref{eq:10_propo:super_lyap_beta_ula}, we obtain that for any $x \not \in \ball{0}{\rayulabeta}$,
  with $\rayulabeta$ given in \eqref{eq:def_rayulabeta},
  \begin{multline}
    \label{eq:6_propo:super_lyap_beta_ula}
    \txts (   \int_{\rset^d}(1+\norm[2]{y})^{1/2} \Qula_{\gamma}(x,\rmd y) )^{1/2} 
    \leq (1+\norm[2]{x})^{1/2}(1 - 4^{-1}\mttbeta\gamma/(1+\norm[\beta]{x}))^{1/2}\\ \leq (1+\norm[2]{x})^{1/2}(1-8^{-1}\gamma \mttbeta /(1+\norm[\beta]{x})) \eqsp,
   \end{multline}
   where we have used that for any $t \coint{0,1}$, $(1-t)^{1/2} \leq 1 - t/2$. Since $t \mapsto t/(1+t^{\beta})$ is non-decreasing on $\rset_+$, we get for any $x \not \in \ball{0}{1}$, $16^{-1} \mttbeta \norm{x}/(1+\norm[\beta]{x}) \geq 2^{-5}\mttbeta \geq \baretabeta$. This result with \eqref{eq:9_propo:super_lyap_beta_ula} and \eqref{eq:6_propo:super_lyap_beta_ula} imply that for any $x \not \in \ball{0}{\rayulabeta}$, 
   \begin{equation}
     \label{eq:11_propo:super_lyap_beta_ula}
      \Qula_{\gamma}\Wbeta_{\baretabeta}(x) \leq \exp\parenthese{-16^{-1}\gamma \mttbeta \baretabeta (1+\norm[2]{x})^{1/2}/(1+\norm[\beta]{x})} \Wbeta_{\baretabeta}(x) \eqsp. 
    \end{equation}
    We now consider the case $x \in \ball{0}{\rayulabeta}$. First note by the Cauchy-Schwarz inequality, $\norm{\nabla U(x)} \leq \Ltt \norm{x}$ under \Cref{ass:regularity-U}, we have
      \begin{equation}
    \label{eq:15}
        \int_{\rset^d}(1+\norm[2]{y})^{1/2} \Qula_{\gamma}(x,\rmd y)  \leq (1+(1+\gamma \Ltt(2+\gamma \Ltt))\norm[2]{x} + 2 \gamma d)^{1/2}  \eqsp.
      \end{equation}
      Therefore using that $(1+a_1)^{1/2} - (1+a_2)^{1/2} \leq \abs{a_1-a_2}/2$ for $a_1,a_2 \geq 0$, we get
      \begin{equation}
        \label{eq:15_propo:super_lyap_beta_ula}
        \int_{\rset^d}(1+\norm[2]{y})^{1/2} \Qula_{\gamma}(x,\rmd y) - (1+\norm[2]{x})^{1/2} \leq 2^{-1}\gamma \Ltt(2+\gamma \Ltt)\norm[2]{x} + \gamma d \eqsp.
      \end{equation}
      Plugging this result in \eqref{eq:9_propo:super_lyap_beta_ula_0}, we get setting $A_{\gamma,\beta}(x) = \baretabeta(2^{-1} \Ltt(2+ \Ltt)\norm[2]{x} +  d+\baretabeta)$ that
      \begin{align}
  \label{eq:16_propo:super_lyap_beta_ula}
        &\parentheseDeux{\Qula_{\gamma}\Wbeta_{\baretabeta}(x)    - \rme^{-16^{-1}\gamma \mttbeta \baretabeta (1+\norm[2]{x})^{1/2}/(1+\norm[\beta]{x})} \Wbeta_{\baretabeta}(x)}\Big/ \Wbeta_{\baretabeta}(x)    \\
        &\qquad \qquad \qquad \qquad \leq
\rme^{\gamma A_{\gamma,\beta}(x)} - \rme^{-16^{-1}\gamma \mttbeta \baretabeta (1+\norm[2]{x})^{1/2}/(1+\norm[\beta]{x})}  \\
    \label{eq:16_propo:super_lyap_beta_ula_final}
&  \qquad \qquad \qquad \qquad \leq  \gamma \{A_{\gamma,\beta}(x)+16^{-1} \mttbeta \baretabeta (1+\norm[2]{x})^{1/2}/(1+\norm[\beta]{x})\} \rme^{\gamma A_{\gamma,\beta}(x)}   \eqsp,
\end{align}
where we have used that $\rme^{a_1} - \rme^{a_2} \leq \abs{a_1-a_2} \rme^{a_1 \vee a_2}$ for $a_1,a_2 \in \rset$. Combining this inequality with \eqref{eq:11_propo:super_lyap_beta_ula} completes the proof. 
\end{proof}

\begin{proposition}
  \label{v_beta:propo:lyap_mala_total}
  Assume \Cref{ass:regularity-U} and \Cref{ass:curvature_U_alpha}$(\beta)$ for $\beta \in\coint{0,1}$.  Then, there exist $\Gamma_{1/2,\beta} \geq  \Gamma_{\beta} > 0$ (given in \eqref{v_beta:eq:def_Gamma_1_2}-\eqref{v_beta:eq:def_Gamma} in the proof) such that for any $\bgamma \in \ocint{0,\Gamma_{\beta}}$, $\gamma \in\ocint{0,\bgamma}$ and $x \in\rset^d$,
  \begin{equation}
     R_{\gamma}\Wbeta_{\baretabeta}(x) \leq (1-\tildembeta \gamma)\Wbeta_{\baretabeta}(x)+ \bdriftmalabeta \gamma \1_{\ball{0}{\raymaladriftbeta}}(x) \eqsp,
  \end{equation}
  where $V_{\bareta}$ is defined by \eqref{eq:def_V_eta}, $R_{\gamma}$ is the Markov kernel of MALA defined by \eqref{eq:def-kernel-MALA},  $\baretabeta=\mttbeta/2^5$, $    \tildembeta  = \baretabeta \mttbeta \raymaladriftbeta^{1-\beta} /2^7$,
  \begin{align}
    \label{v_beta:eq:def_const_drift_1}
    \raymaladriftbeta  &= \max(1, 2 \Kttbeta ,  \rayulabeta, 2 \tKttbeta,\bKttbeta,[2^7 \tilde{b}_{1/2,\beta}/(\baretabeta\mttbeta)]^{1/(1-\beta)})  \eqsp, \\
        \label{v_beta:eq:def_const_drift_1_2}
    \tilde{b}_{1/2,\beta} &= C_{2,\Gamma_{1/2,\beta},\beta}\, d + \sup_{u \geq 1} \{u \rme^{-u/2^7}\} \eqsp, \\ 
    \label{v_beta:eq:def_b_drift}
    \bdriftmalabeta & = \bdriftulabeta + \gamma \baretabeta \mttbeta \raymaladriftbetaD^{1-\beta} \rme^{\baretabeta(1+\raymaladriftbetaD^2)^{1/2}}/2^7 + C_{1,\bgamma}  \bgamma^{1/2} \defEns{d + \sqrt{3}d^2 + \raymaladriftbeta^2}\eqsp,
  \end{align}
$\rayulabeta,\bdriftulabeta$ are defined in \Cref{propo:super_lyap_beta_ula}  and $C_{1,\bgamma}$ and $C_{2,\Gamma_{1/2,\beta},\beta}$ in \eqref{eq:def_C_1_bgamma} and \eqref{eq:def_C_2_bgamma_beta} respectively. 
\end{proposition}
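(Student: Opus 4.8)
The plan is to reproduce, \emph{mutatis mutandis}, the proof of \Cref{propo:lyap_mala_total}, replacing $V_{\bareta}$ by $\Wbeta_{\baretabeta}$, the ULA drift of \Cref{propo:super_lyap_ula} by that of \Cref{propo:super_lyap_beta_ula}, and \Cref{lem:bound_alpha_mala_2} by \Cref{lem:bound_alpha_mala_2_alpha}. Starting from the exact identity \eqref{eq:diff-rula-rmala}, discarding the nonpositive term and invoking \Cref{propo:super_lyap_beta_ula}, one gets for $\bgamma\le\mttbeta/(2^5\Lttbeta^2)$, $\gamma\in\ocint{0,\bgamma}$ and $x\in\rset^d$
\[
  R_{\gamma}\Wbeta_{\baretabeta}(x)\le \exp\Bigl(-\tfrac{\gamma\mttbeta\baretabeta\norm{x}}{16(1+\norm[\beta]{x})}\Bigr)\Wbeta_{\baretabeta}(x)+\gamma\bdriftulabeta\1_{\ball{0}{\rayulabeta}}(x)+\Wbeta_{\baretabeta}(x)\int_{\rset^d}\{1-\min(1,\rme^{-\alphamala_{\gamma}(x,z)})\}\varphibf(z)\rmd z\eqsp,
\]
and it remains to control the perturbation integral, separately for $\norm{x}$ large and $\norm{x}$ bounded.

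For $\norm{x}$ large I would split the Gaussian integral at the radius $\norm{x}/(4\sqrt{2\gamma})$. On the ball $\{\norm{z}\le\norm{x}/(4\sqrt{2\gamma})\}$, \Cref{lem:bound_alpha_mala_2_alpha} together with $1-\min(1,\rme^{-t})\le\max(0,t)$ bounds the integrand by $C_{2,\bgamma,\beta}\gamma\norm[2]{z}$, hence this part of the integral by $C_{2,\bgamma,\beta}\gamma d$; on the complement, $1-\min(1,\rme^{-t})\le 1$ and \Cref{lem_tail_chi2} with $c=1/2^5$ (chosen precisely so that $(c/\gamma)^{1/2}\norm{x}\le\norm{x}/(4\sqrt{2\gamma})$) give $\exp(-\norm[2]{x}/(2^7\gamma))\le\gamma\sup_{u\ge1}\{u\rme^{-u/2^7}\}$. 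These steps are legitimate once $\norm{x}\ge\ray_{1/2,\beta}$ and $\gamma\le\Gamma_{1/2,\beta}$ with
\begin{equation}
  \label{v_beta:eq:def_Gamma_1_2}
  \Gamma_{1/2,\beta}=\min\bigl(1,(4\Ltt)^{-1},\mttbeta^3/(2^4\Lttbeta^4),\mttbeta/(2^5\Lttbeta^2),d^{-1}\bigr)\eqsp,\quad \ray_{1/2,\beta}=\max(2^4,2\Kttbeta,2\tKttbeta,\bKttbeta,\rayulabeta)\eqsp,
\end{equation}
the constant $\ray_{1/2,\beta}$ being at least $\sqrt{2^8\gamma d}$ and dominating the thresholds of \Cref{lem:bound_alpha_mala_2_alpha}. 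Collecting the two parts and using $\rayulabeta\le\ray_{1/2,\beta}$ to kill the indicator term, we obtain for $\norm{x}\ge\ray_{1/2,\beta}$
\[
  R_{\gamma}\Wbeta_{\baretabeta}(x)\le\exp\Bigl(-\tfrac{\gamma\mttbeta\baretabeta\norm{x}}{16(1+\norm[\beta]{x})}\Bigr)\Wbeta_{\baretabeta}(x)+\gamma\,\tilde{b}_{1/2,\beta}\,\Wbeta_{\baretabeta}(x)\eqsp,
\]
with $\tilde{b}_{1/2,\beta}$ as in \eqref{v_beta:eq:def_const_drift_1_2}; here $C_{2,\Gamma_{1/2,\beta},\beta}$, not $C_{2,\bgamma,\beta}$, is used because $\bgamma\mapsto C_{2,\bgamma,\beta}$ is non-decreasing, so $\tilde{b}_{1/2,\beta}$ does not depend on $\bgamma$. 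Next, since $t\mapsto t/(1+t^{\beta})$ is non-decreasing and $\raymaladriftbeta\ge1$, for $\norm{x}\ge\raymaladriftbeta$ the exponent is at least $\gamma\mttbeta\baretabeta\raymaladriftbeta^{1-\beta}/2^5$; setting
\begin{equation}
  \label{v_beta:eq:def_Gamma}
  \Gamma_{\beta}=\min\bigl(\Gamma_{1/2,\beta},\,2^5/(\mttbeta\baretabeta\raymaladriftbeta^{1-\beta})\bigr)\eqsp,
\end{equation}
this quantity is $\le1$ for $\gamma\le\Gamma_{\beta}$, so $\rme^{-t}\le1-t/2$ applies and, using $\raymaladriftbeta^{1-\beta}\ge 2^7\tilde{b}_{1/2,\beta}/(\baretabeta\mttbeta)$ from \eqref{v_beta:eq:def_const_drift_1}, one gets $R_{\gamma}\Wbeta_{\baretabeta}(x)\le(1-\tildembeta\gamma)\Wbeta_{\baretabeta}(x)$ for $\norm{x}\ge\raymaladriftbeta$, with $\tildembeta=\baretabeta\mttbeta\raymaladriftbeta^{1-\beta}/2^7$.

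For $\norm{x}<\raymaladriftbeta$ I would instead bound the perturbation integral via \Cref{lem:bound_alpha_mala_1} and $1-\min(1,\rme^{-t})\le\abs{t}$ by $C_{1,\bgamma}\gamma^{3/2}(d+\sqrt{3}d^2+\norm[2]{x})$ (using $\PE\norm[2]{Z}=d$, $\PE\norm[4]{Z}\le\sqrt{3}d^2$ for a standard Gaussian $Z$), then bound $\Wbeta_{\baretabeta}(x)\le\rme^{\baretabeta(1+\raymaladriftbeta^2)^{1/2}}$ and $\norm[2]{x}<\raymaladriftbeta^2$, and absorb the resulting terms — the ULA indicator contribution $\bdriftulabeta$, the artificial inflation of the drift coefficient from $1$ to $1-\tildembeta\gamma$ on the ball $\ball{0}{\raymaladriftbeta}$, and the $C_{1,\bgamma}$ term — into the additive constant $\bdriftmalabeta$ of \eqref{v_beta:eq:def_b_drift}. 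Since $\raymaladriftbeta\ge\ray_{1/2,\beta}$, the two cases $\norm{x}\ge\raymaladriftbeta$ and $\norm{x}<\raymaladriftbeta$ exhaust $\rset^d$, which finishes the proof. The main obstacle, compared to \Cref{propo:lyap_mala_total}, is that the contraction rate provided by \eqref{eq:propo:super_lyap_beta_ula} is only of order $\norm{x}^{1-\beta}$ instead of $\norm{x}^2$: the radius $\raymaladriftbeta$ at which the residual drift becomes strictly contracting, the admissible step size $\Gamma_{\beta}$, and the rate $\tildembeta$ all deteriorate through the factor $\raymaladriftbeta^{1-\beta}$ as $\beta\uparrow1$; keeping $\tilde{b}_{1/2,\beta}$ (and hence $\raymaladriftbeta$) independent of $\bgamma$ is what forces the $C_{2,\Gamma_{1/2,\beta},\beta}$ bookkeeping. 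Everything else is identical to the case $\beta=0$.
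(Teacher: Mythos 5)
Your proposal is correct and follows essentially the same route as the paper's own proof: the base inequality from \eqref{eq:diff-rula-rmala} combined with \Cref{propo:super_lyap_beta_ula}, the split of the Gaussian integral at radius $\norm{x}/(4\sqrt{2\gamma})$ handled by \Cref{lem:bound_alpha_mala_2_alpha} and \Cref{lem_tail_chi2} with $c=1/2^5$ for $\norm{x}$ large, and \Cref{lem:bound_alpha_mala_1} with $1-\min(1,\rme^{-t})\leq\abs{t}$ on the ball $\ball{0}{\raymaladriftbeta}$. The only differences are harmless bookkeeping choices in the constants entering $\Gamma_{1/2,\beta}$ and $\ray_{1/2,\beta}$, and your closing remark about the degradation of the rate through $\raymaladriftbeta^{1-\beta}$ as $\beta\uparrow 1$ is an accurate reading of the same computation.
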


\begin{proof}
  By \eqref{eq:diff-rula-rmala} and \Cref{propo:super_lyap_beta_ula},
for any $\bgamma \in \ocint{ \mttbeta/(2^5\Lttbeta^2)}$, $\gamma \in \ocintLigne{0,\bgamma}$ and $x \in\rset^d$,

  \begin{align}
  \Rkerg \Wbeta_{\baretabeta}(x) &
\leq \Qgam  \Wbeta_{\baretabeta}(x) + \Wbeta_{\baretabeta}(x)\int_{\rset^d} \{1-\min(1,\rme^{-\alphamala_{\gamma}(x,z)})\} \varphibf(z) \rmd z\\
 \label{v_beta:eq:propo:lyap_mala_total_1}
    & \leq \exp\defEns{-\frac{\gamma \mttbeta \baretabeta \norm[]{x}}{16(1+\norm[\beta]{x})}} \Wbeta_{\baretabeta}(x) + \gamma \bdriftulabeta \1_{\ball{0}{\rayulabeta}}(x) \\
    & \phantom{-----}+ \Wbeta_{\baretabeta}(x) \int_{\rset^d} \{1-\min(1,\rme^{-\alphamala_{\gamma}(x,z)})\} \varphibf(z) \rmd z \eqsp.
  \end{align}
  Let $\Upsilon_{\beta} \leq 1$ and 
  \begin{equation}
      \label{v_beta:eq:def_Gamma_1_2}
\Gamma_{1/2,\beta} = \min\parenthese{\Upsilon_{\beta}, (\mttbeta^3/(2^5 \Lttbeta^4)), (8d)^{-1}} \eqsp, \quad
\ray_{1/2} = \max\parenthese{1, 2 \Kttbeta ,  \rayulabeta, 2 \tKttbeta,\bKttbeta} \eqsp,
\end{equation}
where  $\tKttbeta$ is given in \Cref{lem:quadratic_behaviour_alpha} and $\bKttbeta$ in \Cref{lem:upper_bound_behaviour_alpha}.
Note that $\Gamma_{1/2,\beta} \leq \mttbeta/(2^5\Lttbeta^2)$ since $\mttbeta \leq \Lttbeta$ and $\ray_{1/2} \geq \sqrt{8\gamma d}$ for $\gamma \in \ocint{0,\Gamma_{1/2,\beta}}$.
Then,  by  \Cref{lem:bound_alpha_mala_2_alpha} and \Cref{lem_tail_chi2} with $c = 1/2^{5}$,  for any $x \in \rset^d$, $\norm{x} \geq \ray_{1/2}$, $\bgamma \in \ocint{0,\Gamma_{1/2,\beta}}$ and
 $\gamma \in \ocint{0,\bgamma}$,
  \begin{align}
    \Rkerg \Wbeta_{\baretabeta}(x) & \leq  \rme^{-\baretabeta \mttbeta \gamma \norm[1-\beta]{x}/2^5}\Wbeta_{\baretabeta}(x) +\Wbeta_{\baretabeta}(x) \defEns{C_{2,\Gamma_{1/2,\beta},\beta}\, d \gamma  + \exp(-\norm[2]{x}/(2^7\gamma))} \\
    & \leq  \rme^{-\baretabeta \mttbeta \gamma \norm[1-\beta]{x}/2^5}  \Wbeta_{\baretabeta}(x) + \Wbeta_{\baretabeta}(x) \gamma \tilde{b}_{1/2,\beta}  \eqsp,
  \end{align}
  where $ \tilde{b}_{1/2,\beta}$ is defined in \eqref{v_beta:eq:def_const_drift_1_2}.
  Note that $\raymaladriftbetaD = \max(\ray_{1/2},[2^7 \tilde{b}_{1/2,\beta}/(\baretabeta\mttbeta)]^{1/(1-\beta)})$ and   let
\begin{equation}
  \label{v_beta:eq:def_Gamma}
\Gamma_{\beta} = \min\parenthese{\Gamma_{1/2,\beta}, 2^{5}/\defEns{\mttbeta \baretabeta \raymaladriftbetaD^{1-\beta}}} \eqsp.
  \end{equation}
  Then, since for any $t \in \ccint{0,1}$, $\rme^{-t} \leq 1-t/2$, we get for any $x \in \rset^d$, $\norm{x} \geq \raymaladriftbetaD$, $\bgamma \in \ocint{0,\Gamma_{\beta}}$ and $\gamma \in \ocint{0,\bgamma}$,
  \begin{align}
    \nonumber
    \Rkerg \Wbeta_{\baretabeta}(x)& \leq  \rme^{-\baretabeta \mttbeta \gamma \raymaladriftbetaD^{1-\beta} /2^5} \Wbeta_{\baretabeta}(x) + \Wbeta_{\baretabeta}(x) \gamma \tilde{b}_{1/2,\beta} \\
    \label{v_beta:eq:drift_mala_totla_2}
    & \leq \parentheseDeux{1-\gamma\defEns{\baretabeta \mttbeta \raymaladriftbetaD^{1-\beta} /2^6 -\tilde{b}_{1/2,\beta}}}\Wbeta_{\baretabeta}(x) 
 \leq \defEns{1-\gamma \baretabeta \mttbeta \raymaladriftbetaD^{1-\beta} /2^7} \Wbeta_{\baretabeta}(x) \eqsp.
  \end{align}
  In addition, by \eqref{v_beta:eq:propo:lyap_mala_total_1} and  \Cref{lem:bound_alpha_mala_1}, using that for any $t \in \rset$, $1-\min(1,\rme^{-t}) \leq \abs{t}$, for any $x \in \rset^d$, $\norm{x} \leq \raymaladriftbetaD$, $\bgamma \in \ocint{0,\Gamma_{\beta}}$ and $\gamma \in \ocint{0,\bgamma}$,
  \begin{align}
    \Rkerg \Wbeta_{\baretabeta}(x) & \leq \Wbeta_{\baretabeta}(x) + \gamma \bdriftulabeta + C_{1,\bgamma} \gamma^{3/2} \int_{\rset^d} \{\norm[2]{z}+\norm[2]{x} + \norm[4]{z}\} \varphibf(z) \rmd z \\
    & \leq (1-\gamma \baretabeta \mttbeta \raymaladriftbetaD^{1-\beta} /2^7) \Wbeta_{\baretabeta}(x) + \gamma \bdriftulabeta + \gamma \baretabeta \mttbeta \raymaladriftbetaD^{1-\beta} \rme^{\baretabeta(1+\raymaladriftbetaD^2)^{1/2}}/2^7 \\
    &\phantom{-------------}+ C_{1,\bgamma} \gamma \bgamma^{1/2} \defEns{d + \sqrt{3}d^2 + \raymaladriftbetaD^2} \eqsp.
  \end{align}
  Combining this result and \eqref{v_beta:eq:drift_mala_totla_2} completes the proof.

\end{proof}

\subsection{Minorization condition}

\begin{lemma}
  \label{propo:lyap_mala_v2_just_lip}
  Assume \Cref{ass:regularity-U}  and let $\bgamma>0$. Then, for any $k \in\nsets$, $\gamma \in \ocint{0,\bgamma}$ and $x\in\rset^d$,
  \begin{equation}
    \int_{\rset^d} \norm[2]{y} \Qula_{\gamma}^{k}(x,\rmd y) \leq \rme^{k \gamma L_{\bgamma}}\norm[2]{x}+  2 \gamma d k \rme^{(k-1)\gamma L_{\bgamma}}  \eqsp,
  \end{equation}
  where $Q_{\gamma}$ is the Markov kernel of ULA defined in \eqref{eq:def_kernel_ula}, and 
  \begin{equation}
    \label{eq:def_L_bgamma}
    L_{\bgamma} = 2 \Ltt + \bgamma \Ltt^2 \eqsp.
  \end{equation}
\end{lemma}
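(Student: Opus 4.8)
The plan is to reduce the statement to a single one-step second moment estimate for the ULA kernel $\Qula_{\gamma}$ and then propagate it by induction on $k$. First I would fix $\gamma \in \ocint{0,\bgamma}$ and write a draw from $\Qula_{\gamma}(x,\cdot)$ as $y = x - \gamma \nabla U(x) + \sqrt{2\gamma}\, z$ with $z$ a standard Gaussian vector; since $\int z\, \varphibf(z)\rmd z = 0$ and $\int \norm[2]{z}\, \varphibf(z)\rmd z = d$, this gives $\int_{\rset^d}\norm[2]{y}\, \Qula_{\gamma}(x,\rmd y) = \norm[2]{x-\gamma \nabla U(x)} + 2\gamma d$. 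Unlike \Cref{propo:lyap_mala}, here only \Cref{ass:regularity-U} is available, so I would not exploit $\ps{\nabla U(x)}{x}$ (which would need a curvature condition); instead I use that $\nabla U$ is $\Ltt$-Lipschitz with $\nabla U(0)=0$, hence $\norm{\nabla U(x)} \leq \Ltt\norm{x}$, so by the triangle inequality $\norm{x-\gamma \nabla U(x)} \leq (1+\gamma\Ltt)\norm{x}$ and therefore $\norm[2]{x-\gamma \nabla U(x)} \leq (1+\gamma(2\Ltt+\gamma\Ltt^2))\norm[2]{x} \leq (1+\gamma L_{\bgamma})\norm[2]{x} \leq \rme^{\gamma L_{\bgamma}}\norm[2]{x}$, where the second-to-last inequality uses $\gamma \leq \bgamma$ with $L_{\bgamma}$ as in \eqref{eq:def_L_bgamma}, and the last uses $1+t\leq \rme^{t}$. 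This produces the one-step bound $\int_{\rset^d}\norm[2]{y}\, \Qula_{\gamma}(x,\rmd y) \leq \rme^{\gamma L_{\bgamma}}\norm[2]{x} + 2\gamma d$, valid for every $x\in\rset^d$.

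The second step is the induction. The base case $k=1$ is exactly the one-step bound. For the inductive step I would apply $\Qula_{\gamma}$ to the bound at level $k$ — which holds uniformly in the starting point, and this is precisely what allows composition through the Chapman--Kolmogorov identity together with the fact that a Markov kernel preserves the pointwise order between nonnegative measurable functions — to get $\int \norm[2]{w}\, \Qula_{\gamma}^{k+1}(x,\rmd w) \leq \rme^{k\gamma L_{\bgamma}}\int \norm[2]{y}\, \Qula_{\gamma}(x,\rmd y) + 2\gamma d k\, \rme^{(k-1)\gamma L_{\bgamma}}$. Plugging in the one-step bound yields $\rme^{(k+1)\gamma L_{\bgamma}}\norm[2]{x} + 2\gamma d\, \rme^{k\gamma L_{\bgamma}} + 2\gamma d k\, \rme^{(k-1)\gamma L_{\bgamma}}$, and it remains only to observe that $\rme^{k\gamma L_{\bgamma}} + k\, \rme^{(k-1)\gamma L_{\bgamma}} \leq (k+1)\, \rme^{k\gamma L_{\bgamma}}$, which is immediate from $\rme^{(k-1)\gamma L_{\bgamma}} \leq \rme^{k\gamma L_{\bgamma}}$ since $L_{\bgamma}\geq 0$ and $\gamma > 0$. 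This closes the induction and gives the claimed bound.

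There is no genuine obstacle here: the estimate is elementary and self-contained, drawing only on the Lipschitz continuity of $\nabla U$ from \Cref{ass:regularity-U}. The two points that need a little care are (i) keeping the one-step inequality uniform in $x$ so it can be iterated through the semigroup, and (ii) absorbing the $\gamma^{2}\Ltt^{2}$ term into $\gamma\bgamma\Ltt^{2}$ via $\gamma\leq\bgamma$ in order to recover the clean exponent $L_{\bgamma}=2\Ltt+\bgamma\Ltt^{2}$; passing through $1+t\leq \rme^{t}$ rather than tracking the exact factor $(1+\gamma L_{\bgamma})^{k}$ is what keeps the additive remainder in the simple form $2\gamma d k\, \rme^{(k-1)\gamma L_{\bgamma}}$.
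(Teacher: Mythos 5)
Your proof is correct and follows essentially the same route as the paper: the one-step bound $\int \norm[2]{y}\,Q_{\gamma}(x,\rmd y) \leq (1+2\gamma\Ltt+\gamma^2\Ltt^2)\norm[2]{x}+2\gamma d \leq \rme^{\gamma L_{\bgamma}}\norm[2]{x}+2\gamma d$ via the Lipschitz bound $\norm{\nabla U(x)}\leq\Ltt\norm{x}$ and $1+t\leq\rme^{t}$, followed by a straightforward induction. The only difference is that you spell out the inductive step that the paper leaves implicit.
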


\begin{proof}
  Let $\gamma\in\ocint{0,\bgamma}$ and $x\in\rset^d$. By \Cref{ass:regularity-U}, we have
  \begin{equation}
    \int_{\rset^d} \norm[2]{y} Q_{\gamma}(x,\rmd y) \leq
 \norm[2]{x}(1+2\gamma \Ltt + \gamma^2 \Ltt^2)+    2\gamma d \leq \norm[2]{x} \rme^{\gamma L_{\bgamma}} + 2 \gamma d  \eqsp,
  \end{equation}
  using that $1+t \leq \rme^t$ for $t \geq 0$. A straightforward induction completes the proof. 
\end{proof}

\begin{lemma}
  \label{v_beta_lem:diff_tv_MALA_ULA}
  Assume \Cref{ass:regularity-U} and let $\bgamma>0$. Then,  for any $x \in \rset^d$ and $\gamma \in \ocint{0,\bgamma}$, we have
\begin{align}
    \label{v_beta_eq:2:lem:diff_tv_MALA_ULA}
    \tvnorm{\updelta_x \Qgam^{\ceil{1/\gamma}} - \updelta_x \Rkerg^{\ceil{1/\gamma}}} &\leq C_{1,\bgamma} \gamma^{1/2} (d+\sqrt{3}d^2+\rme^{L_{\bgamma}}( \norm[2]{x}+2d)) \eqsp,
  \end{align}
  where $C_{1,\bgamma}$ is defined in \eqref{eq:def_C_1_bgamma} and $L_{\bgamma}$ in \eqref{eq:def_L_bgamma}.
\end{lemma}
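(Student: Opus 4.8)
The plan is to reproduce the proof of \eqref{eq:2:lem:diff_tv_MALA_ULA} in \Cref{lem:diff_tv_MALA_ULA} almost verbatim, the only substantive change being that the second moment of the ULA iterates is now estimated by \Cref{propo:lyap_mala_v2_just_lip}, which requires only \Cref{ass:regularity-U}, instead of the Foster--Lyapunov bound of \Cref{propo:lyap_mala}, which relied on the curvature condition. First I would fix $x\in\rset^d$, $\gamma\in\ocint{0,\bgamma}$, and use the telescoping decomposition
\begin{equation*}
  \updelta_x \Qgam^{\ceil{1/\gamma}} - \updelta_x \Rkerg^{\ceil{1/\gamma}} = \sum_{k=0}^{\ceil{1/\gamma}-1} \updelta_x \Qgam^k \{\Qgam - \Rkerg\} \Rkerg^{\ceil{1/\gamma}-k-1} \eqsp,
\end{equation*}
so that, by the triangle inequality for $\tvnorm{\cdot}$, it suffices to bound the $\ceil{1/\gamma}$ summands uniformly.

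For a fixed $k$ and a measurable $f$ with $\normsup{f}\le 1$, I would set $g=\Rkerg^{\ceil{1/\gamma}-k-1}f$; since $\Rkerg$ is a Markov kernel, $\normsup{g}\le 1$, and the one-step comparison \eqref{eq:1:lem:diff_tv_MALA_ULA} applied to $g$ gives $\abs{\{\Qgam-\Rkerg\}g(y)}\le C_{1,\bgamma}\gamma^{3/2}(d+\sqrt{3}d^2+\norm[2]{y})$ for every $y\in\rset^d$. Integrating against the probability measure $\Qgam^{k}(x,\cdot)$ reduces the per-term bound to controlling $\int_{\rset^d}\norm[2]{y}\,\Qgam^{k}(x,\rmd y)$.

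Here the argument leaves the track of the proof of \eqref{eq:2:lem:diff_tv_MALA_ULA}: rather than a contraction, I would invoke \Cref{propo:lyap_mala_v2_just_lip}, which gives $\int_{\rset^d}\norm[2]{y}\,\Qgam^{k}(x,\rmd y)\le \rme^{k\gamma L_{\bgamma}}\norm[2]{x}+2\gamma d k\,\rme^{(k-1)\gamma L_{\bgamma}}$. Because $k\le\ceil{1/\gamma}-1$ forces $k\gamma\le 1$, this is at most $\rme^{L_{\bgamma}}(\norm[2]{x}+2d)$, so each summand is bounded by $C_{1,\bgamma}\gamma^{3/2}\bigl(d+\sqrt{3}d^2+\rme^{L_{\bgamma}}(\norm[2]{x}+2d)\bigr)$, uniformly in $k$ and in $f$ with $\normsup{f}\le 1$. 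Summing the $\ceil{1/\gamma}$ terms and using $\gamma^{3/2}\ceil{1/\gamma}\le 2\gamma^{1/2}$ gives \eqref{v_beta_eq:2:lem:diff_tv_MALA_ULA}.

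I do not expect a real obstacle. The only subtlety, compared with \Cref{lem:diff_tv_MALA_ULA}, is that having replaced a geometric drift by the crude exponential growth bound of \Cref{propo:lyap_mala_v2_just_lip}, the $\norm[2]{x}$ term no longer contracts along the telescoping sum and must instead be controlled uniformly over $k\le\ceil{1/\gamma}-1$; this is precisely why the bound carries the extra multiplicative factor $\rme^{L_{\bgamma}}$.
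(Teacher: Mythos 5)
Your proposal is correct and follows essentially the same route as the paper: the same telescoping decomposition, the same use of the one-step bound \eqref{eq:1:lem:diff_tv_MALA_ULA} applied to $g=\Rkerg^{\ceil{1/\gamma}-k-1}f$, and the same appeal to \Cref{propo:lyap_mala_v2_just_lip} together with $k\gamma\le 1$ to absorb the moment growth into the factor $\rme^{L_{\bgamma}}(\norm[2]{x}+2d)$. The only (shared, cosmetic) looseness is the final step $\gamma^{3/2}\ceil{1/\gamma}\le\gamma^{1/2}$, which strictly holds only up to a bounded constant unless $\gamma\le 1$; the paper elides the same constant.
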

\begin{proof}
  Let $x \in \rset^d$  and $\gamma \in \ocint{0,\bgamma}$.
 Consider the following decomposition
  \begin{equation}
    \updelta_x \Qgam^{\ceil{1/\gamma}} - \updelta_x \Rkerg^{\ceil{1/\gamma}} = \sum_{k=0}^{\ceil{1/\gamma}-1} \updelta_x \Qgam^k \{\Qgam - \Rkerg\} \Rkerg^{\ceil{1/\gamma}-k-1} \eqsp.
  \end{equation}
  Therefore using the triangle inequality, we obtain that
  \begin{equation}
            \label{v_beta_eq:4}
        \tvnorm{\updelta_x \Qgam^{\ceil{1/\gamma}} - \updelta_x \Rkerg^{\ceil{1/\gamma}}} \leq \sum_{k=0}^{\ceil{1/\gamma}-1} \tvnorm{ \updelta_x \Qgam^k \{\Rkerg - \Qgam\} \Rkerg^{\ceil{1/\gamma}-k-1}} \eqsp.
      \end{equation}
      We now bound each term in the sum. Let $k \in \{0,\ldots,\ceil{1/\gamma}-1\}$ and $f : \rset^d \to \rset$ be a bounded and measurable function. By   \eqref{eq:1:lem:diff_tv_MALA_ULA} in \Cref{lem:diff_tv_MALA_ULA}, we obtain that
      \[ \abs{ \updelta_x \{\Rkerg - \Qgam\} \Rkerg^{\ceil{1/\gamma}-k-1} f} \leq C_{1,\bgamma} \norm{f}_{\infty} \gamma^{3/2} \{d+\sqrt{3}d^2+\norm[2]{x}\} \]
      and therefore using \Cref{propo:lyap_mala_v2_just_lip}, we get
      \begin{equation}
\abs{       \updelta_x \Qgam^k   \{\Rkerg - \Qgam\} \Rkerg^{\ceil{1/\gamma}-k-1} f} \leq C_{1,\bgamma} \norm{f}_{\infty} \gamma^{3/2} \{d+\sqrt{3}d^2+\rme^{k L_{\bgamma} \gamma} \norm[2]{x} + 2\gamma k d \rme^{(k-1) L_{\bgamma} \gamma} \} \eqsp.
     \end{equation}
     Plugging this result in         \eqref{eq:4}, we obtain
     \begin{align}
       \tvnorm{\updelta_x \Qgam^{\ceil{1/\gamma}} - \updelta_x \Rkerg^{\ceil{1/\gamma}}} &\leq C_{1,\bgamma} \gamma^{3/2} \sum_{k=0}^{\ceil{1/\gamma}-1}   \{d+\sqrt{3}d^2+\rme^{k L_{\bgamma} \gamma} \norm[2]{x} + 2\gamma k d \rme^{(k-1) L_{\bgamma} \gamma} \}  \\
       &\leq C_{1,\bgamma} \gamma^{1/2} \{d+\sqrt{3}d^2+\rme^{ L_{\bgamma}} \norm[2]{x} + 2 d \rme^{ L_{\bgamma}} \}  \eqsp,
     \end{align}
     which concludes the proof.
\end{proof}

\begin{proposition}
  \label{v_beta_propo:small_set_mala}
  Assume \Cref{ass:regularity-U}. Then for any $\ray \geq 0$ there there exists $\hatGamma_{\ray} > 0$ (given in \eqref{eq:def_cgamma_ray_H1} in the proof), such that for any $\bgamma \in \ocintLigne{0,\hatGamma_{\ray}}$, $x,y \in \rset^d$, $\norm{x}\vee \norm{y} \leq \ray$, and $\gamma \in \ocint{0,\bgamma}$ we have
  \begin{equation}
\label{v_beta_eq:small_set_mala_propo}
    \tvnorm{\updelta_x \Rkerg^{\ceil{1/\gamma}} - \updelta_y \Rkerg^{\ceil{1/\gamma}}}   \leq 2 (1-\varepsilonula(\ray)/2) \eqsp,
  \end{equation}
  where $\varepsilonula(\ray)$ is defined in \eqref{eq:def_varepsi_ula}.
\end{proposition}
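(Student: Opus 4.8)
The plan is to follow the perturbation strategy of the proof of \Cref{propo:small_set_mala}, substituting the ingredients available under \Cref{ass:regularity-U} alone for those that use \Cref{ass:curvature_U}. The starting point is the triangle inequality \eqref{eq:decomposiiton_small_set_mala}, which for arbitrary $x,y\in\rset^d$ and $\gamma>0$ bounds $\tvnorm{\updelta_x \Rkerg^{\ceil{1/\gamma}} - \updelta_y \Rkerg^{\ceil{1/\gamma}}}$ by the sum of the three terms $\tvnorm{\updelta_x \Rkerg^{\ceil{1/\gamma}} - \updelta_x \Qgam^{\ceil{1/\gamma}}}$, $\tvnorm{\updelta_x \Qgam^{\ceil{1/\gamma}} - \updelta_y \Qgam^{\ceil{1/\gamma}}}$ and $\tvnorm{\updelta_y \Qgam^{\ceil{1/\gamma}} - \updelta_y \Rkerg^{\ceil{1/\gamma}}}$.

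For the middle term I would invoke \Cref{propo:small_set_ula}, which requires only \Cref{ass:regularity-U}: for $\norm{x}\vee\norm{y}\leq\ray$ and $\gamma\leq 1/\Ltt$ it gives $\tvnorm{\updelta_x \Qgam^{\ceil{1/\gamma}} - \updelta_y \Qgam^{\ceil{1/\gamma}}}\leq 2(1-\varepsilonula(\ray))$. For the two outer terms, in place of \Cref{lem:diff_tv_MALA_ULA} (whose second bound goes through the ULA moment estimate \Cref{propo:lyap_mala}, and hence uses \Cref{ass:curvature_U}), I would use \Cref{v_beta_lem:diff_tv_MALA_ULA}, which again needs only \Cref{ass:regularity-U}: for $\norm{x}\leq\ray$ and $\gamma\leq\bgamma$ it yields $\tvnorm{\updelta_x \Qgam^{\ceil{1/\gamma}} - \updelta_x \Rkerg^{\ceil{1/\gamma}}}\leq C_{1,\bgamma}\gamma^{1/2}\{d+\sqrt{3}d^2+\rme^{L_{\bgamma}}(\ray^2+2d)\}$, and symmetrically for the term in $y$.

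Combining the three bounds, for $\norm{x}\vee\norm{y}\leq\ray$ and $\gamma\leq\bgamma\leq 1/\Ltt$ one obtains $\tvnorm{\updelta_x \Rkerg^{\ceil{1/\gamma}} - \updelta_y \Rkerg^{\ceil{1/\gamma}}}\leq 2(1-\varepsilonula(\ray))+2 C_{1,\bgamma}\gamma^{1/2}\{d+\sqrt{3}d^2+\rme^{L_{\bgamma}}(\ray^2+2d)\}$, so it remains to make the perturbation term at most $\varepsilonula(\ray)$. Since $\bgamma\mapsto C_{1,\bgamma}$ and $\bgamma\mapsto L_{\bgamma}$ (see \eqref{eq:def_C_1_bgamma} and \eqref{eq:def_L_bgamma}) are non-decreasing, on the range $\bgamma\leq 1/\Ltt$ one may bound them by $C_{1,1/\Ltt}$ and $L_{1/\Ltt}=3\Ltt$; setting
\begin{equation}
  \label{eq:def_cgamma_ray_H1}
  \hatGamma_{\ray}= \frac{1}{\Ltt}\wedge\parentheseDeux{\frac{\varepsilonula(\ray)}{2 C_{1,1/\Ltt}\{d+\sqrt{3}d^2+\rme^{3\Ltt}(\ray^2+2d)\}}}^{2}\eqsp,
\end{equation}
then for every $\bgamma\in\ocint{0,\hatGamma_{\ray}}$ and $\gamma\in\ocint{0,\bgamma}$ one has $\gamma^{1/2}\leq\hatGamma_{\ray}^{1/2}$, hence $2 C_{1,\bgamma}\gamma^{1/2}\{d+\sqrt{3}d^2+\rme^{L_{\bgamma}}(\ray^2+2d)\}\leq\varepsilonula(\ray)$, which gives \eqref{v_beta_eq:small_set_mala_propo}.

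Every ingredient is already established and depends only on \Cref{ass:regularity-U}, so there is no substantial obstacle here; the one point needing care is that the natural threshold involves the constants $C_{1,\bgamma},L_{\bgamma}$ of \Cref{v_beta_lem:diff_tv_MALA_ULA} evaluated at $\bgamma=\hatGamma_{\ray}$ itself, which is circular. This is resolved, as above, by first imposing the cap $\bgamma\leq 1/\Ltt$ (also needed for \Cref{propo:small_set_ula}) and using monotonicity of these constants to replace them by their values at the fixed endpoint $\bgamma=1/\Ltt$.
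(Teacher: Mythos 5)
Your proposal is correct and follows essentially the same route as the paper: the triangle-inequality decomposition into two MALA--ULA perturbation terms (controlled via \Cref{v_beta_lem:diff_tv_MALA_ULA}) plus the ULA minorization from \Cref{propo:small_set_ula}, and then a choice of $\hatGamma_{\ray}$ making the perturbation at most $\varepsilonula(\ray)$. The only (cosmetic) difference is that the paper first fixes an auxiliary reference stepsize $\hat{\Gamma}_{1/2,\beta}=\hat{\Upsilon}_{\beta}\wedge\Ltt^{-1}$ and evaluates $C_{1,\cdot}$ and $L_{\cdot}$ there, whereas you cap directly at $1/\Ltt$ and use monotonicity of these constants — both devices resolve the same self-reference in the definition of the threshold.
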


\begin{proof}
First note that for any $x,y \in \rset^d$, $\gamma >0$, by the triangle inequality, we obtain
  \begin{multline}
    \label{v_beta_eq:decomposiiton_small_set_mala_beta}
       \tvnorm{\updelta_x \Rkerg^{\ceil{1/\gamma}} - \updelta_y \Rkerg^{\ceil{1/\gamma}}} \leq        \tvnorm{\updelta_x \Rkerg^{\ceil{1/\gamma}} - \updelta_x \Qgam^{\ceil{1/\gamma}}}\\ + \tvnorm{\updelta_x \Qgam^{\ceil{1/\gamma}} - \updelta_y \Qgam^{\ceil{1/\gamma}}} + \tvnorm{\updelta_y \Rkerg^{\ceil{1/\gamma}} - \updelta_y \Qgam^{\ceil{1/\gamma}}} \eqsp.
     \end{multline}
     We now give some bounds for each term on the right hand side for any $x,y \in \rset^d$,  $\norm{x}\vee \norm{y} \leq \ray$ for a fixed $\ray \geq 0$ and $\gamma \leq  1/\Ltt$.
     By \Cref{propo:small_set_ula},  for any $x,y \in \rset^d$,  $\norm{x}\vee \norm{y} \leq \ray$ and $\gamma \leq  1/\Ltt$,
     \begin{equation}\label{v_beta_eq:bound_small_ULA_proof_small_MALA_beta_1}
       \tvnorm{\updelta_x \Qgam^{\ceil{1/\gamma}} - \updelta_y \Qgam^{\ceil{1/\gamma}}} \leq
       2(1-\varepsilonula(\ray)) \eqsp.
     \end{equation}
  In addition, by  \Cref{v_beta_lem:diff_tv_MALA_ULA},  for any $\bgamma >0$, $\gamma \in \ocint{0,\bgamma}$, and $z \in \rset^d$, $\norm{z} \leq \ray$,
  \begin{equation}
    \label{v_beta_eq:bound_small_ULA_proof_small_MALA_beta_2}
    \tvnorm{\updelta_z \Qgam^{\ceil{1/\gamma}} - \updelta_z \Rkerg^{\ceil{1/\gamma}}} \leq  C_{1,\bgamma}\gamma^{1/2}(d+\sqrt{3} d^2+\rme^{L_{\bgamma}}(\ray^2+2 d)) \eqsp.
  \end{equation}
Consider now for $\hat{\Upsilon}_{\beta} >0$, $\hat{\Gamma}_{1/2,\beta} = \hat{\Upsilon}_\beta \wedge \Ltt^{-1}$.  Combining \eqref{v_beta_eq:bound_small_ULA_proof_small_MALA_beta_1}-\eqref{v_beta_eq:bound_small_ULA_proof_small_MALA_beta_2} in \eqref{v_beta_eq:decomposiiton_small_set_mala_beta}, we obtain that for any $x,y\in \rset^d$, $\norm{x}\vee\norm{y} \leq \ray$, $\bgamma \in \ocintLigne{0,\hat{\Gamma}_{1/2,\beta}}$, $\gamma \in \ocint{0,\bgamma}$,
  \begin{equation}
    \norm{\updelta_x \Rkerg^{\ceil{1/\gamma}} - \updelta_y \Rkerg^{\ceil{1/\gamma}}} \leq 2(1-\varepsilonula(\ray)) + 2 C_{1,\hat{\Gamma}_{1/2,\beta} }\gamma^{1/2}(d+\sqrt{3} d^2+\rme^{L_{\hat{\Gamma}_{1/2,\beta} }}(\ray^2+2 d)) \eqsp.
  \end{equation}
  Therefore, we obtain that for any $x,y \in \rset^d$, $\norm{x} \vee \norm{y} \leq \ray$, $\gamma \in \ocint{0,\bgamma}$, \eqref{eq:small_set_mala_propo} holds  taking
  \begin{equation}
    \label{eq:def_cgamma_ray_H1}
    \hatGamma_{\ray} = \hat{\Gamma}_{1/2,\beta} \wedge \parentheseDeux{\frac{\varepsilon(\ray)}{2 C_{1,\hat{\Gamma}_{1/2,\beta} }(d+\sqrt{3} d^2+\rme^{L_{\hat{\Gamma}_{1/2,\beta} }}(\ray^2+2 d))}}^2 \eqsp.
  \end{equation}
\end{proof}

\subsection{Proof of \Cref{theo:V-geom_ergo_MALA_beta}}
\label{sec:proof:theo:V-geom_ergo_MALA_v_beta}

\Cref{v_beta:propo:lyap_mala_total} shows that there exist $\Gamma_{1/2,\beta} \geq  \Gamma_{\beta} > 0$ (given in \eqref{v_beta:eq:def_Gamma_1_2} and \eqref{v_beta:eq:def_Gamma}) such that for any $\bgamma \in \ocint{0,\Gamma_{\beta}}$, $\gamma \in\ocint{0,\bgamma}$ and $x \in\rset^d$,
  \begin{equation}
     R_{\gamma}\Wbeta_{\baretabeta}(x) \leq (1-\tildembeta \gamma)\Wbeta_{\baretabeta}(x)+ \bdriftmalabeta \gamma  \eqsp,
  \end{equation}
  where $\Wbeta_{\baretabeta}$ is defined by \eqref{eq:lyap_sub_quad_beta}, $R_{\gamma}$ is the Markov kernel of MALA defined by \eqref{eq:def-kernel-MALA},  $\baretabeta=\mttbeta/2^5$, $    \tildembeta, \bdriftmalabeta$ are specified in the statement of \Cref{v_beta:propo:lyap_mala_total}.
Using \cite[Lemma 14.1.10]{douc:moulines:priouret:soulier:2018}, we obtain $\pi(\Wbeta_{\baretabeta}) \leq A_{\bgamma,\beta} = \bdriftmalabeta/\tildembeta$. Therefore, we get
  \begin{equation}
    \label{v_beta:eq:def_bA_final}
    \pi(\Wbeta_{\baretabeta}) \leq \barA_{\beta} = \inf_{\bgamma \in \ocint{0,\Gamma_\beta}} A_{\bgamma,\beta} \eqsp.
  \end{equation}
We now show \eqref{v_beta:eq:5}.  Using $1-t \leq \rme^{-t}$ for $t \in \rset$ and setting $\lambda^{(\beta)} = \rme^{-\tildembeta}<1$,  an easy induction implies that for any $\gamma \in \ocint{0,\bgamma}$, $x \in \rset^d$,
\begin{equation}\label{eq:def-discrete-drift_2_v_beta}
  \RKer_\step^{\ceil{1/\gamma}}\Wbeta_{\baretabeta}(x) \leq \lambda^{(\beta)} \Wbeta_{\baretabeta}(x) +  \bdriftmalabeta(1+\bgamma) \eqsp.
\end{equation}
Set now
\begin{equation}
  \label{eq:def_M_gamma_v_beta_v_beta}
M_{\bgamma}^{(\beta)}  = \parenthese{\frac{4\bdriftmalabeta (1+\bgamma)}{1-\lambda^{(\beta)}}} \vee 1 \eqsp, \quad \ray_{\bgamma}^{(\beta)} = (\log(M_{\bgamma}^{(\beta)})/\baretabeta)^{1/2} \eqsp.
\end{equation}
Note that $\ball{0}{\ray_{\bgamma}^{(\beta)}}  = \{\Wbeta_{\baretabeta} \leq M_{\bgamma}^{(\beta)}\}$, $\bgamma \mapsto \bdriftmalabeta (1+\bgamma)$ and $\bgamma \mapsto \ray_{\bgamma}^{(\beta)}$ are increasing on $\rset_+$. Then, $\hatGamma_{\ray_{\bgamma}^{(\beta)},\beta} \geq \hatGamma_{\ray_{\Gamma_{\beta}}^{(\beta)},\beta}$ for $\bgamma \leq \Gamma_{\beta}$ where $\hatGamma_{\ray}$ is defined in \eqref{eq:def_cgamma_ray_H1}, and \Cref{propo:small_set_mala}  implies setting
\begin{equation}
  \label{eq:def_bGamma_final_v_beta}
  \bGamma_{\beta} = \Gamma_{\beta} \wedge \hatGamma_{\ray_{\Gamma_{\beta}}^{(\beta)}} \eqsp,
\end{equation}
that for any $\bgamma \in \ocint{0,\bGamma_{\beta}}$,  any $x,y\in \{\Wbeta_{\baretabeta} \leq M_{\bgamma}^{(\beta)}\}$, and $\gamma \in \ocint{0,\bgamma}$,
$\tvnorm{\updelta_{x} \Qmala^{\ceil{1/\gamma}} - \updelta_{y} \Qmala^{\ceil{1/\gamma}}} \leq 2(1-\varepsilon(\ray_{\bgamma}^{(\beta)}))$. As a result,
\cite[Theorem~19.4.1]{douc:moulines:priouret:soulier:2018} applied to $\Qmala^{\ceil{1/\gamma}}$ shows  that for any $x \in\rset^d$, $n \in \nset$, $\gamma \in \ocint{0,\bgamma}$,
\begin{equation}
 \Vnorm[\Wbeta_{\baretabeta}]{\updelta_x \Qmala^{n \ceil{1/\gamma}}_{\gamma} - \pi} \leq C_{\bgamma,\beta} \{ \Wbeta_{\baretabeta}(x) + \pi(\Wbeta_{\baretabeta}) \} \rho^n_{\bgamma,\beta} \eqsp,
  \end{equation}
  where 
  \begin{equation}
    \label{eq:cst_bornes_mala_v_beta}
      \begin{aligned}
        &\log \rho_{\bgamma,\beta}  = \frac{\log(1-2^{-1}\varepsilon(\ray_{\bgamma}^{(\beta)})) \log\bar\lambda^{(\beta)}} { \bigl(\log(1-2^{-1}\varepsilon(\ray_{\bgamma}^{(\beta)})) +
          \log\bar\lambda^{(\beta)}-\log {\barbdriftmalabeta}\bigr) } \eqsp ,\\
        &\bar\lambda^{(\beta)} = \lambda^{(\beta)} +(1-\lambda^{(\beta)})/2 \eqsp, \qquad 
 {\barbdriftmalabeta} = \lambda^{(\beta)} \bdriftmalabeta + M_{\bgamma}^{(\beta)}
        \eqsp,   \\
        &C_{\bgamma,\beta}  = \rho_{\bgamma,\beta}^{-1}\{\lambda^{(\beta)}+1\}\{1+{\barbdriftmalabeta}/[(1-2^{-1}\varepsilon(\ray_{\bgamma}^{(\beta)})(1-\bar\lambda^{(\beta)})]\} \eqsp .
  \end{aligned}
\end{equation}


\bibliographystyle{plain}
\bibliography{../bibliography/bibliography}

\appendix

\section{Technical results}

  \begin{lemma}
    \label{lem:equiv_str_conv_inf_hessian}
    Assume \Cref{ass:regularity-U} and \Cref{ass:curvature_U} hold. Then $U$ satisfies \eqref{eq:strong_convex_infi} with $\mtt' \leftarrow \mtt/2$, for any $x,y \in\rset^d$, $\norm{x} \vee \norm{y} \geq \Ktt + 8 \Ktt \Ltt/\mtt $. 
  \end{lemma}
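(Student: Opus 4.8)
The plan is to rewrite the monotonicity-type quantity $\ps{\nabla U(x)-\nabla U(y)}{x-y}$ as an integral of the Hessian along the segment joining $x$ and $y$, and then to control the (Lebesgue) measure of the portion of that segment lying inside the ``bad'' ball $\ball{0}{\Ktt}$, on which the curvature lower bound \eqref{eq:7_2} degrades. Concretely, fix $x,y\in\rset^d$ with $\norm{x}\vee\norm{y}\geq \Ktt + 8\Ktt\Ltt/\mtt$, set $v=x-y$ (the statement being trivial if $v=0$), and write
\[
  \ps{\nabla U(x)-\nabla U(y)}{x-y} = \int_0^1 \rmD^2 U(y+tv)\{v\}^{\otimes 2}\,\rmd t \eqsp.
\]
Applying \eqref{eq:7_2} to $v/\norm{v}$ and using homogeneity, together with $\norm{\rmD^2 U}\leq\Ltt$ from \Cref{ass:regularity-U}, the integrand is bounded below by $\bigl(\mtt-(\mtt+\Ltt)\1_{\ball{0}{\Ktt}}(y+tv)\bigr)\norm{v}^2$, so that
\[
  \ps{\nabla U(x)-\nabla U(y)}{x-y} \geq \mtt\norm{v}^2 - (\mtt+\Ltt)\norm{v}^2\, L \eqsp, \qquad L := \Leb\bigl(\{t\in[0,1]:\norm{y+tv}\leq\Ktt\}\bigr) \eqsp.
\]

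The first key step is the chord-length estimate $L\leq 2\Ktt/\norm{v}$. Since $t\mapsto\norm{y+tv}^2$ is a strictly convex quadratic (as $v\neq 0$), the set $\{t\in\rset:\norm{y+tv}\leq\Ktt\}$ is an interval; if it is nondegenerate with endpoints $a<b$ then $\norm{y+av}=\norm{y+bv}=\Ktt$, whence $(b-a)\norm{v}=\norm{(y+av)-(y+bv)}\leq 2\Ktt$ by the triangle inequality, and intersecting with $[0,1]$ only decreases the measure. Substituting and using $\mtt\leq\Ltt$ gives
\[
  \ps{\nabla U(x)-\nabla U(y)}{x-y} \geq \mtt\norm{v}^2 - 2\Ktt(\mtt+\Ltt)\norm{v} \geq \mtt\norm{v}^2 - 4\Ktt\Ltt\norm{v} \eqsp.
\]

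The second key step is a dichotomy on $\norm{v}=\norm{x-y}$. If $\norm{x-y}\geq 8\Ktt\Ltt/\mtt$, then $4\Ktt\Ltt\norm{v}\leq(\mtt/2)\norm{v}^2$ and the previous display yields $\ps{\nabla U(x)-\nabla U(y)}{x-y}\geq(\mtt/2)\norm{v}^2$, which is \eqref{eq:strong_convex_infi} with $\mtt'\leftarrow\mtt/2$. If instead $\norm{x-y}<8\Ktt\Ltt/\mtt$, then, the estimate being symmetric in $(x,y)$, assume $\norm{x}\geq\Ktt+8\Ktt\Ltt/\mtt$; writing $x-(y+tv)=(1-t)v$ we get $\norm{y+tv}\geq\norm{x}-\norm{x-y}>\Ktt$ for every $t\in[0,1]$, so $L=0$ and the first bound already gives $\ps{\nabla U(x)-\nabla U(y)}{x-y}\geq\mtt\norm{v}^2\geq(\mtt/2)\norm{v}^2$. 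In both cases the conclusion holds.

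I do not expect a serious obstacle here: the only mildly delicate point is the chord-length bound $L\leq 2\Ktt/\norm{v}$, which relies on the convexity of $t\mapsto\norm{y+tv}^2$; the rest is bookkeeping, with the threshold $\Ktt+8\Ktt\Ltt/\mtt$ chosen precisely so that the ``small $\norm{x-y}$'' case forces $\norm{x-y}<8\Ktt\Ltt/\mtt$, matching the requirement of the ``large $\norm{x-y}$'' case.
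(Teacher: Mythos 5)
Your proof is correct and follows essentially the same route as the paper's: both write $\ps{\nabla U(x)-\nabla U(y)}{x-y}=\int_0^1 \rmD^2U(x_t)\{x-y\}^{\otimes 2}\,\rmd t$ along the segment, bound the parameter-measure of its intersection with $\ball{0}{\Ktt}$ by $2\Ktt/\norm{x-y}$ via the same chord argument, and absorb the resulting penalty of order $(\mtt+\Ltt)\Ktt\norm{x-y}$ using $\norm{x-y}\geq 8\Ktt\Ltt/\mtt$, the complementary case being handled by observing that the segment then avoids the ball entirely. The only cosmetic differences are the order of the case split and that the paper bounds the bad sub-segment's contribution through $\nabla U(x_{t_2})-\nabla U(x_{t_1})$ and Lipschitzness of $\nabla U$ rather than the pointwise bound \eqref{eq:7_2}; both yield the identical estimate.
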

  \begin{proof}
    For $x,y \in\rset^d$, $\norm{x} \vee \norm{y} \geq \Ktt + 8 \Ktt \Ltt/\mtt $, it holds that
    \begin{equation}
      \label{eq:Taylor_diff_nabla_U}
\ps{\nabla U(x) - \nabla U(y)}{x-y} = \int_{0}^1 \ps{\nabla^2U(x_t)\{x-y\}}{x-y}^{\otimes 2} \rmd t\eqsp, \text{ with $x_t = tx +(1-t)y$} \eqsp.
\end{equation}
Define $\msi = \{ t \in\ccint{0,1} \, : \, x_t \in \ball{0}{\Ktt}\}$. If $\msi = \emptyset$, then
by  \eqref{eq:Taylor_diff_nabla_U} and \Cref{ass:curvature_U}, we get
\begin{equation}
  \label{eq:equiv_str_conv_inf_hessian_1}
  \ps{\nabla U(x) - \nabla U(y)}{x-y} \geq \mtt \norm{x-y}^2 \eqsp. 
\end{equation}
If $\msi \neq  \emptyset$, necessarily,
\begin{equation}
  \label{eq:equiv_str_conv_inf_hessian_2}
  \norm{x-y} \geq 8 \Ktt \Ltt/\mtt \eqsp.
\end{equation}
Indeed, if this would not be true, using the triangle inequality we would have that $\norm{x_t} \geq \norm{x} \vee\norm{y} - \norm{x-y} \geq \norm{x} \vee \norm{y} - 8 \Ktt \Ltt/\mtt \geq \Ktt$ which would give a contradiction.
Now since $\msi \neq  \emptyset$ and is bounded, define $t_1 = \inf \msi$ and $t_2 = \sup\msi$. Note by definition, we have by continuity that $x_{t_1}, x_{t_2} \in \cball{0}{\Ktt}$ and therefore $\norm{x_{t_1}- x_{t_2}} \leq 2 \Ktt$. On the other hand, by definition, we have $\norm{x_{t_1}-x_{t_2}} = (t_2-t_1)\norm{x-y}$, so $t_2 - t_1 \leq 2\Ktt/\norm{x-y} \leq 1/4$ since $\Ltt \geq \mtt$. This implies by \eqref{eq:Taylor_diff_nabla_U}, the condition that $\nabla U $ is Lipschitz and \eqref{eq:equiv_str_conv_inf_hessian_2} that
\begin{align}
  \ps{\nabla U(x) - \nabla U(y)}{x-y} &\geq \mtt (1-(t_2-t_1)) \norm{x-y}^2 + \int_{t_1}^{t_2}  \ps{\nabla^2U(x_t)\{x-y\}}{x-y}^{\otimes 2} \rmd t \\
                                      & \geq  \mtt (1-(t_2-t_1)) \norm{x-y}^2  + \ps{\nabla U(x_{t_2}) - \nabla U(x_{t_1})}{x-y} \\
    \label{eq:equiv_str_conv_inf_hessian_3}
                                      & \geq (3/4) \mtt \norm{x-y}^2  - \Ltt (t_2-t_1)\norm{x-y}^2 \\
  & \geq (3/4) \mtt \norm{x-y}^2  - 2 \Ltt \Ktt \norm{x-y} \geq (\mtt /2) \norm{x-y}^2 \eqsp.
\end{align}
Combining this inequality with \eqref{eq:equiv_str_conv_inf_hessian_1} completes the proof. 
  \end{proof}

\begin{lemma}
  \label{lem:quadratic_behaviour}
    Assume \Cref{ass:regularity-U} and \Cref{ass:curvature_U} hold. The function $U$
  satisfies for any $x \in \rset^d$, $$\ps{\nabla U(x)}{x} \geq  (\mtt/2) \norm[2]{x} -\tCtt \1_{\ball{0}{\tKtt}}(x)\eqsp,$$ with $$\tKtt = 2 \Ktt(1+\Ltt/\mtt)\, \text{ and  }\tCtt = \Ltt \tKtt^2\eqsp.$$ 
\end{lemma}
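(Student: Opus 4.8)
The plan is to exploit the hypothesis $\nabla U(0)=0$ to write $\nabla U(x)$ as an integral of the Hessian along the segment from $0$ to $x$, and then to feed in the curvature estimate \eqref{eq:7_2}. The case $x=0$ being immediate (the left-hand side is $0$ and the right-hand side is $-\tCtt\le 0$), I fix $x\ne 0$ and set $y=x/\norm{x}$, so that $\norm{y}=1$. Since $U$ is $\mathrm{C}^2$ and $\nabla U(0)=0$, applying the fundamental theorem of calculus to $t\mapsto \nabla U(tx)$ gives $\nabla U(x)=\int_0^1 \rmD^2 U(tx)\,x\,\rmd t$, hence
\[
\ps{\nabla U(x)}{x}=\int_0^1 \ps{\rmD^2 U(tx)\,x}{x}\,\rmd t=\norm[2]{x}\int_0^1 \rmD^2 U(tx)\{y\}^{\otimes 2}\,\rmd t\eqsp.
\]

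Next I would substitute \eqref{eq:7_2}, namely $\rmD^2 U(tx)\{y\}^{\otimes 2}\ge \mtt-(\mtt+\Ltt)\1_{\ball{0}{\Ktt}}(tx)$, and bound the Lebesgue measure of $\{t\in\ccint{0,1}\,:\,\norm{tx}<\Ktt\}$, which equals $\min(1,\Ktt/\norm{x})$, by $\Ktt/\norm{x}$. Integrating over $t\in\ccint{0,1}$ then yields a single uniform lower bound valid for all $x\ne 0$:
\[
\ps{\nabla U(x)}{x}\ \ge\ \mtt\norm[2]{x}-(\mtt+\Ltt)\Ktt\,\norm{x}\ =\ \mtt\norm[2]{x}-(\mtt\tKtt/2)\,\norm{x}\eqsp,
\]
where the last identity uses $(\mtt+\Ltt)\Ktt=\mtt\tKtt/2$, which is immediate from $\tKtt=2\Ktt(1+\Ltt/\mtt)$.

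Finally I would split on the size of $\norm{x}$. If $\norm{x}\ge\tKtt$, then $(\mtt\tKtt/2)\norm{x}\le(\mtt/2)\norm[2]{x}$, so $\ps{\nabla U(x)}{x}\ge(\mtt/2)\norm[2]{x}$, which is the claim since the indicator vanishes there. If $\norm{x}<\tKtt$, I would complete the square:
\[
\mtt\norm[2]{x}-(\mtt\tKtt/2)\norm{x}=(\mtt/2)\norm[2]{x}+(\mtt/2)(\norm{x}-\tKtt/2)^2-\mtt\tKtt^2/8\ \ge\ (\mtt/2)\norm[2]{x}-\mtt\tKtt^2/8\ \ge\ (\mtt/2)\norm[2]{x}-\Ltt\tKtt^2\eqsp,
\]
using $\mtt\le\Ltt$ (noted just after \Cref{ass:curvature_U}); since $\tCtt=\Ltt\tKtt^2$ and the indicator equals $1$ on this region, this is again the claimed inequality, and the proof is complete. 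There is no genuine obstacle here: the only steps requiring a little care are the estimate $\min(1,\Ktt/\norm{x})\le\Ktt/\norm{x}$ for the measure of the set of $t$ with $tx\in\ball{0}{\Ktt}$, and checking that the crude constant $\tCtt=\Ltt\tKtt^2$ absorbs the leftover $\mtt\tKtt^2/8$ from the completed square, both of which are trivial.
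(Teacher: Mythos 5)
Your proof is correct and follows essentially the same route as the paper: both integrate the Hessian along the ray from the origin (using $\nabla U(0)=0$), apply the curvature bound outside $\ball{0}{\Ktt}$ and the Lipschitz bound $\norm{\rmD^2U}\leq\Ltt$ inside, and arrive at the same intermediate estimate $\ps{\nabla U(x)}{x}\geq \mtt\norm[2]{x}-(\mtt+\Ltt)\Ktt\norm{x}$ before splitting on $\norm{x}\gtrless\tKtt$. Your explicit completion of the square on $\ball{0}{\tKtt}$ is in fact a more careful finish than the paper's terse appeal to the Cauchy--Schwarz inequality, but it is a refinement of the same argument rather than a different one.
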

\begin{proof}
  Using  \Cref{ass:regularity-U} and \Cref{ass:curvature_U} and the additional conditions we consider here on $U$, we have for any $x \not \in \ball{0}{\Ktt}$,
  \begin{align}
    \ps{\nabla U(x)}{x}
    &= \int_{0}^{\Ktt/\norm{x}} \DD^2 U(t x ) [x^{\otimes 2}] \rmd t + \int_{\Ktt/\norm{x}} ^ 1 \DD^2 U(t x ) [x^{\otimes 2}] \rmd t + \ps{\nabla U(0)}{x}\\
    & \geq \mtt\norm[2]{x} \{1- \Ktt (1 +\Ltt/\mtt)   / \norm{x}   \}\eqsp,
  \end{align}
  which implies that $\ps{\nabla U(x)}{x} \geq (\mtt/2) \norm[2]{x}$ for $\norm{x} \geq 2 \Ktt(1+\Ltt/\mtt)$. The final statement is an easy consequence of
  \Cref{ass:regularity-U} and the Cauchy-Schwarz inequality. 
\end{proof}

\begin{lemma}
  \label{lem:quadratic_behaviour_alpha}
    Assume \Cref{ass:regularity-U} and \Cref{ass:curvature_U_alpha}$(\beta)$ hold, for $\beta \in \coint{0,1}$. The function  $U$
  satisfies for any $x \in \rset^d$, $$\ps{\nabla U(x)}{x} \geq  (\mttbeta/2) \norm[2]{x}/(1+\norm[\beta]{x}) -\tCttbeta \1_{\ball{0}{\tKttbeta}}(x)\eqsp,$$ with $$\tKttbeta = [4 \Kttbeta(1+\Ltt/\mttbeta)]\vee[4\Kttbeta(1+\Ltt/\mttbeta)]^{1/(1-\beta)} \, \text{ and } \tCttbeta = \Ltt \tKttbeta^2 \eqsp.$$
\end{lemma}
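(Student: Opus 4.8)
The plan is to follow the scheme of \Cref{lem:quadratic_behaviour}, replacing the constant curvature-at-infinity bound by the decaying bound $\mttbeta/(1+\norm[\beta]{\cdot})$ supplied by \Cref{ass:curvature_U_alpha}$(\beta)$. First I would dispose of the regime $\norm{x} < \tKttbeta$: by \Cref{ass:regularity-U} one has $\norm{\nabla U(x)} \leq \Ltt\norm{x}$, so Cauchy--Schwarz gives $\ps{\nabla U(x)}{x} \geq -\Ltt\norm[2]{x} \geq -\Ltt\tKttbeta^2 = -\tCttbeta$, while $(\mttbeta/2)\norm[2]{x}/(1+\norm[\beta]{x}) \geq 0$; hence the asserted inequality holds trivially there. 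The whole content is therefore the case $\norm{x} \geq \tKttbeta$.

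For such $x$, using $\nabla U(0) = 0$ I would write $\ps{\nabla U(x)}{x} = \int_0^1 \DD^2 U(tx)[x^{\otimes 2}]\,\rmd t$ and split the integral at $t^\star = \Kttbeta/\norm{x} \in \ccint{0,1}$ (admissible since $\tKttbeta \geq 4\Kttbeta(1+\Ltt/\mttbeta) \geq \Kttbeta$). On $\ccint{0,t^\star}$ one has $\norm{tx} \leq \Kttbeta$, so the uniform Hessian bound of \Cref{ass:regularity-U} yields $\DD^2 U(tx)[x^{\otimes 2}] \geq -\Ltt\norm[2]{x}$, i.e.\ a contribution at least $-\Ltt\Kttbeta\norm{x}$. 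On $\ccint{t^\star,1}$ one has $\Kttbeta \leq \norm{tx} = t\norm{x} \leq \norm{x}$, so applying \Cref{ass:curvature_U_alpha}$(\beta)$ to the unit vector $x/\norm{x}$ and using $1+\norm[\beta]{tx} \leq 1+\norm[\beta]{x}$ gives $\DD^2 U(tx)[x^{\otimes 2}] \geq \mttbeta\norm[2]{x}/(1+\norm[\beta]{x})$, i.e.\ a contribution at least $\mttbeta\norm[2]{x}(1-\Kttbeta/\norm{x})/(1+\norm[\beta]{x})$. Summing the two pieces,
\[
\ps{\nabla U(x)}{x} \geq \frac{\mttbeta\norm[2]{x}}{1+\norm[\beta]{x}}\Bigl(1 - \frac{\Kttbeta}{\norm{x}}\Bigr) - \Ltt\Kttbeta\norm{x}\eqsp.
\]

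It then remains to absorb the correction terms, \ie\ to check that for $\norm{x} \geq \tKttbeta$ one has $\mttbeta\norm{x}(1/2 - \Kttbeta/\norm{x})/(1+\norm[\beta]{x}) \geq \Ltt\Kttbeta$. Since $t \mapsto t/(1+t^\beta)$ is non-decreasing on $\rset_+$ and $\Kttbeta/\norm{x} \leq \Kttbeta/\tKttbeta \leq 1/(4(1+\Ltt/\mttbeta))$, it is enough to verify this inequality at the single value $t = \tKttbeta$; substituting $\tKttbeta = [4\Kttbeta(1+\Ltt/\mttbeta)] \vee [4\Kttbeta(1+\Ltt/\mttbeta)]^{1/(1-\beta)}$ and using the elementary estimate $\tau/(1+\tau^\beta) \geq \tfrac12\min(\tau,\tau^{1-\beta})$ for $\tau>0$ (the two branches of the max corresponding to $4\Kttbeta(1+\Ltt/\mttbeta) \gtrless 1$) reduces it to the trivial $\mttbeta\Kttbeta/2 + \Ltt\Kttbeta \geq \Ltt\Kttbeta$.

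I expect the only real difficulty to be this constant bookkeeping: one must carry the decay factor $(1+\norm[\beta]{x})^{-1}$ faithfully through the splitting and choose $\tKttbeta$ large enough that both correction terms (each of order $\Kttbeta\norm{x}$, which indeed grows slower than $\mttbeta\norm[2]{x}/(1+\norm[\beta]{x})$ precisely because $\beta < 1$) are dominated. Beyond that there is no conceptual obstacle; the argument is structurally identical to the $\beta = 0$ case treated in \Cref{lem:quadratic_behaviour}, and the final claim $\ps{\nabla U(x)}{x} \geq (\mttbeta/2)\norm[2]{x}/(1+\norm[\beta]{x}) - \tCttbeta\1_{\ball{0}{\tKttbeta}}(x)$ follows by combining the two regimes.
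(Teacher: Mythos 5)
Your proof is correct and follows essentially the same route as the paper's: the same splitting of $\int_0^1 \DD^2U(tx)[x^{\otimes 2}]\,\rmd t$ at $t^\star = \Kttbeta/\norm{x}$, the same bounds ($\Ltt$ inside $\ball{0}{\Kttbeta}$, the decaying curvature outside), and the same case distinction $\tKttbeta \gtrless 1$ to absorb the correction terms, which is precisely what the two branches of the max defining $\tKttbeta$ encode. The only looseness, shared with the paper, is the small-ball regime, where the Cauchy--Schwarz bound $\ps{\nabla U(x)}{x}\geq -\Ltt\norm[2]{x}$ is invoked without checking that the nonnegative term $(\mttbeta/2)\norm[2]{x}/(1+\norm[\beta]{x})$ is also absorbed by $\tCttbeta$; this is a matter of constants only.
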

\begin{proof}
  Using \Cref{ass:regularity-U} and \Cref{ass:curvature_U_alpha}$(\beta)$, we have for any $x \not \in \ball{0}{\Kttbeta}$,
  \begin{align}
    \ps{\nabla U(x)}{x}
    &= \int_{0}^{\Kttbeta/\norm{x}} \DD^2 U(t x ) [x^{\otimes 2}] \rmd t + \int_{\Kttbeta/\norm{x}} ^ 1 \DD^2 U(t x ) [x^{\otimes 2}] \rmd t + \ps{\nabla U(0)}{x}\\
    & \geq \frac{\mttbeta\norm[2]{x}}{1+\norm[\beta]{x}} \{1- (1+\norm[\beta]{x})\Kttbeta (1 +\Ltt/\mttbeta)   / \norm{x}   \}\eqsp,
  \end{align}
  which implies that $\ps{\nabla U(x)}{x} \geq (\mttbeta/2) \norm[2]{x}/(1+\normLigne[\beta]{x})$ for $\norm{x} \geq \tKttbeta$ distinguishing the case $\norm{x} \leq 1$ and $\norm{x} \geq 1$. The final statement is an easy consequence of
  \Cref{ass:regularity-U} and the Cauchy-Schwarz inequality. 
\end{proof}

\begin{lemma}
  \label{lem:upper_bound_behaviour_alpha}
    Assume \Cref{ass:regularity-U} and \Cref{ass:curvature_U_alpha}$(\beta)$ hold, for $\beta \in \coint{0,1}$. There exists $\bKttbeta \geq 0$ such that for any $x \not \in \ball{0}{\bKttbeta}$,  $\norm{\nabla U(x)} \leq 2 \Lttbeta \norm{x}/(1+\norm[3\beta/4]{x})$ with $\bKttbeta = [2\Ltt \Kttbeta/\Lttbeta] \vee [2\Ltt \Kttbeta/\Lttbeta]^{1/(1-3\beta/4)}$.
\end{lemma}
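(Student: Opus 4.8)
The plan is to mirror the proof of \Cref{lem:quadratic_behaviour_alpha}, this time using the \emph{upper} Hessian bound in \Cref{ass:curvature_U_alpha}$(\beta)$ in place of the lower one. First I would upgrade the pointwise quadratic-form bound to an operator-norm bound: for $\norm{x}\geq\Kttbeta$ the symmetric matrix $\rmD^2 U(x)$ has all its eigenvalues in $\ccint{\mttbeta/(1+\norm[\beta]{x}),\ \Lttbeta/(1+\norm[3\beta/4]{x})}$ by \eqref{eq:7_alpha}, and since both endpoints are positive this gives $\norm{\rmD^2 U(x)}\leq\Lttbeta/(1+\norm[3\beta/4]{x})$. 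Moreover $\nabla U(0)=0$ by \Cref{ass:regularity-U}, so $\nabla U(x)=\int_0^1\rmD^2 U(tx)[x]\,\rmd t$ for every $x$.

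Fix now $x$ with $\norm{x}\geq\bKttbeta$, which in particular forces $\norm{x}\geq\Kttbeta$ for the value of $\bKttbeta$ in the statement, and split the integral at $t_\star=\Kttbeta/\norm{x}\in\ocint{0,1}$. On $\ccint{0,t_\star}$ I would use $\norm{\rmD^2 U(tx)[x]}\leq\Ltt\norm{x}$ from \Cref{ass:regularity-U}, which contributes at most $\Ltt\Kttbeta$. On $\ccint{t_\star,1}$ we have $\norm{tx}\geq\Kttbeta$, so the bound of the first step applies, and after the substitution $u=t\norm{x}$ the contribution is at most $\Lttbeta\int_{\Kttbeta}^{\norm{x}}(1+u^{3\beta/4})^{-1}\,\rmd u$. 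Since $\beta<1$, i.e. $3\beta/4<1$, one has $(1+u^{3\beta/4})^{-1}\leq u^{-3\beta/4}$ for $u\geq 1$, so this integral is of order $\norm{x}^{1-3\beta/4}$, hence comparable to $\norm{x}/(1+\norm[3\beta/4]{x})$; a careful bookkeeping of the absolute constants on the range $\norm{x}\geq\bKttbeta$ (where $\norm{x}/(1+\norm[3\beta/4]{x})\geq 1$) should then yield $\norm{\nabla U(x)}\leq\Ltt\Kttbeta+\Lttbeta\norm{x}/(1+\norm[3\beta/4]{x})$.

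Finally I would absorb the constant term $\Ltt\Kttbeta$ exactly as in \Cref{lem:quadratic_behaviour_alpha}: the inequality $\Ltt\Kttbeta\leq\Lttbeta\norm{x}/(1+\norm[3\beta/4]{x})$ reduces, for $\norm{x}\geq 1$, to $\norm{x}^{1-3\beta/4}\geq 2\Ltt\Kttbeta/\Lttbeta$ (using $1+\norm[3\beta/4]{x}\leq 2\norm[3\beta/4]{x}$), i.e. to $\norm{x}\geq[2\Ltt\Kttbeta/\Lttbeta]^{1/(1-3\beta/4)}$, and for $\norm{x}<1$ to $\norm{x}\geq 2\Ltt\Kttbeta/\Lttbeta$; taking the larger of the two thresholds gives precisely $\bKttbeta=[2\Ltt\Kttbeta/\Lttbeta]\vee[2\Ltt\Kttbeta/\Lttbeta]^{1/(1-3\beta/4)}$, and hence $\norm{\nabla U(x)}\leq 2\Lttbeta\norm{x}/(1+\norm[3\beta/4]{x})$. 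The step I expect to be the main obstacle is the middle one: controlling the tail integral $\int_{\Kttbeta}^{\norm{x}}(1+u^{3\beta/4})^{-1}\,\rmd u$ so that it genuinely produces the decaying factor $\norm{x}/(1+\norm[3\beta/4]{x})$ with an acceptable constant, since the crude bound $(1+u^{3\beta/4})^{-1}\leq 1$ destroys the decay in $\norm{x}$ and one must use $3\beta/4<1$ together with the restriction $\norm{x}\geq\bKttbeta$ to make the constants line up with the stated bound.
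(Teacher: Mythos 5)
Your route is the same as the paper's: write $\nabla U(x)=\nabla U(0)+\int_0^1 \rmD^2 U(tx)[x]\,\rmd t$, split the integral at $t_\star=\Kttbeta/\norm{x}$, bound the inner part by $\Ltt\Kttbeta$ via \Cref{ass:regularity-U}, bound the outer part via the upper bound in \eqref{eq:7_alpha} (passing to the operator norm exactly as you do, using that the spectrum lies in a positive interval), and finally absorb $\Ltt\Kttbeta$ into $\Lttbeta\norm{x}/(1+\norm[3\beta/4]{x})$ by distinguishing $\norm{x}\leq 1$ from $\norm{x}\geq 1$; your derivation of the threshold $\bKttbeta$ matches the paper's.

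The obstacle you flag in the middle step is real, and the "careful bookkeeping" you defer to does not in fact produce the constant you hope for. The sharp estimate is
\begin{equation}
\int_0^{R}\frac{\rmd u}{1+u^{3\beta/4}}\;\leq\;\frac{1}{1-3\beta/4}\,\frac{R}{1+R^{3\beta/4}}\eqsp,
\end{equation}
(the difference of the two sides vanishes at $R=0$ and has nonnegative derivative), and the factor $(1-3\beta/4)^{-1}$ cannot be dispensed with: as $R\to\plusinfty$ the left-hand side is $\sim R^{1-3\beta/4}/(1-3\beta/4)$ while $R/(1+R^{3\beta/4})\sim R^{1-3\beta/4}$, so for $\beta>2/3$ the ratio of the two sides exceeds $1$ for large $R$. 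Hence the tail contribution is only bounded by $(1-3\beta/4)^{-1}\Lttbeta\norm{x}/(1+\norm[3\beta/4]{x})$, and after absorbing $\Ltt\Kttbeta$ one obtains $\norm{\nabla U(x)}\leq\{1+(1-3\beta/4)^{-1}\}\Lttbeta\norm{x}/(1+\norm[3\beta/4]{x})\leq 5\Lttbeta\norm{x}/(1+\norm[3\beta/4]{x})$ rather than the stated prefactor $2$. The paper's own proof asserts the intermediate bound $\Ltt\Kttbeta+\Lttbeta\norm{x}/(1+\norm[3\beta/4]{x})$ in a single line without justifying the tail integral, so it does not resolve this either; the discrepancy is only a numerical constant (which propagates harmlessly into \Cref{lem:bound_alpha_mala_2_alpha} and \Cref{propo:super_lyap_beta_ula} after adjusting constants), but your instinct that this is precisely where the claimed bound fails to line up is correct.
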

\begin{proof}
  Using \Cref{ass:regularity-U} and \Cref{ass:curvature_U_alpha}$(\beta)$, we have for any $x \not \in \ball{0}{\Kttbeta}$, $x \neq 0$, setting $t_{\beta} = \Kttbeta/\norm{x}$ and distinguishing the case $\norm{x}\leq 1$ and $\norm{x} \geq 1$, 
  \begin{align}
    \norm{\nabla U(x)} & = \normEq{\nabla U(0) + \int_0^{t_{\beta}} \nabla^2 U(tx) x \rmd t + \int_{t_\beta}^1 \nabla^2 U(tx) x \rmd t} \\
    & \leq \Ltt \Kttbeta+\Lttbeta \norm{x}/(1+\norm[3\beta/4]{x})  \leq 2\Lttbeta \norm{x}/(1+\norm[3\beta/4]{x}) \eqsp,
  \end{align}
  which completes the proof. 
\end{proof}

\begin{lemma}
  \label{lem:bounde_pertubhessian}
 Assume \Cref{ass:regularity-U}.
  Then, for any $t \in \ccint{0,1}$,  $\gamma \in \ocint{0,1/(4\Ltt)}$ and $x,z \in \rset^d$, $\norm{z} \leq \norm{x}/(4\sqrt{2\gamma})$, it holds
  \begin{equation}
    \norm{x+t\{-\gamma \nabla U(x) + \sqrt{2\gamma}z \}} \geq \norm{x}/2 \eqsp.
  \end{equation}
\end{lemma}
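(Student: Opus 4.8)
The plan is to apply the reverse triangle inequality and then bound the two perturbation terms separately, using the step-size restriction for the drift term and the hypothesis on $\norm{z}$ for the Gaussian term. Concretely, for any $t \in \ccint{0,1}$ I would first write
\begin{equation}
\norm{x+t\{-\gamma \nabla U(x) + \sqrt{2\gamma}z \}} \geq \norm{x} - t\gamma \norm{\nabla U(x)} - t\sqrt{2\gamma}\,\norm{z} \geq \norm{x} - \gamma \norm{\nabla U(x)} - \sqrt{2\gamma}\,\norm{z} \eqsp,
\end{equation}
using $t \leq 1$ in the last step.

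It then remains to show that each of the two subtracted terms is at most $\norm{x}/4$. For the drift term, recall that \Cref{ass:regularity-U} (via $\nabla U(0)=0$ and $\sup_{x}\norm{\rmD^2 U(x)} \leq \Ltt$, as already noted right after the assumption) gives $\norm{\nabla U(x)} \leq \Ltt \norm{x}$; combined with $\gamma \leq 1/(4\Ltt)$ this yields $\gamma \norm{\nabla U(x)} \leq \norm{x}/4$. For the Gaussian term, the hypothesis $\norm{z} \leq \norm{x}/(4\sqrt{2\gamma})$ gives directly $\sqrt{2\gamma}\,\norm{z} \leq \norm{x}/4$. Substituting both bounds into the display above gives $\norm{x+t\{-\gamma \nabla U(x) + \sqrt{2\gamma}z \}} \geq \norm{x} - \norm{x}/4 - \norm{x}/4 = \norm{x}/2$, which is the claim. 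There is no real obstacle here: the lemma is a purely elementary triangle-inequality estimate, and the only point worth flagging is that the Lipschitz bound on $\nabla U$ used above is exactly the consequence of \Cref{ass:regularity-U} recorded in the text.
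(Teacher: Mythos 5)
Your proof is correct and follows exactly the paper's own argument: the reverse triangle inequality combined with $\norm{\nabla U(x)} \leq \Ltt\norm{x}$, the bound $\gamma \leq 1/(4\Ltt)$, and the hypothesis on $\norm{z}$, each perturbation term being controlled by $\norm{x}/4$. Nothing to add.
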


\begin{proof}
Note that for any $x \in\rset^d$, $\nabla U(x) = \nabla U(0) + \int_{0}^1 \nabla U(sx) x \rmd s$ and therefore  \Cref{ass:regularity-U} implies that $\nabla U(x) \leq \Ltt \norm{x}$. Let $t \in \ccint{0,1}$,  $\gamma \in \ocint{0,1/(4\Ltt)}$ and $x,z \in \rset^d$, $\norm{z} \leq \norm{x}/(4\sqrt{2\gamma})$.
  Using the triangle inequality, we have since $t \in \ccint{0,1}$
  \begin{equation}
    \norm{x+t\{-\gamma \nabla U(x) + \sqrt{2\gamma}z \}} \geq (1-\gamma \Ltt ) \norm{x} -\sqrt{2\gamma} \norm{z} \eqsp.
  \end{equation}
  The conclusion then follows from $\gamma \leq 1/(4\Ltt)$ and $\norm{z} \leq \norm{x}/(4\sqrt{2\gamma})$.
\end{proof}


\end{document}